\newtheorem{theorem}{Theorem}[section]
\newtheorem{corollary}[theorem]{Corollary}
\newtheorem{lemma}[theorem]{Lemma}
\newtheorem{proposition}[theorem]{Proposition}
\theoremstyle{definition}
\newtheorem{definition}[theorem]{Definition}
\newtheorem{remark}[theorem]{Remark}
\newtheorem{example}[theorem]{Example}
\theoremstyle{remark}
\renewcommand{\theclaim}{\textup{\theclaim}}
\newtheorem*{acknowledgements}{Acknowledgements}
\numberwithin{equation}{section}
\def\openone
\newbox\ipbox
\newcommand{\ip}[2]{\left\langle #1\, , \,#2\right\rangle}
\newcommand{\diracb}[1]{\left\langle #1\mathrel{\mathchoice

{\setbox\ipbox=\hbox{$\displaystyle \left\langle\mathstrut
#1\right.$}

\vrule height\ht\ipbox width0.25pt depth\dp\ipbox}

{\setbox\ipbox=\hbox{$\textstyle \left\langle\mathstrut
#1\right.$}

\vrule height\ht\ipbox width0.25pt depth\dp\ipbox}

{\setbox\ipbox=\hbox{$\scriptstyle \left\langle\mathstrut
#1\right.$}

\vrule height\ht\ipbox width0.25pt depth\dp\ipbox}

{\setbox\ipbox=\hbox{$\scriptscriptstyle \left\langle\mathstrut
#1\right.$}

\vrule height\ht\ipbox width0.25pt depth\dp\ipbox}

}\right. }
\newcommand{\dirack}[1]{\left. \mathrel{\mathchoice

{\setbox\ipbox=\hbox{$\displaystyle \left.\mathstrut
#1\right\rangle$}

\vrule height\ht\ipbox width0.25pt depth\dp\ipbox}

{\setbox\ipbox=\hbox{$\textstyle \left.\mathstrut
#1\right\rangle$}

\vrule height\ht\ipbox width0.25pt depth\dp\ipbox}

{\setbox\ipbox=\hbox{$\scriptstyle \left.\mathstrut
#1\right\rangle$}

\vrule height\ht\ipbox width0.25pt depth\dp\ipbox}

{\setbox\ipbox=\hbox{$\scriptscriptstyle \left.\mathstrut
#1\right\rangle$}

\vrule height\ht\ipbox width0.25pt depth\dp\ipbox}

} #1\right\rangle}
\newcommand{\preal}{\operatorname*{Re}}
\newcommand{\cj}[1]{\overline{#1}}
\newcommand{\bz}{\mathbb{Z}}
\newcommand{\M}{\mathcal{M}}
\newcommand{\br}{\mathbb{R}}
\newcommand{\bc}{\mathbb{C}}
\newcommand{\bt}{\mathbb{T}}
\newcommand{\bn}{\mathbb{N}}
\def\blfootnote{\xdef\@thefnmark{}\@footnotetext}
\def\F{\mathcal{F}}
\def\E{\mathcal E}
\def\H{\mathcal{H}}
\def\-{^{-1}}
\def\D{\mathcal{D}}
\def\ty{\emptyset}
\def\W{\mathcal{W}}
\begin{document}

\title[SPECTRAL THEORY FOR DISCRETE LAPACIANS]{SPECTRAL THEORY FOR DISCRETE LAPACIANS}
\author{Dorin Ervin Dutkay}
\blfootnote{Research supported in part by a grant from the National Science Foundation DMS-0704191}
\address{[Dorin Ervin Dutkay] University of Central Florida\\
	Department of Mathematics\\
	4000 Central Florida Blvd.\\
	P.O. Box 161364\\
	Orlando, FL 32816-1364\\
U.S.A.\\} \email{ddutkay@mail.ucf.edu}

\author{Palle E.T. Jorgensen}
\address{[Palle E.T. Jorgensen]University of Iowa\\
Department of Mathematics\\
14 MacLean Hall\\
Iowa City, IA 52242-1419\\}\email{jorgen@math.uiowa.edu}
\thanks{} 
\subjclass[2000]{34B45, 46E22, 47L30, 54E70, 60J10, 81S30}
\keywords{operators in Hilbert space, discrete Laplacians, infinite graphs, spectral representation, spectral measures, multiplicity tables, semicircle laws, rank-one perturbations, spectrum, absolutely continuous, graph Laplacian, electrical network.}

\begin{abstract}
   We give the spectral representation for a class of selfadjoint discrete graph Laplacians $\Delta$, with $\Delta$ depending on a chosen graph $G$ and a conductance function $c$ defined on the edges of $G$. We show that the spectral representations for $\Delta$ fall in two model classes, (1) tree-graphs with $N$-adic branching laws, and (2) lattice graphs. We show that the spectral theory of the first class may be computed with the use of rank-one perturbations of the real part of the unilateral shift, while the second is analogously built up with the use of the bilateral shift. We further analyze the effect on spectra of the conductance function $c$: How the spectral representation of $\Delta$ depends on $c$. 
   
   Using $\Delta_G$, we introduce a resistance metric, and we show that it embeds isometrically into an energy Hilbert space. We introduce an associated random walk and we calculate return probabilities, and a path counting number.
\end{abstract}
\maketitle \tableofcontents
\section{Introduction}\label{intr}

One fascination with operator theory is its connections to other areas such as geometry and discrete analysis: Because of applications to electrical networks, to statistical mechanics, and to fractals (see e.g., \cite{Str06, JoPe08a, JoPr89, JSW94, Pow76, Pow79}), there is a recent increased interest in detailed spectral representation for operators on infinite graphs. In addition to the applications, these connections further suggests a search for more a direct link between, on the one hand, metric geometry of infinite graphs, and on the other, a spectral analysis of associated families of operators. 

   However as we see, classical methods break down in infinite discrete models: With Fourier analysis in classical potential theory, it is often possible to represent Laplace operators by multiplication, and hence realize the spectral representation this way. The analysis then breaks up into the study of discrete and continuous parts. However we show that analogues of this that adapt to the discrete case have strong limitations. New tools from operator theory are needed: For example in the discrete case, typically there is not a natural Fourier duality available, and the graph may not even be associated with a group in a way that facilitates computation of spectral representations. Further, in the study of Laplacians on infinite graphs $G$, there are several Hilbert spaces in the picture. Choices must be made: There are Hilbert space completions of functions on the vertices in $G$, and similarly for functions on the edges in $G$ (Definitions \ref{defi}, \ref{defre1}), the energy Hilbert space. 

   Here we focus on classes of graphs $G$ that require new tools. Our conclusions (Theorem \ref{th3.25}, Corollary \ref{coko1}, and Theorem \ref{thti1}) imply that not only is there a direct connection between the spectrum of the graph Laplacian $\Delta_G$ and the metric geometry of $G$; but this connection carries over to the detailed fine-structure of the {\it multiplicity configurations} for $\Delta_G$. 

   Our proofs rely on a mixture of operator theory (section 3) and complex analysis (section 2). Since we address three different audiences, for the convenience of readers, we have included a few details which may be known to operator theorists but not to graph theorists, and vice versa.

   In this paper we study the operator theory of infinite graphs $G$, and especially a natural family of Laplace operators directly associated with the graph in question. These operators depend not only on $G$, but also on a chosen positive real valued function $c$ defined on the edges in $G$. In electrical network models, the function $c$ will determine a conductance number for each edge $e$; the conductance being the reciprocal of the resistance between the endpoint vertices in the edge. Specifically if $e = (xy)$ connects vertices $x$ and $y$ in $G$, the number $c(e)$ is the reciprocal of the resistance between $x$ and $y$. Hence prescribing a conductance leads to classes of admissible flows in $G$ determined from Ohm's law and Kirchhoff's laws of electrical networks. This leads to a measure of energy directly associated with the graph Laplacian. There are Hilbert spaces $H(G)$ which offer a useful spectral theory, and our main results concern the spectral theory of these operators. In a recent paper \cite{Jor08} it was proved that the graph Laplacians are automatically essentially selfadjoint, i.e., that the associated operator closures are selfadjoint operators in $H(G)$.

   Here we give a spectral analysis of the graph Laplacians. We are motivated by a pioneering paper \cite{Pow76} which in an exciting way applies graphs and resistor networks to a problem in quantum statistical mechanics.

    There are many benefits from having a detailed spectral picture of graph Laplacians:  We get a spectral representation realization of the graph Laplacians, the operators $\Delta_{G,c}$, i.e., a unitarily equivalent form of these operators which may arise in a variety of applications. See e.g., \cite{Arv02, PaSc72}.

 In Corollary \ref{corko3}, we obtain a candidate for Brownian motion (independent increments) on a general graph, and for the tree graphs we show in Proposition \ref{prop4.5} that the increments go like the square root of the distance.

       In the course of the proofs of our main results, we are making use of tools from the theory of unbounded operators in Hilbert space, especially \cite{Voi85}, but also von Neumann's deficiency indices, operator closure, operator domains, operator adjoints; and extensions of Hermitian operators with a dense domain in a fixed complex Hilbert space. References for this material include: \cite{Jor77, Jor78, JoPe00,Nel69, vN31, Sto51}. For analysis on infinite graphs and on fractals, see e.g., \cite{BHS05, CuSt07, DoSn84,DuJo06a, HKK02, Hut81, JoPe98, JKS07, Kig03, Str06}.

\subsection{Technical details}

        Let $G = (G^{(0)}, G^{(1)})$ be an infinite graph, $G^{(0)}$ for vertices, and $G^{(1)}$ for edges. Every $x$ in $G^{(0)}$ is connected to a set $\mbox{nbh}(x)$ of other vertices by a finite number of edges, but points in $\mbox{nbh}(x)$ are different from $x$; i.e, we assume that $x$ itself is excluded from $\mbox{nbh}(x)$; i.e., no $x$ in $G^{(0)}$ can be connected to itself with a single edge. Let $c$ be a conductance function defined on $G^{(1)}$.

     Initially, the graph $G$ will not be directed, but when a conductance is fixed, and we study induced current flows, then these current flows will give a direction to the edges in $G$. But the edges in $G$ itself do not come with an intrinsic direction.

     A theorem from \cite{Jor08} states that the Laplace operator $\Delta = \Delta _c$ is automatically essential selfadjoint. By this we mean that $\Delta$ is defined on the dense subspace $D$ (of all the real valued functions on $G^{(0)}$ with finite support) in the Hilbert space $H = H(G) := l^2(G^{(0)})$. The conclusion is that the closure of the operator $\Delta$ is selfadjoint in $H$, and so in particular that it has a unique spectral resolution, determined by a projection valued measure on the Borel subsets the infinite half-line $\br_+$; i.e., the spectral measure takes values in the projections in the Hilbert space $l^2(G^{(0)})$. We work out the measure.

    In contrast, we note that the corresponding Laplace operator in the continuous case is not essential selfadjoint. This can be illustrated for example with $\Delta = - \frac{d^2}{dx^2}$ on the domain $D$ of consisting of all $C^2$ functions on the infinite half-line $\br_+$ which vanish with their derivatives at the end points. In this case, the Hilbert space is $L^2(\br_+)$.

Our main theorems are as follows: In section \ref{hilb} we consider a natural graph Laplacian $\Delta_G$ on the infinite graph $G$ whose vertices form an $N$-fold branching tree. In Theorem \ref{propcycl}, we show that $\Delta_G$ has a natural representation in Voiculescu's Fock space $\F(H_N)$ where $H_N$ is an $N$-dimensional complex Hilbert space. The link between Laplacians on trees and the Fock space was first made in \cite{GG05}.

In Theorems \ref{thlapm}, \ref{thmucp}, and \ref{th3.25}, we give an explicit spectral representation of $\Delta_G$ with the spectral measure being Wigner's semicircle law (see \cite{Wig55}) or a related measure. Theorem \ref{th3.25} accounts for the spectral multiplicity of $\Delta_G$ as a selfadjoint operator. In section \ref{resi} we introduce the resistance metric for graphs in general; and (Proposition \ref{prop4.5}) we compute this metric explicitly for $\Delta_G$  corresponding to the case when $G$ is the graph of the $N$-fold branching tree. In Proposition \ref{proplat2}, this is contrasted with the analogous but much simpler case for the infinite lattice graphs.

 In section \ref{resi}, we establish the connection between the graph Laplacian $\Delta_G$ on one side, and on the other, a certain metric $d$ on the vertices $G^{(0)}$ of the graph, known in electrical models as the resistance metric. Using $\Delta_G$, we show that the $(G^{(0)}, d)$ embeds isometrically into a Hilbert space built directly from $\Delta_G$, the energy Hilbert space. In Proposition \ref{prop4.5}, we calculate this for the case of the tree-graphs. Proposition \ref{propsu1} makes the connection to random walk, return probability, and (Theorem \ref{thti1}) to path counting; computed from the moments of the spectral measure.

   In a general context graph Laplacians includes discrete Schrodinger operators, see e.g., \cite{ASM07}, and \cite{Sim95}. The present paper restricts its focus to the graphs $G$ having as vertex set trees built from the infinite iteration of an $N$-fold branching rule; but we contrast our results with those of other infinite graphs, for example lattice graphs. Moreover the study of $N$-fold branching rules is of independent interest, for example in signal processing, see e.g., \cite{Jor06b}. Another reason for the restriction in focus is that we then are able, with the use of Voiculescu's Fock space construction, to write down the complete spectral representation for the corresponding graph Laplacian, including a geometric model for the cyclic subspaces occurring in the spectral multiplicity table.

      While there is a number of related studies of graph Laplacians in the literature (e.g., \cite{JoPe08}), the detailed spectral picture has so far received relatively little attention.

        The focus of the present paper is instances of explicit spectral representations. While there is already a large literature on graph Laplacians, so far we have only encountered relatively few instances where the complete details are worked out for associated spectral representations. Since the geometric possibilities of graphs is vast, then so is the associated spectral configurations. Spectral approaches are manifold, and here our focus is: formulas for explicit spectral representations.
\subsection{Applications and connections to related results in the literature}
   A list of recent and past papers of relevance includes \cite{Str08}(received after we completed the present first draft); \cite{Car72, Car73a, Car73b, CR06, Chu07, CdV99, CdV04, Jor83}, and Wigner's original paper on the semicircle law \cite{Wig55}.

    While graph Laplacians have numerous applications (see our cited
references) in the theory of fractals \cite{Str08}, in combinatorics \cite{Chu07},
in random walk models \cite{SZ07},
 in free probability \cite{Voi85}, in operator theory \cite{HeHo73}, and in harmonic
analysis \cite{RPB07, Car73b, CdV99};
they are of relevance as well in mathematical physics; see e.g., \cite{ASW06,
AlFr00, Bre07, GG05, Moh91, RoRu95, Sol04}.
The mathematical physics literature contains a wealth of spectral
theoretical results, each with a particular focus.
Here we give formulas for the spectral transform of the tree-graph
Laplacians, including multiplicities.
And, conversely we show that the tree-graphs reflect themselves directly and
in an explicit manner in the spectral
 multiplicities for the associated operator.

 In Voiculescu's original approach to spectral representation, he relied on a
high powered result of Helton-Howe (traces, index and homology, \cite{HeHo73}).
While one does arrive at a representation this way this argument, and the
big machinery, leaves one wonder if there is instead a direct and
computational way of getting it. We find this!

While authors of earlier more ``practical'' uses of discrete Laplacians and
discrete Schroedinger operators have found
a variety of spectral features of related operators (see the cited
references above), none of them offer a complete spectral transform,
 and they do not have a spectral duality with Voiculescu's Toeplitz
$C^*$-algebra, as we do here. We feel that this is of independent interest.
It not only extends what was in the literature already, but it also offers a
unifying framework.

    The latter, plus our explicit formulas, is one of our key points in our
analysis and in our derivation of
complete spectral transforms. In fact we further give a {\it formula} for the
spectral transform.
And we use free probability methods from Voiculescu's Toeplitz $C^*$-algebra
\cite{Voi85} in pinning
down the complete spectral multiplicity picture for graph Laplacians on
trees. Conversely, we
use this in computing moments and resistance distances for the graphs
themselves.
Thus, our main result shows that the tree graphs reflect themselves directly
in the spectral multiplicities
for the associated graph Laplacian.
This may well work in both directions: Our spectral theoretic results may be
of use in free probability computations.

\section{Definitions}\label{defi}

\begin{definition}\label{def2.1}
Graph Laplacians. Let $G=(G^{(0)},G^{(1)})$ be a non-oriented graph with vertices $G^{(0)}$ and edges $G^{(1)}$. If $e\in G^{(1)}$ we assume that the source $s(e)$ and the terminal vertex $t(e)$ are different. If $(xy)\in G^{(1)}$ we write $x\sim y$. We further assume that for every $x\in G^{(0)}$ the set of neighbors
\begin{equation}\label{eq1.1}
\mbox{nbh}(x):=\{y\in G^{(0)}\,|\,y\sim x\}
\end{equation}
is {\it finite}.

Let $c:G^{(1)}\rightarrow\br_+$ be a fixed function (called {\it conductance}). Let $l^2(G^{(0)})$ be the Hilbert space of square summable sequences, i.e., 
\begin{equation}
	\|v\|_{l^2}^2:=\sum_{x\in G^{(0)}}|v(x)|^2<\infty,
	\label{eq1.2}
\end{equation}
and
\begin{equation}
	\ip{v_1}{v_2}_{l^2}:=\sum_{x\in G^{(0)}}\cj v_1(x)v_2(x),\quad\mbox{ for }v_1,v_2\in l^2(G^{(0)}).
	\label{eq1.3}
	\end{equation}

Let $\mathcal D$ be the set of all finitely supported elements in $l^2(G^{(0)})$, i.e., $v\in\mathcal D$ iff there exists $F\subset G^{(0)}$ ( a finite subset) such that $v(x)=0$ for all $x\in G^{(0)}\setminus F$.

Given $(G,c)$, the corresponding graph Laplacian $\Delta=\Delta_{G,c}$ is 
\begin{equation}
	(\Delta v)(x)=\sum_{y\sim x} c(xy)(v(x)-v(y)),\quad (v\in\mathcal D)
	\label{eq1.4}
\end{equation}
\end{definition}

It was proved in \cite{Jor08} that $\Delta$ is essentially selfadjoint, i.e., that the operator closure $\cj\Delta$ is selfadjoint. Hence for every $(G,c)$, there is a projection valued measure
\begin{equation}
	E_\Delta:\mathcal B([0,\infty))\rightarrow\mbox{Proj}(l^2(G^{(0)}))
	\label{eq1.5}
\end{equation}
where $\mathcal B$ stands for the Borel sigma-algebra, and $\mbox{Proj}$ is the lattice of all selfadjoint projections, i.e., $P=P^*=P^2$. The function $E_\Delta(\cdot)$ in \eqref{eq1.5} is countably additive on $\mathcal B$, and satisfies 
\begin{enumerate}
\item $$E_\Delta(A_1\cap A_2)=E_\Delta(A_1)E_\Delta(A_2),\quad(A_1,A_2\in\mathcal B);$$
\item $$I_{l^2}=\int_0^\infty E_\Delta(d\lambda);$$
\item $$\cj\Delta=\int_0^\infty\lambda E_\Delta(d\lambda).$$
\end{enumerate}
Moreover if $f$ is a Borel function, the operator $f(\Delta)$ is given by functional calculus,
$$f(\Delta)=\int f(\lambda)E_\Delta(d\lambda),$$ and a vector $v\in l^2(G^{(0)})$ is in the domain of $f(\Delta)$ iff
\begin{equation}
	\int_0^\infty|f(\lambda)|^2\|E_\Delta(d\lambda)v\|^2<\infty
	\label{eq1.6}
\end{equation}

For $v_0\in l^2(G^{(0)})$, $\|v_0\|=1$, set 
\begin{equation}\label{eq2.7}
\mu_0(\cdot):=\|E_\Delta(\cdot)v_0\|^2=\ip{v_0}{E_{\Delta}(\cdot)v_0}
\end{equation}.
We call $\mu_0$ the {\it spectral measure associated to $\Delta$ and the vector $v_0$}. See also Lemma \ref{lemr2} below.

For a measure $\mu_0$ on $\br$, define 
\begin{equation}
F_{\mu_0}(z):=\int_{\br}\frac{1}{x-z}\,d\mu_0(x),	
	\label{eqbor}
\end{equation}
the {\it Borel transform}, $z\in\bc\setminus\br$.

\begin{lemma}\label{lemr1}
If $R_\Delta(z):=(\Delta-z I_{l^2})^{-1}$ is the resolvent operator, then
$$F_{\mu_0}(z)=\ip{v_0}{R_\Delta(z)v_0},\quad z\in\bc\setminus\br.$$
\end{lemma}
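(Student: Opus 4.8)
The plan is to invoke the Borel functional calculus for the selfadjoint closure $\cj\Delta$ and thereby reduce the claimed identity to the scalar spectral measure $\mu_0$. First I would note that for $z\in\bc\setminus\br$ the function $f_z(\lambda):=(\lambda-z)^{-1}$ is bounded and Borel measurable on $[0,\infty)$, since $|\operatorname{Im}(z)|>0$ forces $|\lambda-z|\ge|\operatorname{Im}(z)|>0$ for every real $\lambda$. Hence $f_z(\cj\Delta):=\int_0^\infty f_z(\lambda)\,E_\Delta(d\lambda)$ is a well-defined bounded operator on $l^2(G^{(0)})$, given by the functional calculus attached to the projection valued measure $E_\Delta$ of \eqref{eq1.5} and the properties (i)--(iii) following it.

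Next I would identify $f_z(\cj\Delta)$ with the resolvent $R_\Delta(z)=(\cj\Delta-zI_{l^2})^{-1}$, the resolvent of the selfadjoint closure. Because $\cj\Delta$ is selfadjoint, its spectrum is contained in $[0,\infty)$ (indeed $E_\Delta$ is supported there), so $z$ lies in the resolvent set and $\cj\Delta-zI_{l^2}$ is a bijection of $\operatorname{dom}(\cj\Delta)$ onto $l^2(G^{(0)})$ with bounded inverse. Since $(\lambda-z)f_z(\lambda)=f_z(\lambda)(\lambda-z)\equiv 1$, multiplicativity of the functional calculus gives $(\cj\Delta-zI_{l^2})f_z(\cj\Delta)=I_{l^2}$ on $l^2(G^{(0)})$ and $f_z(\cj\Delta)(\cj\Delta-zI_{l^2})=I_{l^2}$ on $\operatorname{dom}(\cj\Delta)$; by uniqueness of the inverse this yields $R_\Delta(z)=f_z(\cj\Delta)$. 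Then, for any bounded Borel $g$ on $[0,\infty)$ one has $\ip{v_0}{g(\cj\Delta)v_0}=\int_0^\infty g(\lambda)\,\ip{v_0}{E_\Delta(d\lambda)v_0}=\int g\,d\mu_0$, the first equality being the defining property of the spectral integral and the second being \eqref{eq2.7}; applying this with $g=f_z$, and extending the integral harmlessly over all of $\br$ since $\mu_0$ is carried by $[0,\infty)$, gives $\ip{v_0}{R_\Delta(z)v_0}=\int_\br(\lambda-z)^{-1}\,d\mu_0(\lambda)=F_{\mu_0}(z)$.

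There is no serious obstacle here; the statement is the standard resolvent/Borel-transform identity of one-variable spectral theory. The only point meriting any care is the interchange $\ip{v_0}{\int f_z\,dE_\Delta\,v_0}=\int f_z\,d\ip{v_0}{E_\Delta(\cdot)v_0}$, which is immediate for bounded $f_z$: approximate $f_z$ uniformly on $[0,\infty)$ by simple functions $\sum_j a_j\chi_{A_j}$, for which the identity is just finite additivity of $E_\Delta$ and linearity of $\ip{\cdot}{\cdot}$, and then pass to the limit using continuity of the inner product together with the fact that $\mu_0$ is a finite positive measure of total mass $\norm{v_0}^2=1$.
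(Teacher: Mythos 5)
Your proposal is correct and follows essentially the same route as the paper: both identify $R_\Delta(z)$ with the spectral integral $\int(\lambda-z)^{-1}E_\Delta(d\lambda)$ via the functional calculus and then pair with $v_0$ to obtain $\int(\lambda-z)^{-1}\,d\mu_0(\lambda)=F_{\mu_0}(z)$. You merely spell out the standard justifications (boundedness of $(\lambda-z)^{-1}$ off the real axis, multiplicativity of the calculus, and the simple-function approximation for the interchange) that the paper leaves implicit.
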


\begin{proof}
Using (iii), we have
$$R_\Delta(z)=\int_\br(x-z)^{-1}E_\Delta(dx),$$
and for $z\in\bc\setminus\br$,
$$\ip{v_0}{R_\Delta(z)v_0}_{l^2}=\int_\br(x-z)^{-1}\ip{v_0}{E_\Delta(dx)v_0}=\int_{\br}(x-z)^{-1}\|E_\Delta(dx)v_0\|^2=\int_\br(x-z)^{-1}\,d\mu_0(x)=F_{\mu_0}(z).$$
\end{proof}

\begin{lemma}\label{lemr2}
Let $v_0$ and $\mu_0$ be as above, and let $\H_\Delta(v_0)\subset l^2(G^{(0)})$ be the $\Delta$-cyclic subspace generated by $v_0$. Set
$$\mathcal H_\Delta(v_0)\ni f(\Delta)v_0\stackrel{W}{\mapsto} f(\cdot)\in L^2(\mu_0).$$
Then $W$ extends to a unitary isomorphism of $\H_\Delta(v_0)$ into $L^2(\mu_0)$ which satisfies the intertwining relation $M_\lambda W=W\Delta$ on $\H_\Delta(v_0)$.
\end{lemma}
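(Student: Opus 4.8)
The plan is to construct $W$ directly on the natural domain, namely the linear span of vectors of the form $f(\Delta)v_0$ with $f$ a bounded Borel function (or even a polynomial), and then extend by continuity. First I would note that the cyclic subspace $\H_\Delta(v_0)$ is by definition the closed span of $\{f(\Delta)v_0 : f \text{ bounded Borel}\}$, and that the functional calculus from the spectral resolution $E_\Delta$ in (i)--(iii) makes $f \mapsto f(\Delta)v_0$ a well-defined linear map. The key computation is the isometry identity on this dense domain: using the multiplicativity property (i) of $E_\Delta$ together with \eqref{eq2.7},
\begin{equation*}
\|f(\Delta)v_0\|_{l^2}^2 = \ip{v_0}{\cj{f}(\Delta)f(\Delta)v_0} = \int_{\br}|f(\lambda)|^2\,\ip{v_0}{E_\Delta(d\lambda)v_0} = \int_{\br}|f(\lambda)|^2\,d\mu_0(\lambda) = \|f\|_{L^2(\mu_0)}^2.
\end{equation*}
This shows $W$ is well-defined (two functions $f,g$ with $f(\Delta)v_0 = g(\Delta)v_0$ agree $\mu_0$-a.e., since their difference has $L^2(\mu_0)$-norm zero) and isometric on its domain.

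Next I would check that the domain of $W$, namely $\{f(\Delta)v_0\}$, is dense in $\H_\Delta(v_0)$ — this is immediate from the definition of the cyclic subspace — and that its range is dense in $L^2(\mu_0)$. For the range, bounded Borel functions are dense in $L^2(\mu_0)$ for any finite Borel measure $\mu_0$ (indeed simple functions already are), so the image is dense. Since an isometry with dense domain and dense range extends uniquely to a unitary between the completions, $W$ extends to a unitary isomorphism $\H_\Delta(v_0) \to L^2(\mu_0)$.

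Finally, for the intertwining relation, I would verify it first on the dense domain: for $f$ bounded Borel, $\Delta(f(\Delta)v_0) = (g(\Delta))v_0$ where $g(\lambda) = \lambda f(\lambda)$, so $W\Delta f(\Delta)v_0 = g = M_\lambda f = M_\lambda W f(\Delta)v_0$. One subtlety is that $\Delta$ is unbounded, so $\lambda f(\lambda)$ need not be bounded and $f(\Delta)v_0$ need not lie in the domain of $\Delta$; this is handled by restricting first to $f$ with bounded support (a core for the relevant operators, using \eqref{eq1.6}), establishing $M_\lambda W = W\Delta$ there, and then invoking closedness of both $\Delta$ and $M_\lambda$ to extend to the full intersection of domains.

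The main obstacle, modest as it is, is the bookkeeping around unboundedness of $\Delta$ and $M_\lambda$: one must be careful that ``$W\Delta = M_\lambda W$ on $\H_\Delta(v_0)$'' is interpreted as an identity of unbounded operators (equality on the appropriate domain, with $W$ carrying $\operatorname{dom}(\Delta)\cap\H_\Delta(v_0)$ onto $\operatorname{dom}(M_\lambda)\cap L^2(\mu_0)$), and that the spectral calculus identities used are justified by \eqref{eq1.6}. Everything else is the standard argument that the spectral theorem furnishes a cyclic model, specialized to the single generator $v_0$.
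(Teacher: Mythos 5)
Your proof is correct and complete. The paper itself offers no argument for this lemma --- its ``proof'' is only a pointer to the standard references (Nelson, Stone) --- and what you have written is exactly the standard argument contained there: isometry of $W$ on the span of $\{f(\Delta)v_0\}$ via multiplicativity of $E_\Delta$ and \eqref{eq2.7}, density of domain and range, unique unitary extension, and the intertwining relation checked on a core with the appropriate care for unboundedness; so there is nothing of substance to compare.
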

\begin{proof}
See e.g., \cite{Nel69,Sto32}.
\end{proof}

\begin{definition}\label{def1.2} 
Intertwining operators. 

Let $\H_i$, $i=1,2$ be Hilbert spaces, and let $W:\H_1\rightarrow \H_2$ be a unitary isomorphism of $\H_1$ into $\H_2$. Let $\mathcal F_i$, $i=1,2$ be families of operators in the respective Hilbert spaces. We say that $W$ is {\it intertwining}, or equivalently that the two families are {\it unitarily equivalent} if there is a bijection $\varphi:\mathcal F_1\rightarrow F_2$ such that
\begin{equation}
	WT=\varphi(T)W
	\label{eq1.7},\quad(T\in\mathcal F_1).
\end{equation} 

\end{definition}

\begin{definition}\label{def1.3}
Cuntz relations.

Let $H$ be a Hilbert space of finite dimension $N$. Let $\H$ be a second Hilbert space, infinite dimensional, and let $\mathcal B(\mathcal H)$ be the $C^*$-algebra of all bounded linear operators in $\H$. The norm in $\mathcal B(\H)$ will be the operator norm, i.e., if $T\in\mathcal B(\H)$
\begin{equation}
	\|T\|:=\sup\{\|Tv\|\,|\, v\in\mathcal H, \|v\|=1\}.
	\label{eq1.8}
\end{equation}

A linear function $\varphi:H\rightarrow\mathcal B(\H)$ with the property that
\begin{equation}
	\varphi(v)^*\varphi(v)=\|v\|^2 I_{\H},\quad(v\in H).
	\label{eqct}
\end{equation} 
 is called a {\it Cuntz-Toeplitz representation}, and the operators $\{\varphi(v)\,|\,v\in H\}$ are said to satisfy the Cuntz-Toeplitz relations. 
\end{definition}

\begin{lemma}\label{lem1.4}
Let $\varphi:H\rightarrow\mathcal B(\H)$ be as in Definition \ref{def1.3}. Then $\varphi$ is a Cuntz-Toeplitz respresentation if and only if for all orthonormal basis $|i\rangle$ in $H$, $i=1,\dots,N$, the operators
\begin{equation}
	T_i:=\varphi(|i\rangle)
	\label{eq1.9}
\end{equation}
satisfy
\begin{equation}
	T_i^*T_j=\delta_{i,j}I_{\H},\quad\mbox{ and }
\label{eq1.10i}
\end{equation}

\end{lemma}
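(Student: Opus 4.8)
The plan is to prove the two implications separately: polarization gives the passage from the quadratic condition \eqref{eqct} to the bilinear relations on a basis, and a coordinate expansion gives the converse. Both arguments are short.

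For the forward implication, I would assume $\varphi(v)^*\varphi(v)=\|v\|^2 I_\H$ for all $v\in H$, fix an arbitrary orthonormal basis $\{|i\rangle\}_{i=1}^N$ of $H$, and set $T_i=\varphi(|i\rangle)$. The diagonal case is immediate: $T_i^*T_i=\varphi(|i\rangle)^*\varphi(|i\rangle)=\||i\rangle\|^2 I_\H=I_\H$. For the off-diagonal relations I would consider the map $B(v,w):=\varphi(v)^*\varphi(w)-\langle v,w\rangle I_\H$, which is a $\mathcal B(\H)$-valued sesquilinear form (conjugate-linear in $v$, linear in $w$, because $\varphi$ is linear) whose diagonal $B(v,v)=\varphi(v)^*\varphi(v)-\|v\|^2 I_\H$ vanishes by hypothesis. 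By the complex polarization identity — a purely algebraic identity, valid with coefficients in the $*$-algebra $\mathcal B(\H)$ — a sesquilinear form with vanishing diagonal is identically zero, so $\varphi(v)^*\varphi(w)=\langle v,w\rangle I_\H$ for all $v,w\in H$. Specializing to $v=|i\rangle$, $w=|j\rangle$ gives $T_i^*T_j=\delta_{i,j}I_\H$, and since the basis was arbitrary this holds for every orthonormal basis of $H$.

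For the converse, I would assume $T_i^*T_j=\delta_{i,j}I_\H$ for the operators $T_i=\varphi(|i\rangle)$ attached to some orthonormal basis $\{|i\rangle\}_{i=1}^N$. For $v\in H$, expand $v=\sum_{i=1}^N c_i\,|i\rangle$ with $c_i=\langle |i\rangle,v\rangle$; linearity of $\varphi$ then gives the finite sum $\varphi(v)=\sum_i c_i T_i$, and hence
$$\varphi(v)^*\varphi(v)=\sum_{i,j}\cj{c_i}\,c_j\,T_i^*T_j=\sum_{i,j}\cj{c_i}\,c_j\,\delta_{i,j}I_\H=\Big(\sum_i|c_i|^2\Big)I_\H=\|v\|^2 I_\H,$$
which is exactly condition \eqref{eqct}, so $\varphi$ is a Cuntz-Toeplitz representation.

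I do not expect a real obstacle. The only place requiring a moment's care is the polarization step: one must fix the sesquilinearity convention consistently and note that polarization remains valid in the non-commutative, operator-valued setting — which it does, being a formal identity in $\mathcal B(\H)$. Everything else is contained in the two displayed computations above.
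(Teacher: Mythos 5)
Your proof is correct and follows essentially the same route as the paper's: expand $\varphi(v)=\sum_i v_iT_i$ in an orthonormal basis and use polarization of the identity $\varphi(v)^*\varphi(v)=\|v\|^2I_{\mathcal H}$ to pass between the quadratic condition and the relations $T_i^*T_j=\delta_{i,j}I_{\mathcal H}$. Your write-up is in fact somewhat more careful than the paper's, which compresses both implications into a single displayed computation.
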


\begin{proof}
Let $v\in H$, and set $v:=\sum_{i=1}^N v_i |i\rangle$. Then by linearity $\varphi(v)=\sum_{i=1}^Nv_iT_i$, see \eqref{eq1.9}. Moreover, using \eqref{eqct} and its polarization, we get
\begin{equation}
	\sum_{i=1}^N|v_i|^2=\|\varphi(v)\|^2=\|\varphi(v)^*\varphi(v)\|=\sum_{j=1}^N\sum_{k=1}^N\cj v_jv_kT_j^*T_k,\quad(v\in H).
	\label{eq1.11}
\end{equation}
The  relations in \eqref{eq1.10i} are directly equivalent to this.

\end{proof}

\begin{definition}\label{def1.5}
Let $N\in\bn$, $N\geq 2$, and let $\{T_i\}_{i=1}^N$ be a family of operators in a Hilbert space $\H$. We say that this is a representation of the {\it Cuntz algebra }$\mathcal O_N$ if \eqref{eq1.10i} holds and,
\begin{equation}
	\sum_{i=1}^NT_iT_i^*=I
	\label{eq1.12a}
\end{equation}
We say that this is a {\it Toeplitz system}, or a {\it representation of the Toeplitz algebra }$\mathcal T_N$ if \eqref{eq1.10i} holds, but
\begin{equation}
	\sum_{i=1}^NT_iT_i^*<I
	\label{eq1.12}
\end{equation}
\end{definition}

\begin{definition}\label{def1.6}
Unilateral shift.

Let $\H$ be a complex Hilbert space and let $T:\H\rightarrow \H$ be an isometric operator, i.e., an {\it isometry}. We say that $T$ is a {\it (unilateral) shift} if one (and hence all) of the following equivalent conditions are satisfied:
\begin{enumerate}
\item There exists a Hilbert space $H$ and a unitary isomorphism $W:\H\rightarrow{\sum_0^\infty}^\oplus H$ such that $W$ intertwines $T$ with 
\begin{equation}
	(x_0,x_1,x_2,\dots)\rightarrow(0,x_0,x_1,x_2,\dots)
	\label{eq1.13}
\end{equation}
on ${\sum_{0}^\infty}^\oplus H$. Then $\dim H$ is called the {\it multiplicity} of $T$.

\item For any $x\in\mathcal H$, $\lim_{n\rightarrow\infty}{T^*}^nx=0$.
\item  The projections $P_n:=T^n{T^*}^n$ satisfy $P_1\geq P_2\geq\dots\geq P_n\geq P_{n+1}\geq\dots$ and $\inf_nP_n=0$.
\end{enumerate}
See \cite{NaFo70}.
\end{definition}

\begin{definition}\label{def1.7}
Bilateral shift.

Let $\H$ be a complex Hilbert space and let $T:\H\rightarrow\H$ be an isometry. We say that $T$ is a bilateral shift if one (and hence all) of the following conditions is satisfied:
\begin{enumerate}
\item There exists a Hilbert space $H$ and a unitary isomorphism $W:\mathcal H\rightarrow{\sum_{-\infty}^\infty}^\oplus H$ such that $W$ intertwines $T$ with 
$$(x_n)_{n\in\bz}\rightarrow (x_{n-1})_{n\in\bz}$$
on ${\sum_{-\infty}^\infty}^\oplus H$.
\item
There is a unitary isomorphism $W:\H\rightarrow L^2(\bt,H)$ such that $W$ intertwines $T$ with the multiplication operator
$M:f\rightarrow zf(z)$ on $L^2(\bt,H)$. Here $L^2(\bt,H)$ denotes the Hilbert space of vector valued functions $\bt\rightarrow H$, measurable and satisfying $\int_{\bt}\|f(z)\|^2\,d\nu(z)<\infty$, where $\nu$ is the Haar measure on circle $\bt:=\{z\in\bc\,|\,|z|=1\}$.
\end{enumerate}
\end{definition}

\section{Hilbert spaces and a graph model}\label{hilb}

We begin with a particular graph, called the tree with $N$-fold branching. It is formed out of an alphabet, say $A$ of size $N$ as follows. We first form words $W_k = W_k(A)$ of length $k$, for each $k$, with letters chosen from $A$. It is often convenient to take $A$ to be the cyclic group $\bz_N$ of order $N$. The tree $V = V_N$  will be the union of the sets $W_k$ for $k = 0, 1,\dots,$ such that the $k=0$ case corresponds to the empty word. We then form a graph $G_N$ with vertices $V = V_N$ consisting of the points in the tree with $N$-fold branching. Specifically $V$ is the union of the words $W_k$, with the understanding the $W_0$ is the empty word. The edges in $G_N$ will be made up of nearest neighbors in $V$ as follows: If $x$ is in $W_0$, the set of edges emanating at $x$ consists simply of the points in $A$. If $x$ is in $W_k$ , $k\geq 1$, then the edges in $G_N$ consist of the lines connecting two vertices, $x$ one of them, and the others resulting from $x = (x_1x_2\dots x_k)$ by truncation at the tail end and by addition of a letter from $A$ also in the tail end. So the nearest neighbors for the vertex $x$ consist of the $N + 1$ vertices  $(x_1x_2 \dots  x_{k-1})$ and $(xa)$ where $a$ is chosen from $A$; so $N+1$ nearest neighbors in all.

      Starting with this particular tree graph $G_N$ there are two naturally associated Hilbert spaces, the first is simply the $l^2$-space over $V_N$, and the second is a Fock space $\mathcal F(H) = \mathcal F(H_N)$, $H_N := l^2(\bz_N)=\bc^N$ which carries additional structure. This additional features of the Hilbert space $\mathcal F(H)$ will be needed later.

      In Proposition \ref{proplapf} we show that the two Hilbert spaces are naturally isomorphic. While this is known, we have included a proof sketch in order to fix our notation. The Fock space $\mathcal F(H)$ is used in particle physics in the conventional description of (infinite) quantum systems with particles governed by Boltzmann statistics.

\begin{definition}\label{def3.1}
The infinite $N$-ary tree is the graph with vertices 
$$V:=\{\emptyset\}\cup\left\{\omega_1\dots\omega_n\,|\,n\geq 1, \omega_1,\dots,\omega_n\in\{1,\dots,N\}\right\},$$
and edges given by the relations $\emptyset\sim i$, and $\omega_1\dots\omega_n\sim \omega_1\dots\omega_n i$ for all $n\geq 1$, $i,\omega_1\dots\omega_n\in\{1,\dots,N\}$.
\par
Thus $V$ is the set of finite words over the alphabet $\{1,\dots,N\}$, including the empty word $\emptyset$, and we have edges between a word $\omega$ and the word $\omega i$ obtained from $\omega$ by adjoining a letter at the end. 
\end{definition}

\begin{figure}[h]
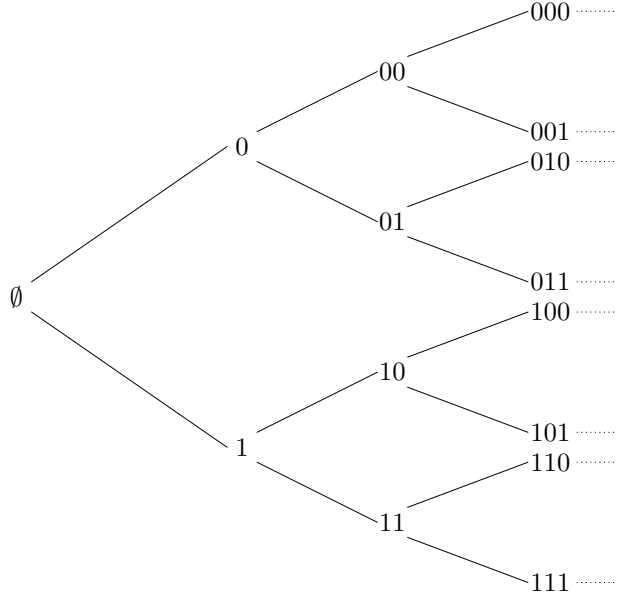

\caption{The binary tree $N=2$.}\label{fig1}
\[
\xy
(-78,2); (-52,20)**@{-};
(-80,0)*{\ty};
(-78,-2);(-52,-20)**@{-};
(-50,20)*{0};
(-50,-20)*{1};
(-48,22);(-32,30)**@{-};
(-48,18);(-32,10)**@{-};
(-48,-18);(-32,-10)**@{-};
(-48,-22);(-32,-30)**@{-};
(-30,30)*{00};
(-30,10)*{01};
(-30,-10)*{10};
(-30,-30)*{11};
(-28,32);(-12,38)**@{-};
(-28,28);(-12,22)**@{-};
(-28,12);(-12,18)**@{-};
(-28,8);(-12,2)**@{-};
(-28,-8);(-12,-2)**@{-};
(-28,-12);(-12,-18)**@{-};
(-28,-28);(-12,-22)**@{-};
(-28,-32);(-12,-38)**@{-};
(-9,38)*{000};
(-9,22)*{001};
(-9,18)*{010};
(-9,2)*{011};
(-9,-38)*{111};
(-9,-22)*{110};
(-9,-18)*{101};
(-9,-2)*{100};
(-6,38);(0,38)**@{.};
(-6,22);(0,22)**@{.};
(-6,18);(0,18)**@{.};
(-6,2);(0,2)**@{.};
(-6,-2);(0,-2)**@{.};
(-6,-18);(0,-18)**@{.};
(-6,-22);(0,-22)**@{.};
(-6,-38);(0,-38)**@{.};
\endxy
\]
\end{figure}

\begin{definition}
Define the maps $\sigma$, $\tau_i$, $i\in\{1,\dots,N\}$ on $V$:
\begin{equation}\label{eqsigma}
\sigma(\emptyset)=\emptyset,\quad \sigma(\omega_1\dots\omega_n)=\omega_1\dots\omega_{n-1},\quad(n\geq 1,\omega_1,\dots\omega_n\in \{1,\dots,N\})
\end{equation}
Thus $\sigma(i)=\emptyset$ for all $i\in\{1,\dots,N\}$, so $\sigma$ removes the last letter of the word.

\begin{equation}\label{eqtau}
\tau_i(\emptyset)=i,\quad \tau_i(\omega_1\dots\omega_n)=\omega_1\dots\omega_n i,\quad(\omega_1,\dots,\omega_n\in\{1,\dots,N\}, i\in\{1,\dots,N\})
\end{equation}
\end{definition}
So $\tau_i$ adjoins the letter $i$ at the end of the word.
\par
Define the operators $U$ and $S_i$, $(i\in\{1,\dots,N\})$ on $l^2(V)$ by
\begin{equation}\label{equ}
Uv=v\circ\sigma,\quad S_iv=v\circ\tau_i,\quad(i\in\{1,\dots,N\})
\end{equation}

In the following we show that the operators $S_i$ the $l^2(V)$ as an operator
system generate a representation of the Toeplitz algebra in several
variables (see \cite{Voi85}.) In \cite{Voi85} this Toeplitz algebra is realized in the
Fock space $\mathcal F(H)$. The Hilbert space $\mathcal F(H)$ carries a system of creation
operators $T_v$ indexed by vectors $v$ in $H$.

 As shown in formula \eqref{eqtv*}, the corresponding adjoint operators $T_v^*$ are
annihilation operators. In analyzing these operators, Dirac's bra-ket
notation is convenient. We will select an orthonormal basis (ONB) $\{|i\rangle\}$ in
$H$, and denote the corresponding creation operators $T_i$ , i.e, $T_i = T_{|i\rangle}$.

  In  Proposition \ref{proplapf}, we show that the unitary isomorphism $W : l^2(V)\rightarrow \mathcal F(H)$ intertwines the operators $S_i$ with the corresponding system $T_i^*$ in $\mathcal F(H)$.

   This helps us identify a universal Hilbert space representation for
iterated function systems(IFSs), and to make precise the quotient of the
Toeplitz algebra by the compact operators as a Cuntz algebra (see \cite{Jor04}).

The next two lemmas require some simple computations.

\begin{lemma}\label{lems_i}
For $f\in l^2(V)$, $i\in\{1,\dots,N\}$ one has:
\begin{enumerate}
\item For all $\omega_1,\dots,\omega_n\in\{1,\dots,N\}$:
$$S_i\delta_\ty=0,\quad S_i\delta_{\omega_1\dots\omega_n}=\left\{\begin{array}{cc}\delta_{\omega_1\dots\omega_{n-1}},&\mbox{ if }\omega_n=i\\
0,&\mbox{ if }\omega_n\neq i\end{array}\right.=\delta_{\omega_n,i}\delta_{\omega_1\dots\omega_{n-1}}.$$
\item For all $\omega_1,\dots,\omega_n\in\{1,\dots,N\}$:
$$(S_i^*f)(\ty)=0,\quad (S_i^*f)(\omega_1\dots\omega_n)=\left\{\begin{array}{cc} f(\omega_1\dots\omega_{n-1}),&\mbox{ if }\omega_n=i\\
0,&\mbox{ if }\omega_n\neq i\end{array}\right.=\delta_{\omega_n,i}f(\omega_1\dots\omega_{n-1}).$$
\item For all $\omega_1,\dots,\omega_n\in\{1,\dots,N\}$:
$$S_i^*\delta_\ty=\delta_i,\quad S_i^*\delta_{\omega_1\dots\omega_n}=\delta_{\omega_1\dots\omega_ni}.$$
\end{enumerate}
\end{lemma}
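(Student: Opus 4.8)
The plan is to reduce all three parts to the single defining identity $(S_i v)(w) = v(\tau_i(w)) = v(wi)$, valid for every $v\in l^2(V)$ and every word $w$, which is just \eqref{equ} combined with \eqref{eqtau}. As a preliminary I would note that $S_i$ is bounded, so that the adjoint $S_i^*$ is meaningful: the map $\tau_i$ is injective with range $E_i := \{u\in V : u \text{ ends in the letter } i\}$, so substituting $u=\tau_i(w)$ gives $\|S_i v\|^2 = \sum_{w\in V}|v(\tau_i(w))|^2 = \sum_{u\in E_i}|v(u)|^2 \le \|v\|^2$, i.e. $\|S_i\|\le 1$. (One also sees that $S_i^*S_i$ is the projection onto $l^2(E_i)$ and that the $E_i$ are pairwise disjoint subsets of $V\setminus\{\emptyset\}$; this is why the $S_i^*$ will satisfy the Toeplitz relations of Definition \ref{def1.5} rather than the Cuntz relations, but that is not needed for the lemma.)

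For part (i) I would apply the identity above to $v=\delta_x$: $(S_i\delta_x)(w)=\delta_x(wi)$, which equals $1$ exactly when $x=wi$. If $x=\emptyset$ this is impossible, since $wi$ has length $\ge 1$, so $S_i\delta_\emptyset=0$; if $x=\omega_1\cdots\omega_n$ with $n\ge 1$, then $x=wi$ forces $\omega_n=i$ and $w=\omega_1\cdots\omega_{n-1}$, which gives $S_i\delta_{\omega_1\cdots\omega_n}=\delta_{\omega_n,i}\,\delta_{\omega_1\cdots\omega_{n-1}}$. For part (iii) I would use the adjoint relation directly: for all $v\in l^2(V)$, $\langle S_i^*\delta_x, v\rangle=\langle \delta_x, S_i v\rangle=(S_i v)(x)=v(\tau_i(x))=\langle \delta_{\tau_i(x)}, v\rangle$, hence $S_i^*\delta_x=\delta_{\tau_i(x)}$; now substitute $\tau_i(\emptyset)=i$ and $\tau_i(\omega_1\cdots\omega_n)=\omega_1\cdots\omega_n i$. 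Finally part (ii) follows from (iii) by linearity and continuity of $S_i^*$: writing $f=\sum_{x\in V} f(x)\delta_x$ (convergent in $l^2$, since $\{\delta_x\}_{x\in V}$ is an orthonormal basis of $l^2(V)$) gives $S_i^* f=\sum_{x\in V} f(x)\,\delta_{\tau_i(x)}$, and evaluating at a word $w$ — noting $\tau_i(x)=w$ has no solution for $w=\emptyset$, and the unique solution $x=\omega_1\cdots\omega_{n-1}$ when $w=\omega_1\cdots\omega_n$ with $\omega_n=i$ (and none when $\omega_n\ne i$) — yields $(S_i^* f)(\emptyset)=0$ and $(S_i^* f)(\omega_1\cdots\omega_n)=\delta_{\omega_n,i}\,f(\omega_1\cdots\omega_{n-1})$. (Alternatively, (ii) can be checked head-on by verifying $\langle g,\delta_w\rangle=\langle f,S_i\delta_w\rangle$ for every $w$, where $g$ is the claimed right-hand side, using part (i).)

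As the lemma itself advertises, these are routine computations and I do not expect a genuine obstacle; the only points deserving attention are the preliminary boundedness of $S_i$ (without which $S_i^*$ is not a priori meaningful) and keeping the empty word $\emptyset$ as a separate case at each step, since it is the unique vertex of $V$ not of the form $\tau_i(\cdot)$.
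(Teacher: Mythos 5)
Your proof is correct, and it is exactly the routine verification the paper has in mind: the paper offers no written proof at all, merely announcing that this lemma ``requires some simple computations'' from the definitions \eqref{eqtau} and \eqref{equ}. Your reduction to $(S_i v)(w)=v(wi)$, the boundedness remark justifying $S_i^*$, and the careful handling of the empty word are all sound and fill in precisely what the paper leaves implicit.
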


\begin{lemma}\label{lemtoepls_i}
The operators $S_i$ satisfy the following relations:
\begin{enumerate}
	\item $S_iS_i^*=I$ for all $i\in\{1,\dots,N\}$.
	\item Let $P_\ty$ be the projection in $l^2(V)$ onto the vector $\delta_\ty$. 
	$$\sum_{i=1}^NS_i^*S_i=I-P_\ty.$$
\end{enumerate}
\end{lemma}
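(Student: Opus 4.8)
The plan is to read both identities directly off the explicit action of $S_i$ and $S_i^*$ on the orthonormal basis $\{\delta_\omega\}_{\omega\in V}$ of $l^2(V)$ recorded in Lemma \ref{lems_i}. All operators in sight are bounded: $S_i$ is a contraction because $\tau_i$ is injective, and, by Lemma \ref{lems_i}(iii), $S_i^*$ carries the orthonormal system $\{\delta_\omega\}_\omega$ onto the orthonormal system $\{\delta_{\omega i}\}_\omega$, so $S_i^*$ is an isometry. Since $\{\delta_\omega : \omega\in V\}$ is total in $l^2(V)$, it suffices to verify the two operator equations on these basis vectors.

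Statement (1) is then immediate: the defining relation $T^*T=I$ for the isometry $T=S_i^*$ reads exactly $S_iS_i^*=I$. One may also see it by a one-line computation: by Lemma \ref{lems_i}(iii) and then (i), $S_iS_i^*\delta_\omega=S_i\delta_{\omega i}=\delta_\omega$ for every $\omega\in V$, since the word $\omega i$ ends in the letter $i$ and $S_i$ strips that last letter off.

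For statement (2), the key step is to identify $S_i^*S_i$ with the orthogonal projection $Q_i$ onto $\Span\{\delta_\omega : \omega\text{ a nonempty word ending in the letter }i\}$. Indeed, Lemma \ref{lems_i}(i) gives $S_i\delta_\ty=0$, hence $S_i^*S_i\delta_\ty=0$; and for $\omega=\omega_1\dots\omega_n$ with $n\ge 1$, Lemma \ref{lems_i}(i) followed by (iii) gives $S_i^*S_i\delta_{\omega_1\dots\omega_n}=\delta_{\omega_n,i}\,S_i^*\delta_{\omega_1\dots\omega_{n-1}}=\delta_{\omega_n,i}\,\delta_{\omega_1\dots\omega_n}$, which is precisely $Q_i\delta_\omega$. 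Now every nonempty word ends in exactly one of the letters $1,\dots,N$, so the ranges of $Q_1,\dots,Q_N$ are mutually orthogonal and their orthogonal direct sum is $\Span\{\delta_\omega : \omega\neq\ty\}$, which is the range of $I-P_\ty$. Evaluating on basis vectors, $\sum_{i=1}^N S_i^*S_i\,\delta_\omega$ equals $\delta_\omega$ when $\omega\neq\ty$ and $0$ when $\omega=\ty$, i.e.\ $\sum_{i=1}^N S_i^*S_i=I-P_\ty$.

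There is essentially no obstacle here; everything reduces to the combinatorics of appending and deleting a single letter. The only point deserving care is the bookkeeping around the empty word $\ty$: it is the unique vertex not of the form $\omega i$, hence the unique basis vector annihilated by all the $S_i$, which is exactly why the defect in the relation is the rank-one projection $P_\ty$ --- the feature that makes the system Toeplitz rather than Cuntz in the sense of Definition \ref{def1.5}.
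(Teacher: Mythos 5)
Your proof is correct and follows exactly the route the paper intends: the paper states this lemma without proof, remarking only that it "requires some simple computations," and those computations are precisely your verification on the orthonormal basis $\{\delta_\omega\}_{\omega\in V}$ using Lemma \ref{lems_i}. Your identification of $S_i^*S_i$ as the projection onto the span of words ending in $i$, and of $\ty$ as the unique basis vector killed by all the $S_i$, is the right bookkeeping and matches the paper's subsequent use of the defect $P_\ty$ in Proposition \ref{propls}.
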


\begin{proposition}\label{propls}
Let $P_\emptyset$ be the projection in $l^2(V)$ onto the canonical vector $\delta_\emptyset$.
\begin{enumerate}
\item $U^*=\sum_{i=1}^NS_i+P_\ty$;
\item
$$\Delta=(N+1)I-(\sum_{i=1}^NS_i+U)=(N+1)I-(U+U^*-P_\ty)=(N+1)I-(\sum_{i=1}^NS_i+\sum_{i=1}^NS_i^*+P_\ty).$$
\end{enumerate}
\end{proposition}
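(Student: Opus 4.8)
The plan is to reduce everything to the pointwise action of $U$ and the $S_i$ on $l^2(V)$, using the explicit formulas of Lemma \ref{lems_i}, and then to deduce the second and third displayed identities in (ii) as purely algebraic consequences of (i). Throughout, $\Delta$ is the graph Laplacian of the $N$-ary tree of Definition \ref{def3.1}, i.e. the operator \eqref{eq1.4} with $c\equiv 1$.

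For (i) I would compute $U^*$ from the inner product: $\ip{Uv}{w}=\sum_{y\in V}\cj{v(\sigma(y))}\,w(y)$ forces $(U^*w)(x)=\sum_{y:\sigma(y)=x}w(y)$. Since $\sigma^{-1}(\omega_1\dots\omega_n)=\{\omega_1\dots\omega_n i\}_{i=1}^N$ when $n\ge 1$, while $\sigma^{-1}(\emptyset)=\{\emptyset\}\cup\{1,\dots,N\}$, this gives $(U^*w)(\emptyset)=w(\emptyset)+\sum_{i=1}^N w(i)$ and $(U^*w)(\omega_1\dots\omega_n)=\sum_{i=1}^N w(\omega_1\dots\omega_n i)$ for $n\ge 1$. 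Comparing with $(S_iw)(\omega)=w(\omega i)$ for a word $\omega$, $(S_iw)(\emptyset)=w(i)$, and $(P_\emptyset w)(x)=\delta_{x,\emptyset}w(\emptyset)$, one sees that $\sum_{i=1}^N S_iw+P_\emptyset w$ takes exactly these values at every vertex, which is (i). Equivalently one checks $U\delta_x=(\sum_i S_i^*+P_\emptyset)\delta_x$ on the basis $\{\delta_x\}$ using Lemma \ref{lems_i}(iii) and then passes to adjoints.

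For the first equality in (ii) I would evaluate $(\Delta v)(x)$ by cases. A vertex $x=\omega_1\dots\omega_n$ with $n\ge 1$ has exactly the $N+1$ neighbours $\sigma(x),x1,\dots,xN$, so $(\Delta v)(x)=(N+1)v(x)-v(\sigma(x))-\sum_{i=1}^N v(xi)=\big((N+1)I-U-\sum_{i=1}^N S_i\big)v\,(x)$, using $(Uv)(x)=v(\sigma(x))$ and $(S_iv)(x)=v(xi)$. The root $\emptyset$ has only the $N$ neighbours $1,\dots,N$, so $(\Delta v)(\emptyset)=Nv(\emptyset)-\sum_{i=1}^N v(i)$; but $(Uv)(\emptyset)=v(\sigma(\emptyset))=v(\emptyset)$, so the operator $(N+1)I-U-\sum_i S_i$ also returns $(N+1)v(\emptyset)-v(\emptyset)-\sum_i v(i)=(\Delta v)(\emptyset)$. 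This is the one place that needs attention: the ``missing'' edge at the root is compensated exactly by the fixed point $\sigma(\emptyset)=\emptyset$, so the single formula is valid on all of $V$.

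The remaining two equalities in (ii) are then algebra. From (i), $\sum_{i=1}^N S_i=U^*-P_\emptyset$, so $(N+1)I-(\sum_i S_i+U)=(N+1)I-(U+U^*-P_\emptyset)$; and taking adjoints in (i) (with $P_\emptyset^*=P_\emptyset$) gives $U=\sum_{i=1}^N S_i^*+P_\emptyset$, whence $U+U^*-P_\emptyset=\sum_i S_i^*+\sum_i S_i+P_\emptyset$. I expect no genuine obstacle here; the only thing to keep straight is the bookkeeping --- which of $U,U^*$ (resp. $S_i,S_i^*$) deletes and which adjoins the last letter --- together with the separate treatment of the root vertex noted above.
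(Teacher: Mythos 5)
Your proposal is correct and follows essentially the same route as the paper: compute $U^*$ pointwise from the inner product (noting $\sigma^{-1}(\emptyset)=\{\emptyset\}\cup\{1,\dots,N\}$), verify the first identity in (ii) by evaluating $\Delta$ separately at the root and at words of length $n\ge 1$, and obtain the remaining identities algebraically from (i). The only difference is that you spell out the adjoint bookkeeping for the last two equalities, which the paper leaves as ``the other equalities follow from (i).''
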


\begin{proof}
(i) For $v,v'$ in $l^2(V)$, we have
$$\ip{Uv}{v'}=\cj v(\ty)v'(\ty)+\sum_{n\geq1, \omega_1,\dots,\omega_n}\cj v(\omega_1\dots\omega_{n-1})v'(\omega_1\dots\omega_n)=$$
$$\cj v(\ty)v'(\ty)+\cj v(\ty)\sum_{\omega_1}v'(\omega_1)+\sum_{\omega_1,\dots,\omega_{n-1}}\cj v(\omega_1\dots\omega_{n-1})\sum_{\omega_n}v(\omega_1\dots\omega_{n-1}\omega_n).$$
Therefore
$$(U^*v)(\ty)=v(\ty)+\sum_{i=1}^Nv(i),\quad (U^*v)(\omega_1\dots\omega_n)=\sum_{i=1}^Nv(\omega_1\dots\omega_ni).$$
This implies that $U^*=\sum_{i=1}^N S_i+ P_\ty$.

\par
(ii) We have 
$$(\Delta v)(\ty)=Nv(\ty)-\sum_{i=1}^Nv(i)=(N+1)v(\ty)-(\sum_{i=1}^Nv(i)+v(\ty)),$$
and
$$(\Delta v)(\omega_1\dots\omega_n)=(N+1)v(\omega_1\dots\omega_{n})-(v(\omega_1\dots\omega_{n-1})+\sum_{i=1}^Nv(\omega_1\dots\omega_ni)).$$
This implies that $\Delta=(N+1)I-(\sum_{i=1}^NS_i+U)$. The other equalities follow from (i).
\end{proof}

\begin{remark}\label{remlat}
  In equation \eqref{eq1.4} we introduced our general class of graph Laplacians $\Delta_{G,c}$. As noted there, there is a Laplace operator for each graph $G$ and for each choice of conductance function $c$. Our paper is about spectral theory of the graph Laplacians, and the spectrum depends on both the graph, and the choice of conductance function.

   To understand the graph dependence, we may contrast two cases: Case 1: the $N$-fold tree graphs (Definition \ref{def3.1} and Figure \ref{fig1}) and the use of the unilateral shift (Definition \ref{def1.6}). Case 2: Lattice graphs (details below). For the latter we show that the spectrum is determined by the bilateral shift (Definition \ref{def1.7}.)

Let $G=(G^{(0)},G^{(1)})$ be the rank $d$ group with vertices $G^{(0)}=\bz^d$, and edges defined by $n=(n_1,\dots,n_d)\sim m=(m_1,\dots,n_d)$ iff there exists $k\in\{1,2,\dots,d\}$ such that $|m_k-n_k|=1$ and $m_j=n_j$ for $j\neq k$.

For $v\in l^2(\bz^d)$ set 
\begin{equation}
	(\Delta v)(n):=\sum_{m\sim n}v(n)-v(m);
	\label{eqlat1}
\end{equation}
for example, if $d=1$,
$$(\Delta v)(n)=2v(n)-v(n-1)-v(n+1).$$
These operators are used in numerical analysis, in electrical network analysis, in electrical network models in physics; (see e.g., \cite{Kig03,Pow76}) but with a variety of choices of the conductance function; see \eqref{eq1.4}

\begin{proposition}\label{proplat2}
Let $d\in\bn$. The graph Laplacian $\Delta$ in \eqref{eqlat1} has the form $I-2\preal T_d$ where $\preal T_d=\frac{1}{2}(T_d+T_d^*)$, $T_d=\otimes_{1}^d T$ and $T$ is the bilateral shift; see Definition \ref{def1.7}.
\end{proposition}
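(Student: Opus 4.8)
The plan is to diagonalize $\Delta$ through the tensor factorization of the underlying Hilbert space and then match the result against $I-2\preal T_d$. First I would fix the unitary identification $l^2(\bz^d)\cong\bigotimes_{k=1}^d l^2(\bz)$ sending the canonical basis vector $\delta_{(n_1,\dots,n_d)}$ to $\delta_{n_1}\otimes\cdots\otimes\delta_{n_d}$. Under this identification the bilateral shift $T$ of Definition \ref{def1.7} acts in each tensor slot, so that $T_d=\otimes_1^d T$ is the simultaneous shift $\delta_{n_1}\otimes\cdots\otimes\delta_{n_d}\mapsto\delta_{n_1+1}\otimes\cdots\otimes\delta_{n_d+1}$, with adjoint $T_d^*$ the simultaneous backward shift. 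This gives $2\preal T_d=T_d+T_d^*$ an explicit action on basis vectors, which is exactly the quantity I must compare with $\Delta$ from \eqref{eqlat1}.

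Second, I would carry the comparison into the spectral model supplied by Definition \ref{def1.7}(ii): each factor $l^2(\bz)$ is unitarily equivalent to $L^2(\bt)$ with $T$ taken to multiplication by $z$, so $\bigotimes_1^d l^2(\bz)\cong L^2(\bt^d)$ and $T_d$ is taken to multiplication by the product $z_1\cdots z_d$. In this model $\preal T_d$ becomes multiplication by $\preal(z_1\cdots z_d)$, whence $I-2\preal T_d$ is a multiplication operator whose symbol I can write down explicitly on $\bt^d$. In parallel I would compute the Fourier symbol of $\Delta$ directly from \eqref{eqlat1}. This reduces the asserted operator identity to a pointwise identity between two bounded real-valued symbols on $\bt^d$, which is the cleanest form in which to test it.

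Third, I would organize the verification as an induction on $d$ that exploits the compatibility of the tensor factorization with the product structure of $T_d$: the base case $d=1$ amounts to comparing $(\Delta v)(n)=2v(n)-v(n-1)-v(n+1)$ with $(I-2\preal T)v$, and the inductive step propagates the relation across the added tensor factor. The main obstacle—and the step on which the whole identity turns—is precisely this matching: reconciling the \emph{simultaneous} (diagonal) shift produced by the product operator $\otimes_1^d T$ with the coordinatewise nearest-neighbor structure of the lattice in \eqref{eqlat1}, and tracking the numerical coefficient in front of $I$ through the tensor bookkeeping. Pinning down the symbol of $\preal T_d$ so that it lines up with the symbol of $\Delta$ is the crux, and it is where any normalization conventions implicit in the statement have to be fixed before the equality can be read off.
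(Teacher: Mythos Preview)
Your approach via the Fourier transform on $\bt^d$ is exactly what the paper does: it passes to $L^2(\bt^d)$ and reads off the multiplier
\[
(\Delta v)(z)=\Bigl(2d-\sum_{k=1}^d z_k-\sum_{k=1}^d\cj z_k\Bigr)v(z)=2\sum_{k=1}^d(1-\cos x_k)\,v(z),
\]
so the entire argument is a one-line symbol computation (the paper writes $2N$ where $2d$ is meant).

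Your worry about reconciling the \emph{simultaneous} shift $T_d=\otimes_1^dT$ with the coordinatewise nearest-neighbour structure is well founded, and it is not a bookkeeping issue you can resolve by a choice of normalization. In the Fourier model $T_d$ becomes multiplication by $z_1\cdots z_d$, so $I-2\preal T_d$ has symbol $1-2\cos(x_1+\cdots+x_d)$, which does not agree with the Laplacian symbol $2d-2\sum_k\cos x_k$ for any $d\ge1$ (already for $d=1$ the constant term is $1$ versus $2$). The paper's proof never actually matches the computed multiplier to the stated operator $I-2\preal T_d$; the proposition is loosely phrased, and what the proof really establishes is the multiplier form above, from which the intended content---that $\Delta$ is a bounded function of the $d$ commuting bilateral coordinate shifts, hence has absolutely continuous spectrum given by the pull-back of Haar measure---follows. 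Your plan is therefore the right one; carried out, it would expose this imprecision rather than verify the literal identity, and your proposed induction on $d$ cannot close because the base case already fails.
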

\begin{proof}
Introducing the Fourier transform on $\bt^d$ (the $d$-torus), we get the unitary equivalence
\begin{equation}
	v(z)=\sum_{n\in\bz^d}v_nz^n\leftrightarrow(v_n)_n\in l^2(\bz^d)
	\label{eqlat2}
\end{equation}
and by Parseval's equality
$$\|v\|_{L^2(\bt^d)}^2=\sum_{n\in\bz^d}|v_n|^2.$$
Substituting \eqref{eqlat2} into \eqref{eqlat1}, we get
$$(\Delta v)(z)=(2N-\sum_{k=1}^d z_k-\sum_{k=1}^d\cj z_k)v(z)=(2N-\sum_{k=1}^d 2\preal z_k)v(z).$$
Or setting $z_k=e^{ix_k}$, $k=1,\dots, d$, 
$$(\Delta v)(x)=2\sum_{k=1}^d(1-\cos x_k)v(x)=4\sum_{k=1}^d\sin^2\left(\frac{x_k}{2}\right)v(x).$$
\end{proof} 
\end{remark}

\begin{remark}\label{remlap2}
There are several contrasts between this case and the Laplace operators of tree graphs, see Theorem \ref{th3.25}: 
One is that the spectral measure here simply is a pull-back of the Haar measure on $\bt^d$ with the function given above; second that there is no rank one perturbation; and third that the spectrum is simple in the present bilateral case.

We also note that the Laplace operator for the lattice case has simple spectrum while the case of the tree has an intricate multiplicity structure, see Theorem \ref{propcycl} for specifics. 
\end{remark}
\subsection{The Fock space}

\begin{definition}\label{deffock}
Let $H:=\bc^N$ and $\Omega$ be a fixed unit vector. Then the Fock space is the Hilbert space 
$$\mathcal F:=\F(H):=\bc\Omega\oplus H\oplus (H\otimes H)\oplus\dots\oplus \underbrace{H\otimes\dots \otimes H}_{n\mbox{ times}}\dots.$$
For a vector $v\in H$ we define the operator $T_v$ on $\mathcal F$ by
\begin{equation}\label{eqtv}
T_vx:=x\otimes v,\quad x\in \mathcal F, (T_v\Omega:=v).
\end{equation}
\end{definition}

\begin{proposition}\label{proptv*}
For $v\in H$, the adjoint of the operator $T_v$ is given by the formula
\begin{equation}
	T_v^*\Omega=0,\quad T_v^*(x_1\otimes\dots\otimes x_n)=x_1\otimes\dots\otimes x_{n-1}\ip{v}{x_n},\quad (n\geq1, x_1,\dots, x_n\in H),\quad T_v^*\Omega=0.
	\label{eqtv*}
\end{equation}
\end{proposition}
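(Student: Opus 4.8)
The plan is to verify formula \eqref{eqtv*} by the defining property of the adjoint: I must show that for all $x,y\in\mathcal F$ one has $\ip{T_v^* x}{y}=\ip{x}{T_v y}$, where the right-hand side is computed from \eqref{eqtv}. Since $\mathcal F$ decomposes as the orthogonal direct sum $\bc\Omega\oplus\bigoplus_{n\geq1}H^{\otimes n}$ and $T_v$ maps $H^{\otimes n}$ into $H^{\otimes(n+1)}$ (with $\Omega\in H^{\otimes 0}$ going to $v\in H^{\otimes 1}$), it suffices to check the inner-product identity on elementary tensors, i.e. on vectors of the form $x_1\otimes\cdots\otimes x_n$ and $\Omega$, and then extend by sesquilinearity and density of finite linear combinations of such tensors.

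First I would record the elementary-tensor inner product on $H^{\otimes n}$: $\ip{x_1\otimes\cdots\otimes x_n}{y_1\otimes\cdots\otimes y_n}=\prod_{k=1}^n\ip{x_k}{y_k}$. Then I would split into cases according to tensor degree. Case one: $y=\Omega$. Then $T_v\Omega=v$, so for $x=x_1\otimes\cdots\otimes x_n$ with $n\geq2$ we get $\ip{x}{T_v\Omega}=\ip{x}{v}=0$ since $x$ and $v$ sit in orthogonal summands; for $n=1$ we get $\ip{x_1}{v}$; and for $x=\Omega$ we get $\ip{\Omega}{v}=0$. Matching this against the proposed formula for $T_v^*$: $\ip{T_v^*x}{\Omega}$ should equal $\ip{x_1\otimes\cdots\otimes x_{n-1}\ip{v}{x_n}}{\Omega}$, which is $0$ unless $n-1=0$, i.e. $n=1$, in which case it is $\overline{\ip{v}{x_1}}=\ip{x_1}{v}$; and $\ip{T_v^*\Omega}{\Omega}=0$. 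These agree. Case two: $y=y_1\otimes\cdots\otimes y_m$ with $m\geq1$. Then $T_vy=y_1\otimes\cdots\otimes y_m\otimes v\in H^{\otimes(m+1)}$, so $\ip{x}{T_vy}$ vanishes unless $x\in H^{\otimes(m+1)}$; writing $x=x_1\otimes\cdots\otimes x_{m+1}$ we compute $\ip{x}{T_vy}=\bigl(\prod_{k=1}^m\ip{x_k}{y_k}\bigr)\ip{x_{m+1}}{v}$. On the other side, $\ip{T_v^*x}{y}=\ip{x_1\otimes\cdots\otimes x_m\ip{v}{x_{m+1}}}{y_1\otimes\cdots\otimes y_m}=\overline{\ip{v}{x_{m+1}}}\prod_{k=1}^m\ip{x_k}{y_k}=\ip{x_{m+1}}{v}\prod_{k=1}^m\ip{x_k}{y_k}$, which matches. (The subcase $x=\Omega$ forces $m+1=0$, impossible, and indeed $\ip{T_v^*\Omega}{y}=0=\ip{\Omega}{T_vy}$.)

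Having checked the identity on all pairs of elementary tensors (and $\Omega$), I would note that the formula \eqref{eqtv*} defines a bounded operator — boundedness follows since on $H^{\otimes n}$ the map is, up to the scalar factor coming from $\ip{v}{\cdot}$, a partial contraction with norm $\leq\|v\|$ — so by linearity and continuity the relation $\ip{T_v^*x}{y}=\ip{x}{T_vy}$ extends to all $x,y\in\mathcal F$, which is exactly the statement that the operator given by \eqref{eqtv*} is the adjoint of $T_v$. I do not anticipate a genuine obstacle here; the only thing to be careful about is bookkeeping the degree shift (the $n=1$ versus $n\geq2$ boundary, and the role of $\Omega$ as the degree-zero component), so that the annihilation operator lowers tensor degree by exactly one and lands the degree-one part back on the scalar multiple of $\Omega$. (One should also simply discard the redundant repetition of ``$T_v^*\Omega=0$'' at the end of \eqref{eqtv*}, which is just a typo.)
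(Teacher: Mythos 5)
Your verification is correct and complete: checking $\ip{T_v^*x}{y}=\ip{x}{T_vy}$ on elementary tensors degree by degree, with the conjugate-linearity in the first slot handled properly at the $n=1$ boundary where the output lands on $\bc\Omega$, and with the boundedness remark justifying the extension by density. The paper in fact states Proposition \ref{proptv*} with no proof at all (it is treated as a routine computation in the Voiculescu Fock-space framework), so there is nothing to compare against; the argument you supply is exactly the standard one that the authors implicitly have in mind, and you are also right that the repeated ``$T_v^*\Omega=0$'' in \eqref{eqtv*} is a typo.
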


The following Proposition summarizes a number of geometric properties of our isometries in the Fock space. We begin with the operators  $T$ and $T^*$ for $T=T_v$. In Definition \ref{def1.5} we emphasize that a unilateral shift is really an isomorphism class; referring to unitary equivalence, and that the multiplicity is a complete isomorphism invariant. Specifically, in the Proposition below, we spell out a particular shift representation for the operator $T_v$ when $\| v \| = 1$; and this refers to the Fock space which we will need later. In fact, for a fixed $v$, we must identify the closed subspace in the Fock space which is shifted by powers of the isometry $T_v$.

  We use Voiculescu's framework \cite{Voi85} and \cite{Ora01}. Moreover we include here the details we will be using later in the proof of our main conclusions regarding spectral representations of graph Laplacians;-  in the case of trees, these are operators derived from the $T_v$ system.

\begin{proposition}\label{propn1}
Let $N\in\bn$, and let $H$ be a Hilbert space of dimension $N$. Let $\mathcal F:=\mathcal F(H)$ be the Fock space (see Definition \ref{deffock}.) Let $v\in H$ satisfy $\|v\|=1$. Then $T=T_v$ is a shift operator in $\mathcal F$. The multiplicity of $T$ is as follows: $\operatorname*{mult}(T)=1$ iff $N=1$. And if $N>1$, then $\operatorname*{mult}(T)=\infty$.
\end{proposition}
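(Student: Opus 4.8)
The plan is to verify directly that $T = T_v$ satisfies one of the equivalent conditions in Definition \ref{def1.6}, most conveniently condition (ii): $\lim_n {T^*}^n x = 0$ for every $x \in \mathcal F$. Since $T_v^* T_v = \|v\|^2 I = I$ by the Cuntz--Toeplitz relation \eqref{eqct} (or directly from the formulas in Proposition \ref{proptv*}), $T$ is an isometry, so this is meaningful. First I would compute ${T_v^*}^n$ on a homogeneous tensor: iterating \eqref{eqtv*}, for $x_1 \otimes \cdots \otimes x_k \in H^{\otimes k}$ one gets ${T_v^*}^n(x_1\otimes\cdots\otimes x_k) = 0$ once $n > k$, and more generally it equals $x_1 \otimes \cdots \otimes x_{k-n} \prod_{j=k-n+1}^{k} \ip{v}{x_j}$ for $n \le k$. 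Hence on any vector supported on the first $m$ tensor levels, ${T_v^*}^n$ vanishes for $n>m$; a standard $\varepsilon/3$ density argument (finite-rank truncations are dense in $\mathcal F$, and $\|{T_v^*}^n\| \le 1$) then gives condition (ii) for all $x$. This establishes that $T_v$ is a shift.

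Next I would pin down the multiplicity, which by Definition \ref{def1.6}(i) is $\dim H$ for the canonical model, hence equals $\dim(\ker T_v^*)$ — equivalently $\dim(\mathcal F \ominus T_v \mathcal F)$, the dimension of the wandering subspace. So the task reduces to computing $\ker T_v^*$. From \eqref{eqtv*}, $T_v^* \Omega = 0$, so $\Omega \in \ker T_v^*$; and on level $k \ge 1$, $T_v^*(x_1\otimes\cdots\otimes x_k) = (x_1\otimes\cdots\otimes x_{k-1})\ip{v}{x_k} = 0$ precisely when the level-$k$ part, viewed as an element of $H^{\otimes(k-1)} \otimes H$, lies in $H^{\otimes(k-1)} \otimes \{v\}^\perp$. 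Therefore
$$\ker T_v^* = \bc\Omega \oplus \bigoplus_{k \ge 1} \left( H^{\otimes (k-1)} \otimes (\{v\}^\perp) \right).$$
If $N = 1$, then $\{v\}^\perp = \{0\}$, so $\ker T_v^* = \bc\Omega$ is one-dimensional and $\operatorname*{mult}(T_v) = 1$. If $N > 1$, then $\dim\{v\}^\perp = N-1 \ge 1$, and already the $k=1$ summand contributes $N-1 \ge 1$ dimensions while the $k=2$ summand contributes $N(N-1)$, and so on without bound; summing over $k$ gives an infinite-dimensional kernel, so $\operatorname*{mult}(T_v) = \infty$.

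The computations here are all routine; the only point requiring a little care is the density/uniform-boundedness argument passing from homogeneous tensors to general vectors of $\mathcal F$ in establishing condition (ii), and being explicit that "multiplicity'' in Definition \ref{def1.6}(i) is realized as $\dim\ker T^*$ for a shift $T$ (this is immediate from the model \eqref{eq1.13}, where the adjoint kills all but the $0$-th coordinate, so its kernel is one copy of $H$). I expect no genuine obstacle; the main thing to get right is bookkeeping of the tensor levels in the formula for $\ker T_v^*$ so that the case split $N=1$ versus $N>1$ is transparent.
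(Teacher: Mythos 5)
Your proposal is correct and takes essentially the same route as the paper: the paper verifies the shift property via condition (iii) of Definition \ref{def1.6}, showing $\|P_k\xi\|^2\leq\sum_{m\geq k}\|\xi_m\|^2\rightarrow 0$ using the grading of $\mathcal F$, which is the same computation as your ${T^*}^n x\rightarrow 0$ argument since $\|P_n x\|=\|{T^*}^n x\|$ for an isometry. Your identification of $\ker T_v^*$ as $\bc\Omega\oplus\bigoplus_{k\geq 1}H^{\otimes(k-1)}\otimes\{v\}^{\perp}$ and the resulting case split on $N$ is exactly the paper's multiplicity argument.
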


\begin{proof}
It follows from Proposition \ref{proptv*} that $T$ is isometric, $T:\mathcal F\rightarrow\mathcal F$, and that
$$\Omega\in N(T^*)=\{\xi\in\mathcal F\,|\, T^*\xi=0\}.$$
If $N=1$, then $N(T^*)=\bc\Omega$; and otherwise $N(T^*)$ contains in addition the following vectors in $\mathcal F$: $\xi_1\otimes\dots\otimes\xi_{k-1}\otimes\xi_k$ where $\xi_i\in H$, and $\xi_k\perp v$, i.e., $\ip{\xi_k}{v}=0$.

Moreover, together with $\Omega$, the closed span of these vectors exhausts $N(T^*)$.

It is true in general for a fixed isometry $(T,\mathcal H)$ that the direct sum Hilbert space
\begin{equation}
	{\sum_{k=0}^\infty}^{\oplus}N(T^*)
	\label{eqn1}
\end{equation}
is isomorphic to a closed subspace in $\H$. Specifically, if $x_k\in N(T^*)$, then
\begin{equation}
	\sum_{k=0}^\infty T^kx_k\rightarrow x_0\oplus x_1\oplus x_2\oplus\dots
	\label{eqn2}
\end{equation}
is a well defined isometry, i.e., it satisfies 
\begin{equation}
	\|\sum_{k=0}^\infty T^kx_k\|_{\H}^2=\sum_{k=0}^\infty\|x_k\|^2=\|{\sum_{k=0}^\infty}^\oplus x_k\|^2<\infty
	\label{eqn3}
\end{equation}

This is a consequence of the following identities:
\begin{equation}
	\ip{T^jx_j}{T^kx_k}_{\H}=\delta_{j,k}\ip{x_j}{x_k},\quad(j,k\in\bn_0, x_j\in N(T^*)).
	\label{eqn4}
\end{equation}
Introducing $P_k:=T^k{T^*}^k$ from \eqref{eqn2}-\eqref{eqn3}, we see that a vector $\xi\in\mathcal H$ satisfies $\inf_k\|P_k\xi\|^2=0$ iff $\xi\in \vee_k\operatorname*{Ran}(T^k)=(\wedge_k N({T^*}^k))^{\perp}$.

Set $\H=\mathcal F=\mathcal F(H)$, and assume $N>1$. Set $T=T_v$, for $v\in H$, $\|v\|=1$. Then 

\begin{equation}
	P_{k+1}H^{\otimes k}=0,\quad (k\in\bn_0)
	\label{eqn5}
\end{equation}
with the convention $H^{\otimes 0}=\bc\Omega$.

If $k<m$, then 
\begin{equation}
	P_k|_{\otimes_0^m H}=(I_{\otimes_0^{m-k}})\otimes |\underbrace{v\otimes\dots\otimes v}_{k\mbox{ times }}\rangle\langle\underbrace{v\otimes\dots\otimes v}_{k\mbox{ times }}|
	\label{eqn6}
\end{equation}
where we use Dirac's notation $|\cdot\rangle\langle\cdot|$ is denoting rank one operators.

Since $\mathcal F={\sum_{k\geq0}}^\oplus H^{\times k}$, we have the following representation:
$$\mathcal F\ni\xi=\sum_{k\geq0}\xi_k,\quad\xi_k\in H^{\otimes k};$$
and
\begin{equation}
	\|\xi\|_{\mathcal F}^2=\sum_{m\geq 0}\|\xi_m\|^2_{\otimes_0^m H}
	\label{eqn7}
\end{equation}
Using \eqref{eqn5}-\eqref{eqn6}, we then get
$$\|P_k\xi\|^2=\sum_{m\geq k}\|P_k\xi_m\|^2\leq \sum_{m\geq k}\|\xi_m\|^2\rightarrow 0\mbox{ as } k\rightarrow\infty;$$
where we used \eqref{eqn3} in the last step.

Hence $T$ is a unilateral shift according to Definition \ref{def1.6}(iii).

It follows from \eqref{eqn6} that $\operatorname*{mult}(T)=1$ iff $N=\dim H=1$.

If $N>1$, then $N(T^*)$ is the closure of the span of $\Omega$ and the infinite sequence of closed subspaces 
$$(\otimes_0^kH)\otimes(H\ominus\{v\}).$$
Hence this sum is an infinite dimensional Hilbert space. We used the terminology:
$$H\ominus\{v\}:=\{w\in H\,|\,\ip{w}{v}=0\}.$$

\end{proof}
\begin{definition}\label{defn2}
Let $H$ be an $N$-dimensional Hilbert space and let $\mathcal F=\mathcal F(H)$ be the Fock space. Let $v\in H$, $\|v\|=1$ be given. Set $\operatorname*{Ran}(T_v)=$the range of the isometry $T_v$. Then $T_vT_v^*$ is the projection onto $\operatorname*{Ran}(T_v)$, and $I-T_vT_v^*$ is the projection onto $N(T_v^*)$. 
\end{definition}

\begin{corollary}\label{corn3}
Let $H$ and $\mathcal F=\mathcal F(H)$ be as in Proposition \ref{propn1}. Suppose $N=\dim H>1$, and let $v\in H$, $\|v\|=1$ be given. Then 
\begin{equation}
	\bc\Omega\oplus\sum_{w\in H\ominus \{v\}}\operatorname*{Ran}(T_w)=N(T_v^*)
	\label{eqn8}
\end{equation}
\end{corollary}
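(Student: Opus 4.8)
The plan is to read off both inclusions of \eqref{eqn8} from the explicit formula for $T_v^*$ in Proposition \ref{proptv*}, together with the description of $N(T_v^*)$ already obtained inside the proof of Proposition \ref{propn1}.

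For the inclusion ``$\subseteq$'', from Proposition \ref{proptv*} one computes, on pure tensors, $T_v^*T_w(x_1\otimes\dots\otimes x_n)=\ip{v}{w}\,x_1\otimes\dots\otimes x_n$ and $T_v^*T_w\Omega=\ip{v}{w}\Omega$, so $T_v^*T_w=\ip{v}{w}\,I_{\mathcal F}$ (these are just the Cuntz--Toeplitz relations for the creation operators). Hence for $w\in H\ominus\{v\}$ we get $T_v^*T_w=0$, i.e.\ $\operatorname*{Ran}(T_w)\subseteq N(T_v^*)$; and of course $\Omega\in N(T_v^*)$. Since $N(T_v^*)$ is a closed subspace, it contains $\bc\Omega$ together with the closed span of $\bigcup_{w\perp v}\operatorname*{Ran}(T_w)$, which is exactly the left side of \eqref{eqn8}. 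I would also record that every $\operatorname*{Ran}(T_w)$ lies in $\bigoplus_{k\geq1}H^{\otimes k}$, so $\Omega$ is orthogonal to all of them and the sum on the left of \eqref{eqn8} is genuinely an orthogonal direct sum.

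For the reverse inclusion I would use the grading. Decompose $\xi=\sum_{n\geq0}\xi_n$ with $\xi_n\in H^{\otimes n}$. Because $T_v^*$ carries $H^{\otimes n}$ into $H^{\otimes(n-1)}$, and these degrees are mutually orthogonal, $\xi\in N(T_v^*)$ iff $T_v^*\xi_n=0$ for each $n$; and on $H^{\otimes n}$ ($n\geq1$) the operator $T_v^*$ is $I_{H^{\otimes(n-1)}}\otimes\ip{v}{\,\cdot\,}$, whose kernel --- splitting $H=\bc v\oplus(H\ominus\{v\})$ --- is precisely $H^{\otimes(n-1)}\otimes(H\ominus\{v\})$. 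This reproves $N(T_v^*)=\bc\Omega\oplus\bigoplus_{k\geq0}\big(H^{\otimes k}\otimes(H\ominus\{v\})\big)$, as used in Proposition \ref{propn1}. Now for fixed $k$, choosing an orthonormal basis $\{f_j\}$ of $H\ominus\{v\}$, each spanning vector $x_1\otimes\dots\otimes x_k\otimes f_j$ of $H^{\otimes k}\otimes(H\ominus\{v\})$ equals $T_{f_j}(x_1\otimes\dots\otimes x_k)\in\operatorname*{Ran}(T_{f_j})$, so $H^{\otimes k}\otimes(H\ominus\{v\})\subseteq\sum_{w\perp v}\operatorname*{Ran}(T_w)$; taking the Hilbert-space direct sum over $k$ and adjoining $\bc\Omega$ gives the reverse inclusion, completing the proof.

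I do not anticipate a genuine obstacle: the corollary is essentially a repackaging of the kernel computation already performed for Proposition \ref{propn1}. The only care needed is the bookkeeping with closures --- matching the closed linear span of the family $\{\operatorname*{Ran}(T_w):w\perp v\}$ against the internal Hilbert-space direct sum decomposition of $N(T_v^*)$ --- and checking via the degree grading that this decomposition is exhaustive, i.e.\ that no vector outside $\bc\Omega\oplus\bigoplus_{k}H^{\otimes k}\otimes(H\ominus\{v\})$ can be annihilated by $T_v^*$; both points are dispatched by the grading argument sketched above.
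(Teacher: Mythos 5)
Your proof is correct, but it is organized differently from the paper's. The paper's argument is a one-line projection identity: extending $v=v_1$ to an ONB $v_1,\dots,v_N$ of $H$, it invokes the completeness relation $P_\Omega+\sum_{i=1}^N T_{v_i}T_{v_i}^*=I_{\mathcal F}$ on the full Fock space, rewrites it as $I_{\mathcal F}-T_{v_1}T_{v_1}^*=P_\Omega+\sum_{i=2}^N T_{v_i}T_{v_i}^*$, and identifies the two sides as the projections onto $N(T_v^*)$ and onto $\bc\Omega\oplus\sum_{w\perp v}\operatorname{Ran}(T_w)$ respectively. You instead verify the two inclusions by hand: the easy one from $T_v^*T_w=\ip{v}{w}I_{\mathcal F}$, and the exhaustion from the degree grading of $\mathcal F$, computing $N(T_v^*)\cap H^{\otimes n}=H^{\otimes(n-1)}\otimes(H\ominus\{v\})$ and noting that $T_v^*$ respects the grading so the kernel splits degree by degree. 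Your route is slightly longer but more self-contained, since it does not presuppose the Cuntz--Toeplitz completeness relation (which the paper has in hand via Lemma \ref{lemtoepls_i} and Proposition \ref{proplapf}); it also yields the explicit graded decomposition $N(T_v^*)=\bc\Omega\oplus\bigoplus_{k\geq0}H^{\otimes k}\otimes(H\ominus\{v\})$ as a by-product, which is the description already used in the proof of Proposition \ref{propn1}. The only point worth keeping an eye on --- that $\xi\in N(T_v^*)$ iff each graded component is annihilated --- you justify correctly by the mutual orthogonality of the degrees. Both arguments are sound; no gap.
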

\begin{proof}
Set $v_1=v$, and extend to an ONB: $v_2,\dots,v_N$ for $H$. Then we have
\begin{equation}
	I_{\mathcal F}-T_{v_1}T_{v_1}^*=P_{\Omega}+\sum_{i=2}^NT_{v_i}T_{v_i}^*
	\label{eqn9}
\end{equation}
But we noted that $I_{\mathcal F}-T_{v_1}T_{v_1}^*$ is the projection onto $N(T_{v_1})$, while the right-hand side in \eqref{eqn9} is the projection onto 
$$\bc\Omega\oplus\sum_{w\perp v}\operatorname*{Ran}(T_w).$$
\end{proof}

\begin{proposition}\label{proplapf} 
Consider the infinite $N$-ary tree in Definition \ref{def3.1}.
Let $(e_i)_{i=1}^N$ be an orthonormal basis for $H=\bc^N$. And let $\F=\F(H)$ be the corresponding Fock space. The map $\mathcal W: l^2(V)\rightarrow\mathcal F$ defined by
\begin{equation}
	\mathcal W(\delta_\ty)=\Omega,\quad \mathcal W(\delta_{\omega_1\dots\omega_n})=e_{\omega_1}\otimes\dots\otimes e_{\omega_n},\quad(\omega_1,\dots,\omega_n\in \{1,\dots,N\})
	\label{eqW}
\end{equation}
defines an isometric isomorphism between the two Hilbert spaces. 
\par
Let $s_\Delta:=\sum_{i=1}^N e_i$. Then 

\begin{equation}\label{eqsf}
\mathcal WS_i\mathcal W^*=T_{e_i}^*,\quad(i\in\{1,\dots,N\}),
\end{equation}
and
\begin{equation}
	\mathcal W\Delta\mathcal W^*=(N+1)I-(T_{s_\Delta}+T_{s_{\Delta}}^*+P_{\Omega})
	\label{eqlapf}
\end{equation}
where $P_\Omega$ is the projection in $\mathcal F$ onto the vector $\Omega$.
\end{proposition}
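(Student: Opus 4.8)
The plan is to verify \eqref{eqW}–\eqref{eqlapf} by a direct computation, using the combinatorial description of $V$ as finite words and the explicit formulas in Lemmas \ref{lems_i}, \ref{lemtoepls_i}, Proposition \ref{propls}, and Proposition \ref{proptv*}. First I would check that $\mathcal W$ is a well-defined isometric isomorphism: the canonical vectors $\{\delta_\omega\}_{\omega\in V}$ form an ONB of $l^2(V)$, and under $\mathcal W$ they are sent to $\Omega$ together with all the elementary tensors $e_{\omega_1}\otimes\cdots\otimes e_{\omega_n}$, $n\ge 1$; since $(e_i)_{i=1}^N$ is an ONB of $H$, these elementary tensors form an ONB of $\underbrace{H\otimes\cdots\otimes H}_{n}$ for each $n$, hence together with $\Omega$ an ONB of $\mathcal F$. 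So $\mathcal W$ takes an ONB to an ONB and extends to a unitary isomorphism. Its adjoint $\mathcal W^*$ is the inverse, sending $\Omega\mapsto\delta_\ty$ and $e_{\omega_1}\otimes\cdots\otimes e_{\omega_n}\mapsto\delta_{\omega_1\cdots\omega_n}$.

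Next I would establish \eqref{eqsf}. It suffices to check $\mathcal W S_i \mathcal W^* = T_{e_i}^*$ on the ONB $\{\Omega\}\cup\{e_{\omega_1}\otimes\cdots\otimes e_{\omega_n}\}$ of $\mathcal F$. On $\Omega$: the left side gives $\mathcal W S_i \delta_\ty = \mathcal W(0)=0$ by Lemma \ref{lems_i}(i), and the right side gives $T_{e_i}^*\Omega=0$ by Proposition \ref{proptv*}; these agree. On $e_{\omega_1}\otimes\cdots\otimes e_{\omega_n}$ with $n\ge1$: the left side is $\mathcal W S_i \delta_{\omega_1\cdots\omega_n}=\mathcal W(\delta_{\omega_n,i}\,\delta_{\omega_1\cdots\omega_{n-1}})=\delta_{\omega_n,i}\, e_{\omega_1}\otimes\cdots\otimes e_{\omega_{n-1}}$ by Lemma \ref{lems_i}(i) (with the convention that for $n=1$ this is $\delta_{\omega_1,i}\Omega$), while the right side is $T_{e_i}^*(e_{\omega_1}\otimes\cdots\otimes e_{\omega_n})=e_{\omega_1}\otimes\cdots\otimes e_{\omega_{n-1}}\ip{e_i}{e_{\omega_n}}=\delta_{i,\omega_n}\,e_{\omega_1}\otimes\cdots\otimes e_{\omega_{n-1}}$ by Proposition \ref{proptv*} and orthonormality of $(e_i)$. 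These agree, so \eqref{eqsf} holds; taking adjoints also gives $\mathcal W S_i^* \mathcal W^* = T_{e_i}$.

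Finally I would deduce \eqref{eqlapf}. By linearity of $v\mapsto T_v$ one has $T_{s_\Delta}=\sum_{i=1}^N T_{e_i}$ and hence $T_{s_\Delta}^*=\sum_{i=1}^N T_{e_i}^*$. From Proposition \ref{propls}(ii), $\Delta=(N+1)I-\bigl(\sum_{i=1}^N S_i+\sum_{i=1}^N S_i^* + P_\ty\bigr)$; conjugating by $\mathcal W$ and using \eqref{eqsf} and its adjoint form, together with the obvious identity $\mathcal W P_\ty \mathcal W^* = P_\Omega$ (since $\mathcal W\delta_\ty=\Omega$), gives $\mathcal W\Delta\mathcal W^* = (N+1)I - \bigl(\sum_{i=1}^N T_{e_i}^* + \sum_{i=1}^N T_{e_i} + P_\Omega\bigr) = (N+1)I - (T_{s_\Delta}^* + T_{s_\Delta} + P_\Omega)$, which is \eqref{eqlapf}. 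I do not anticipate a genuine obstacle here; the only point requiring a little care is keeping track of the $n=1$ boundary case (where $\sigma$ and $\tau_i$ interact with the empty word $\ty$, equivalently where $T_v^*$ lands in $\bc\Omega$) so that the formulas in Lemma \ref{lems_i} and Proposition \ref{proptv*} are applied consistently on that edge case.
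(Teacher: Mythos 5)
Your proposal is correct and follows essentially the same route as the paper's own proof: verify that $\mathcal W$ carries the canonical ONB of $l^2(V)$ to an ONB of $\mathcal F$, check \eqref{eqsf} on that basis using Lemma \ref{lems_i} and Proposition \ref{proptv*}, and then conjugate the identity of Proposition \ref{propls}(ii) using $\mathcal W P_\ty \mathcal W^*=P_\Omega$ and $T_{s_\Delta}=\sum_i T_{e_i}$. Your explicit attention to the $n=1$ boundary case is a small but welcome refinement over the paper's write-up.
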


\begin{proof}
The map $\mathcal W$ defines an isometric isomorphism because it maps the orthonormal basis $(\delta_\omega)_{\omega\in\Omega}$ into an orthonormal basis $\{\Omega\}\cup\{e_{\omega_1}\otimes\dots\otimes e_{\omega_n}\,|\,\omega_1,\dots,\omega_n\in\{1,\dots,N\}, n\in\bn\}$.

Next, we check equation \eqref{eqsf} on this orthonormal basis.
$$\mathcal WS_i\mathcal W^*\Omega=\mathcal WS_i\delta_\ty=\mathcal W\delta_\ty\circ\tau_i=0=T_{e_i}^*\Omega.$$
$$\mathcal WS_i\mathcal W^*e_{\omega_1}\otimes\dots\otimes e_{\omega_n}=WS_i\delta_{\omega_1\dots\omega_n}=W\delta_{\omega_1\dots\omega_n}\circ\tau_i=
W\delta_{\omega_1\dots\omega_{n-1}}\delta_{\omega_n,i}=$$$$e_{\omega_1}\otimes\dots\otimes e_{\omega_{n-1}}\ip{e_{\omega_n}}{e_i}=T_{e_i}^*e_{\omega_1}\otimes\dots\otimes e_{\omega_n}.$$
This implies \eqref{eqsf}

It is clear that $\mathcal WP_\ty\mathcal W^*=P_\Omega$.

From equation \eqref{eqsf} and Proposition \ref{propls} it follows that 
$$\mathcal W\Delta\mathcal W^*=(N+1)I-(\sum_{i=1}^NT_{e_i}^*+\sum_{i=1}^NT_{e_i}+P_\Omega).$$
But since $\sum_{i=1}^N T_{e_i}=T_{s_\Delta}$, equation \eqref{eqlapf} follows.
\end{proof}

\def\H{\mathcal H}

\subsection{Decomposition into cyclic subspaces}
\begin{definition}\label{defh_i}
Let $\{x_1,\dots,x_N\}$ be an orthonormal basis for $H=\bc^N$ with the first vector $x_1=s_0:=\frac{s_\Delta}{\|s_\Delta\|}=\frac{1}{\sqrt{N}}s_\Delta$. ($s_\Delta$ is defined in Proposition \ref{proplapf}).
Define the following subspaces of $\mathcal F$:
$$\H_{\Omega}:=\cj{\mbox{span}}\{\Omega,\otimes_{k=1}^p s_0\,|\,p\geq 1\},$$ 
$$\H_{i_1\dots i_n}:=\cj{\mbox{span}}\{x_{i_1}\otimes\dots \otimes x_{i_n}\otimes\otimes_{k=1}^ps_0\,|\,p\geq 0\},\quad i_1,\dots,i_n\in\{1,\dots, N\}, i_n\neq 1, n\geq 1.$$
so $x_{i_n}$ should not be the vector in the basis corresponding to $s_\Delta$.
\end{definition}

\begin{theorem}\label{propcycl}
The cyclic subspace decomposition of the graph Laplacian.
\begin{enumerate}
\item
The subspaces $\H_{\Omega}$ and $\H_{i_1\dots i_n}$, $i_1,\dots, i_n\in\{1,\dots,N\}$, $i_n\neq 1$ are mutually orthogonal, 

$$\H_\Omega\oplus\oplus_{i_1,\dots, i_n\in\{1,\dots,N\}, i_n\neq 1} \H_{i_1\dots i_n}=\mathcal F,	$$

and they are cyclic subspaces for the operator $\mathcal W\Delta\mathcal W^*$.
\item
Let $S$ be the unilateral shift of multiplicity one, 
$$S:l^2(\bn_0)\rightarrow l^2(\bn_0), S(x_0,x_1,x_2,\dots)=(0, x_0,x_1,x_2,\dots).	$$

The restriction of $\mathcal W\Delta\mathcal W^*$ to $\H_\Omega$ is unitarily equivalent to 
the operator 
\begin{equation}
D_{\Omega}:=	(N+1)I_{l^2(\bn_0)}-2\sqrt{N}{\operatorname*{Re} S}-P_{\delta_0}.
	\label{eqlapsh1}
\end{equation}
For all $i_1,\dots,i_n\in\{1,\dots,N\}$, $i_n\neq 1$, and $n\geq1$, the restriction of $\mathcal W\Delta\mathcal W^*$ to $\H_{i_1\dots i_n}$ is unitarily equivalent to the operator 
\begin{equation}
D:=(N+1)I_{l_2(\bn)}-2\sqrt{N}{\operatorname*{Re} S}.	
	\label{eqlapsh2}
\end{equation}

Where $\operatorname*{Re} S=\frac{S+S^*}2$, and $P_{\delta_0}$ is the projection in $l^2(\bn_0)$ onto the vector $\delta_0$.
\end{enumerate}
\end{theorem}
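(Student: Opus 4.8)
The plan is to build the claimed unitary equivalences by exhibiting, for each cyclic subspace, an explicit orthonormal basis indexed by $\bn_0$ (or $\bn$) that diagonalizes the shift structure, and then simply read off how $\mathcal W\Delta\mathcal W^*=(N+1)I-(T_{s_\Delta}+T_{s_\Delta}^*+P_\Omega)$ acts on it. For the subspace $\H_\Omega$, order the basis as $f_0:=\Omega$, $f_p:=\otimes_{k=1}^p s_0$ for $p\geq 1$. These are orthonormal since $\|s_0\|=1$, and they span $\H_\Omega$ by definition. The key computation, using Definition \ref{deffock} and Proposition \ref{proptv*}, is that $T_{s_\Delta}=\sqrt N\,T_{s_0}$ sends $f_p\mapsto f_p\otimes s_\Delta=\sqrt N f_{p+1}$, while $T_{s_\Delta}^*=\sqrt N\,T_{s_0}^*$ sends $f_p\mapsto \sqrt N\,f_{p-1}\ip{s_0}{s_0}=\sqrt N f_{p-1}$ for $p\geq 1$ and $f_0\mapsto 0$. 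Under the unitary $\H_\Omega\to l^2(\bn_0)$, $f_p\mapsto\delta_p$, this identifies $T_{s_\Delta}$ with $\sqrt N S$ and $T_{s_\Delta}^*$ with $\sqrt N S^*$, hence $T_{s_\Delta}+T_{s_\Delta}^*$ with $2\sqrt N\operatorname*{Re}S$, and $P_\Omega$ with $P_{\delta_0}$; this gives exactly \eqref{eqlapsh1}. The one point requiring care is closedness/invariance: one must check that $T_{s_\Delta}$ and $T_{s_\Delta}^*$ map $\H_\Omega$ into itself, which follows from the above formulas, so $\H_\Omega$ is reducing for $\mathcal W\Delta\mathcal W^*$ and is cyclic (the orbit of $\Omega$ under powers of $T_{s_\Delta}$ already spans it).

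For the subspaces $\H_{i_1\dots i_n}$ with $i_n\neq 1$, the argument is parallel. Fix such a multi-index and set $g_p:=x_{i_1}\otimes\dots\otimes x_{i_n}\otimes(\otimes_{k=1}^p s_0)$ for $p\geq 0$; again $\|g_p\|=1$ and $\{g_p\}_{p\geq 0}$ is orthonormal and spans $\H_{i_1\dots i_n}$. Now $T_{s_\Delta}g_p=\sqrt N g_{p+1}$ as before, and $T_{s_\Delta}^*g_p=\sqrt N g_{p-1}$ for $p\geq 1$, while $T_{s_\Delta}^*g_0=x_{i_1}\otimes\dots\otimes x_{i_{n-1}}\ip{s_0}{x_{i_n}}=0$, precisely because $x_{i_n}\perp x_1=s_0$ by the assumption $i_n\neq 1$. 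This is the crucial use of the hypothesis: it is what makes $g_0$ play the role of the ``bottom'' vector of a multiplicity-one shift and, simultaneously, kills the would-be rank-one correction term $P_\Omega$ (since $\Omega\notin\H_{i_1\dots i_n}$ and $P_\Omega$ acts as $0$ on this subspace). Transporting via $g_p\mapsto\delta_{p}\in l^2(\bn)$ (indexing $\bn=\{1,2,\dots\}$, or equally $\bn_0$ — the statement writes $l^2(\bn)$) identifies $\mathcal W\Delta\mathcal W^*|_{\H_{i_1\dots i_n}}$ with $(N+1)I-2\sqrt N\operatorname*{Re}S$, which is \eqref{eqlapsh2}, with no perturbation term.

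Finally, one must assemble these pieces into part (i): mutual orthogonality of the listed subspaces follows from orthogonality of the spanning tensors in $\mathcal F$ (tensors of different length, or of the same length with distinct leading factors $x_{i_1}\otimes\dots\otimes x_{i_n}$, are orthogonal, and for the tail one uses that $\ip{x_j}{s_0}=0$ for $j\neq 1$ together with $\|s_0\|=1$), and completeness — that their direct sum is all of $\mathcal F$ — follows by counting: for each $n$, the vectors $x_{i_1}\otimes\dots\otimes x_{i_n}$ with arbitrary $i_1,\dots,i_{n-1}$ and $i_n\in\{2,\dots,N\}$, together with the single vector $\otimes^n s_0$ absorbed into $\H_\Omega$ or its predecessors, form an orthonormal basis of $H^{\otimes n}$; summing over $n$ and including $\Omega$ recovers the full Fock-space basis. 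Each subspace is $\mathcal W\Delta\mathcal W^*$-invariant by the shift formulas above, and cyclic because it is the closed span of $\{(\mathcal W\Delta\mathcal W^*)^p\xi\}$ (equivalently of $\{T_{s_\Delta}^p\xi\}$) for $\xi$ the bottom vector. The main obstacle is the bookkeeping in this last decomposition step — verifying that the index set $\{(i_1,\dots,i_n): n\geq 1, i_n\neq 1\}$ together with $\H_\Omega$ exactly exhausts an orthonormal basis of $\mathcal F$ without overlap; the operator-theoretic content (two shift computations) is routine once the correct bases are chosen.
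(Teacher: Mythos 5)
Your proposal is correct and follows essentially the same route as the paper: compute the action of $T_{s_\Delta}=\sqrt N\,T_{s_0}$ and its adjoint on the natural orthonormal basis of each subspace, use $i_n\neq 1$ to kill $T_{s_0}^*$ on the bottom vector, identify with the multiplicity-one shift via $f_p\mapsto\delta_p$, and decompose the standard Fock basis among the $\H_{i_1\dots i_n}$ exactly as the paper does. The only cosmetic difference is that you phrase cyclicity via the orbit of the bottom vector under $T_{s_\Delta}$ while the paper runs the equivalent induction on $p$ with the operator $A=T_{s_\Delta}+T_{s_\Delta}^*+P_\Omega$; both rest on the same triangular structure.
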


\begin{proof} (i)
To prove that the spaces are mutually orthogonal, take $x_{i_1}\otimes\dots\otimes x_{i_n}\otimes\otimes_{k=1}^p s_0$ and 
$x_{j_1}\otimes\dots\otimes x_{j_m}\otimes\otimes_{k=1}^qs_0$. If the lengths are different, i.e. $n+p\neq m+q$, then the two vectors are orthogonal. If $n+p=m+q$ and $n<m$, then since $j_m\neq 1$, it follows that $s_0\perp x_{j_m}$, and again the vectors are orthogonal. Similarly, if $n>m$. If, in addition $n=m$, then if $i_1\dots i_n\neq j_1\dots j_m$ then $i_k\neq j_k$ and $x_{i_k}\perp x_{j_k}$ and the two vectors are orthogonal.

To prove that these subspaces span the entire space, we have that $\{\Omega, x_{i_1}\otimes\dots\otimes x_{i_n}\,|\, i_k\in\{1,\dots,N\}\}$ is an orthonormal basis for $\mathcal F$. If all $i_k=1$, then $x_{i_1}\otimes\dots\otimes x_{i_n}$ is in $\H_\Omega$. If some $i_k\neq 1$, take the last one as such, and $x_{i_1}\otimes\dots\otimes x_{i_n}\in\H_{i_1\dots i_k}$.

It remains to prove that these subspaces are cyclic for $\mathcal W\Delta\mathcal W^*$, or equivalently by Proposition \ref{proplapf}, for $A:=T_{s_\Delta}+T_{s_\Delta}^*+P_\Omega$. 
 Note that $T_{s_\Delta}=\sqrt{N}T_{s_0}$, so $A=\sqrt{N}(T_{s_0}+T_{s_0}^*)+P_\Omega$.

 Take $n\geq 1$. Take $x_{i_1}\otimes\dots\otimes x_{i_n}$ with $i_1,\dots ,i_n\in\{1,\dots,N\}$ and $i_n\neq1$. Then 
 $$A(x_{i_1}\otimes\dots\otimes x_{i_n})=\sqrt{N}x_{i_1}\otimes\dots\otimes x_{i_n}\otimes s_0+0+0.$$
 
 Assume by induction on $p$ that $x_{i_1}\otimes\dots\otimes x_{i_n}\otimes\otimes_{k=1}^ps_0$ is in the cyclic subspace of $A$ generated by the vector $x_{i_1}\otimes\dots\otimes x_{i_n}$. Then 
 $$Ax_{i_1}\otimes\dots\otimes x_{i_n}\otimes\otimes_{k=1}^ps_0=\sqrt{N}x_{i_1}\otimes\dots\otimes x_{i_n}\otimes\otimes_{k=1}^{p+1}s_0+\sqrt{N}x_{i_1}\otimes\dots\otimes x_{i_n}\otimes\otimes_{k=1}^{p-1}s_0+0,$$
 so $x_{i_1}\otimes\dots\otimes x_{i_n}\otimes\otimes_{k=1}^{p+1}s_0$ is in the same cyclic subspace. This implies that $\H_{i_1\dots i_n}$ is the cyclic subspace of the operator $A$ (hence of $\mathcal W\Delta\mathcal W^*$), generated by the vector $x_{i_1}\otimes\dots\otimes x_{i_n}$.
 
 Similarly for $\H_{\Omega}$.

(ii) For $i_1\dots i_n\in\{1,\dots,N\}$, $i_n\neq 1$, define the operator $\mathcal W_{i_1\dots i_n}:l^2(\bn_0)\rightarrow\H_{i_1\dots i_n}$,
$$\mathcal W_{i_1\dots i_n}\delta_p=x_1\otimes\dots\otimes x_{i_n}\otimes\otimes_{k=1}^ps_0,\quad (p\geq0).$$
The operator $ \mathcal W_{i_1\dots i_n}$ is clearly unitary since it maps an ONB to an ONB.
A simple calculation shows that $\W_{i_1\dots i_n}$ intertwines $D$ and $\mathcal W^*\Delta\W$ on $\H_{i_1\dots i_n}$. Similarly for $\H_\Omega$.
 
\end{proof}

\begin{lemma}\cite{Voi85}\label{lemvoic}
 The spectral measure of $\operatorname*{Re} S$.
Let $S$ be the unilateral shift of multiplicity one on $l^2(\bn_0)$. Let $\mu_c$ be Wigner's semicircular measure on $[-1,1]$, i.e. $\mu_c$ is absolutely continuous with respect to the Lebesgue measure on $[-1,1]$, and  
$$\frac{d\mu_c}{dx}=\left\{\begin{array}{cc}\frac{2}{\pi}\sqrt{1-x^2},&\mbox{ if }x\in[-1,1]\\
0,&\mbox{ otherwise.}\end{array}\right.$$

There exists an isometric isomorphism $\Phi:l^2(\bn_0)\rightarrow L^2(\mu_c)$, such that $\Phi(\delta_0)=1$ (the constant function $1$), and 
$$\Phi\preal S=M_x\Phi,$$
where $M_x$ is the operator of multiplication by the identity function $x$ on $L^2(\mu_c)$, $(M_xf)(x)=xf(x)$.

\end{lemma}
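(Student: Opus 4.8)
The plan is to realize $\operatorname*{Re} S$ as a Jacobi matrix (tridiagonal operator) with respect to the standard ONB $\{\delta_n\}_{n\geq 0}$ of $l^2(\bn_0)$, and then invoke the classical moment-problem/orthogonal-polynomial correspondence. First I would compute, directly from $S\delta_n=\delta_{n+1}$ and $S^*\delta_n=\delta_{n-1}$ (with $S^*\delta_0=0$), that
\begin{equation*}
\operatorname*{Re} S\,\delta_n=\tfrac12\delta_{n+1}+\tfrac12\delta_{n-1}\quad(n\geq 1),\qquad \operatorname*{Re} S\,\delta_0=\tfrac12\delta_1.
\end{equation*}
So $\operatorname*{Re} S$ is the bounded self-adjoint Jacobi operator with zero diagonal and off-diagonal entries all equal to $\tfrac12$. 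The vector $\delta_0$ is cyclic for this operator (applying the tridiagonal matrix repeatedly to $\delta_0$ reaches every $\delta_n$), so by Lemma \ref{lemr2} there is a unitary $\Phi:l^2(\bn_0)\to L^2(\mu)$ with $\Phi\delta_0=1$ and $\Phi(\operatorname*{Re} S)=M_x\Phi$, where $\mu=\mu_0$ is the spectral measure of $\operatorname*{Re} S$ at $\delta_0$, i.e. $d\mu(x)=\|E_{\operatorname*{Re} S}(dx)\delta_0\|^2$. It remains only to identify $\mu$ with $\mu_c$.

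To pin down $\mu$ I would use the Borel transform and the structure of the resolvent of a constant Jacobi operator. Set $F(z)=F_\mu(z)=\ip{\delta_0}{(\operatorname*{Re} S-z)^{-1}\delta_0}$ as in Lemma \ref{lemr1}. The standard continued-fraction expansion for a Jacobi matrix with diagonal $0$ and constant off-diagonal $b=\tfrac12$ gives the self-consistency equation
\begin{equation*}
F(z)=\cfrac{1}{-z-b^2F(z)}=\cfrac{1}{-z-\tfrac14 F(z)},
\end{equation*}
because deleting the first row and column leaves the same operator. Solving the resulting quadratic $\tfrac14 F(z)^2+zF(z)+1=0$ and choosing the branch with $F(z)\sim -1/z$ as $|z|\to\infty$ yields $F(z)=-2\bigl(z-\sqrt{z^2-1}\bigr)=2\bigl(\sqrt{z^2-1}-z\bigr)$ (appropriate branch of the square root on $\bc\setminus[-1,1]$). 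Then the Stieltjes inversion formula recovers the density: for $x\in(-1,1)$,
\begin{equation*}
\frac{d\mu}{dx}=\frac{1}{\pi}\lim_{\varepsilon\downarrow 0}\operatorname*{Im}F(x+i\varepsilon)=\frac{2}{\pi}\sqrt{1-x^2},
\end{equation*}
and $F$ extends analytically across $\br\setminus[-1,1]$, so $\mu$ has no mass there. Hence $\mu=\mu_c$, and $\Phi$ is the desired isomorphism.

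Alternatively — and this is probably the cleanest route to present — I would verify the semicircle law directly via moments: the monic orthogonal polynomials for a zero-diagonal Jacobi matrix with off-diagonal $\tfrac12$ are (up to normalization) the Chebyshev polynomials $U_n$ of the second kind, whose orthogonality measure on $[-1,1]$ is exactly $\tfrac{2}{\pi}\sqrt{1-x^2}\,dx$; equivalently, $\int x^{2k}\,d\mu_c(x)=4^{-k}C_k$ (Catalan numbers) matches the number of Dyck-type paths counted by $\langle\delta_0,(\operatorname*{Re} S)^{2k}\delta_0\rangle$, while odd moments vanish by parity. Since a compactly supported measure is determined by its moments, this identifies $\mu$ with $\mu_c$. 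The main obstacle is purely bookkeeping: getting the branch of the square root (equivalently the sign in Stieltjes inversion, equivalently the normalization $\tfrac2\pi$ versus $\tfrac1{2\pi}$) correct, and noting the cyclicity of $\delta_0$ so that Lemma \ref{lemr2} applies on all of $l^2(\bn_0)$ rather than a proper subspace; there is no serious analytic difficulty.
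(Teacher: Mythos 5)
Your proof is correct, but it takes a genuinely different route from the paper's. The paper constructs the unitary $\Phi$ completely explicitly as a composition $J_3\circ J_2\circ J_1$: first the Fourier sine-series isomorphism $\delta_k\mapsto\sin((k+1)x)$ onto $L^2([0,\pi],\tfrac{2}{\pi}\,dx)$, under which $\preal S$ becomes multiplication by $\cos x$ (via the product-to-sum identity), then division by $\sin x$ to move the weight into the measure and send $\delta_0$ to the constant function $1$, and finally the substitution $x\mapsto\cos^{-1}x$ to land on $L^2(\mu_c)$. You instead invoke the abstract cyclic-vector form of the spectral theorem (Lemma \ref{lemr2}) to obtain a unitary onto $L^2(\mu)$ for an a priori unknown spectral measure $\mu$, and then identify $\mu=\mu_c$ either by the continued-fraction/Stieltjes-inversion computation of the Borel transform or by matching Catalan-number moments (the measure being compactly supported, hence moment-determinate). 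Both arguments are complete; what the paper's version buys is an explicit formula for $\Phi$ --- it is, in disguise, the Chebyshev-$U_n$ expansion you allude to --- with no need for Stieltjes inversion or moment determinacy, whereas your version buys generality (it is the standard Jacobi-matrix template) and, notably, front-loads exactly the Borel-transform formula $F(z)=-2z\bigl(1-\sqrt{1-z^{-2}}\bigr)$ and the Catalan moment identity that the paper itself derives and uses only later, in the proof of Theorem \ref{thmucp}. The two caveats you flag (the branch of the square root and the cyclicity of $\delta_0$) are real but handled correctly in your sketch.
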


\begin{proof}
The proof in \cite{Voi85} involves some heavy machinery (the Helton-Howe formula \cite{HeHo73}), so we will give here a more elementary proof. 

First define the isometric isomorphism $J_1:l^2(\bn_0)\rightarrow L^2([0,\pi],\frac{2}{\pi}\,dx)$,
$$(J_1\delta_k)(x):=\sin(k+1)x,\quad (x\in[0,\pi],k\geq0).$$
Elementary Fourier theory (for odd functions on $[-\pi,\pi]$) shows that this is an isometric isomorphism. We claim that
\begin{equation}
	J_1\preal S=M_{\cos x}J_1
	\label{eqj1}
\end{equation}
where $M_{\cos x}$ is the operator of multiplication by $\cos x$ on $L^2([0,\pi])$. We have
$$J_1\preal S\delta_0=J_1\frac12\delta_1=\frac12\sin(2x)=\cos x\sin x=M_{\cos x}J_1\delta_0.$$
For $k>0$:
$$J_1\preal S\delta_k=J_1\frac12(\delta_{k+1}+\delta_{k-1})=\frac12(\sin (k+2)x+\sin kx)=\sin(k+1)x \cos x=M_{\cos x}J_1\delta_k.$$

Since $J_1\delta_0=\sin x$, we define $J_2:L^2([0,\pi])\rightarrow L^2([0,\pi],\frac{2}{\pi}\sin^2x\,dx)$ 
$$(J_2f)(x)=f(x)\frac{1}{\sin x},\quad(x\in[0,\pi]).$$
\end{proof}
Then $J_2(J_1\delta_0)=1$, the constant function $1$. And clearly $M_{\cos x}J_2=J_2M_{\cos x}$, and $J_2$ is an isometric isomorphism.

Finally, define the change of variable operator $J_3:L^2([0,\pi],\frac2\pi\sin^2 x\,dx)\rightarrow L^2([-1,1],\mu_c)$
$$(J_3f)(x)=f(\cos^{-1}x),\quad(x\in[-1,1]).$$
Then $J_3M_{\cos x}=M_xJ_3$. To check that $J_3$ is an isometry, use the change of variable $\cos^{-1} x=y$, i.e. $x=\cos y$:
$$\int_{0}^{\pi}f(y)\sin^2 y\,dy=\int_{1}^{-1}f(\cos^{-1}x)(1-x^2)\frac{-1}{\sqrt{1-x^2}}\,dx=\int_{-1}^1f(\cos^{-1}x)\sqrt{1-x^2}\,dx.$$
$J_3$ is also bijective, because its inverse can be explicitly computed: change back the variable.

The desired isometric isomorphism is $\Phi:=J_3\circ J_2\circ J_1$.

\begin{definition}\label{deflapm}
Let $\mu_c$ the semicircle measure as in Lemma \ref{lemvoic}. Define the following operators on $L^2(\mu_c)$:
$$E(f)=\int_{-1}^1f(x)\,d\mu_c(x)=\frac{2}{\pi}\int_{-1}^1f(x)\sqrt{1-x^2}\,dx,(\mbox{ considered as a constant function })\quad(f\in L^2(\mu_c)).$$
$$(A_\Delta f)(x)=2\sqrt{N}xf(x)+E(f)\quad(x\in[-1,1],f\in L^2(\mu_c).$$
$$(M_xf)(x)=xf(x),\quad(x\in[-1,1],f\in L^2(\mu_c)).$$
\end{definition}

\begin{theorem}\label{thlapm}
For $N\geq 2$, the Laplacian operator $\Delta$ is unitarily equivalent to the operator 
$$((N+1)I-A_\Delta)\oplus\oplus_{n\in\bn} ((N+1)I-2\sqrt{N}M_x),\mbox{ on the Hilbert space } \oplus_{n\geq 0}L^2(\mu_c).$$

For $N=1$, the Laplacian operator $\Delta$ is unitarily equivalent to 
$$(N+1)I-A_\Delta\mbox{ on } L^2(\mu_c).$$
\end{theorem}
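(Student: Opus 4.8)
The plan is to deduce Theorem \ref{thlapm} directly from the two preceding results, Theorem \ref{propcycl} and Lemma \ref{lemvoic}, together with a small bookkeeping step about how many cyclic summands occur. First I would recall that, by Proposition \ref{proplapf}, $\Delta$ acting on $l^2(V)$ is unitarily equivalent to $\mathcal W\Delta\mathcal W^*$ on $\mathcal F$. By Theorem \ref{propcycl}(i) the subspaces $\mathcal H_\Omega$ and $\mathcal H_{i_1\dots i_n}$ (over words with $i_n\neq 1$) are mutually orthogonal, they span $\mathcal F$, and they are cyclic — hence reducing — for the selfadjoint operator $\mathcal W\Delta\mathcal W^*$; so $\mathcal W\Delta\mathcal W^*$ is the orthogonal direct sum of its restrictions to these subspaces. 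By Theorem \ref{propcycl}(ii) the restriction to $\mathcal H_\Omega$ is unitarily equivalent to $D_\Omega=(N+1)I-2\sqrt N\operatorname{Re}S-P_{\delta_0}$ and the restriction to each $\mathcal H_{i_1\dots i_n}$ to $D=(N+1)I-2\sqrt N\operatorname{Re}S$, both built from the unilateral shift $S$ of multiplicity one on (a copy of) $l^2(\mathbb N_0)$. So it only remains to move $D$ and $D_\Omega$ onto $L^2(\mu_c)$ and to count the copies of $D$.

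Next I would apply the isometric isomorphism $\Phi\colon l^2(\mathbb N_0)\to L^2(\mu_c)$ of Lemma \ref{lemvoic}, which satisfies $\Phi\operatorname{Re}S=M_x\Phi$ and $\Phi\delta_0=1$. Conjugating, $\Phi D\Phi^*=(N+1)I-2\sqrt N M_x$ on $L^2(\mu_c)$ at once. For $D_\Omega$ the only extra point is to identify $\Phi P_{\delta_0}\Phi^*$: since $\Phi$ is unitary this is the rank-one projection onto $\Phi\delta_0=1$, and because $\mu_c$ is a probability measure (indeed $\tfrac2\pi\int_{-1}^1\sqrt{1-x^2}\,dx=1$) the constant function $1$ is a unit vector, so that projection sends $f$ to $\langle 1,f\rangle_{L^2(\mu_c)}\,1=\bigl(\int_{-1}^1 f\,d\mu_c\bigr)1=E(f)$. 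Hence $\Phi D_\Omega\Phi^*=(N+1)I-2\sqrt N M_x-E(\cdot)=(N+1)I-A_\Delta$ by Definition \ref{deflapm}.

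Then I would count the cyclic subspaces. For $N\geq 2$ and each $n\geq 1$ there are $N^{n-1}(N-1)\geq 1$ words $i_1\dots i_n$ with $i_n\neq 1$, so the index set of the subspaces $\mathcal H_{i_1\dots i_n}$ is countably infinite; therefore $\mathcal W\Delta\mathcal W^*\cong D_\Omega\oplus\bigoplus_{n\in\mathbb N}D$, which by the previous paragraph is $\bigl((N+1)I-A_\Delta\bigr)\oplus\bigoplus_{n\in\mathbb N}\bigl((N+1)I-2\sqrt N M_x\bigr)$ on $\bigoplus_{n\geq 0}L^2(\mu_c)$. For $N=1$ there are no words with last letter $\neq 1$, so $\mathcal F=\mathcal H_\Omega$ and $\Delta\cong D_\Omega\cong(N+1)I-A_\Delta$ on $L^2(\mu_c)$.

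I do not expect a genuine obstacle here, since essentially all the work is already done in Theorem \ref{propcycl} and Lemma \ref{lemvoic}; this proof is an assembly. The one calculation needing care is the identification $\Phi P_{\delta_0}\Phi^*=E(\cdot)$, which relies on the normalization of $\mu_c$ so that $\|1\|_{L^2(\mu_c)}=1$; and one should note in passing that the cyclic decomposition of Theorem \ref{propcycl}(i) is into reducing subspaces, so forming the direct sum of the restricted operators is legitimate, and that the minor discrepancy in index conventions ($l^2(\mathbb N_0)$ versus the $l^2(\mathbb N)$ written in Theorem \ref{propcycl}) is immaterial since all the Hilbert spaces involved are separable and infinite-dimensional.
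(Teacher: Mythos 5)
Your proposal is correct and follows essentially the same route as the paper: the paper's proof likewise assembles Theorem \ref{propcycl} and Lemma \ref{lemvoic}, with the only point singled out for verification being that the unitary of Lemma \ref{lemvoic} carries $P_{\delta_0}$ to $E$ because it sends $\delta_0$ to the constant function $1$. Your additional remarks on the normalization $\|1\|_{L^2(\mu_c)}=1$ and the count $N^{n-1}(N-1)$ of cyclic summands simply make explicit what the paper leaves implicit.
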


\begin{proof}
The conclusion follows directly from Theorem \ref{propcycl} and Lemma \ref{lemvoic}. The only thing that remains to be proved is that the projection $P_{\delta_0}$ in $l^2(\bn_0)$ is mapped onto the operator $E$ on $L^2(\mu_c)$ by the unitary equivalence in Lemma \ref{lemvoic}. But this is clear since this unitary maps $\delta_0$ into the constant function $1$.  
\end{proof}

\begin{proposition}\label{propspec}
The spectrum of the Laplacian $\Delta$ is $[N+1-2\sqrt{N},N+1+2\sqrt{N}]$. The Laplacian $\Delta$ has no eigenvalues. 
\end{proposition}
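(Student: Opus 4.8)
The plan is to read off both statements directly from the unitary equivalence established in Theorem \ref{thlapm}. Since the spectrum and the point spectrum are invariant under unitary equivalence, it suffices to analyze the model operator $((N+1)I-A_\Delta)\oplus\bigoplus_{n\in\bn}((N+1)I-2\sqrt{N}M_x)$ on $\bigoplus_{n\geq 0}L^2(\mu_c)$, and by the affine substitution $\lambda\mapsto N+1-\lambda$ (or $N+1-2\sqrt N t$) it is enough to understand the multiplication operator $M_x$ and the rank-one perturbation $A_\Delta = 2\sqrt N M_x + E$ on $L^2(\mu_c)$. For $N=1$ only the $A_\Delta$ summand is present, so the same analysis handles that case.

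First I would compute the spectrum of $M_x$ on $L^2(\mu_c)$: for a multiplication operator by the identity function on $L^2$ of a Borel measure $\mu_c$, the spectrum equals the \emph{essential support} of $\mu_c$, which here is $[-1,1]$ since $\mu_c$ has the strictly positive density $\frac2\pi\sqrt{1-x^2}$ on $(-1,1)$. Hence $(N+1)I-2\sqrt N M_x$ has spectrum $[N+1-2\sqrt N, N+1+2\sqrt N]$, and this already exhausts the claimed interval. For the $A_\Delta$ summand, $E$ is a rank-one (indeed rank-one self-adjoint) operator, so $A_\Delta$ is a bounded self-adjoint perturbation of $2\sqrt N M_x$ by a compact (finite-rank) operator; by Weyl's theorem on the invariance of the essential spectrum, $\sigma_{\mathrm{ess}}(A_\Delta)=\sigma_{\mathrm{ess}}(2\sqrt N M_x)=[-2\sqrt N,2\sqrt N]$, so the essential spectrum of $(N+1)I-A_\Delta$ equals the same interval $[N+1-2\sqrt N,N+1+2\sqrt N]$. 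The only thing left is to rule out eigenvalues of $(N+1)I-A_\Delta$ outside that interval (there can be at most one, by rank-one perturbation theory / interlacing); I would handle this via the Borel transform, since by Lemma \ref{lemr1} and the Aronszajn--Donoghue theory of rank-one perturbations an eigenvalue at $\lambda$ forces $F_{\mu_c}$ (or the appropriately normalized Borel transform of the spectral measure of $\operatorname{Re}S$ relative to $\delta_0$) to attain a specific value; computing $F_{\mu_c}(z)=\int_{-1}^1\frac{1}{x-z}\,d\mu_c(x)$ explicitly — it equals $2(z-\sqrt{z^2-1})$ for real $z$ outside $[-1,1]$ with the branch chosen so $|z-\sqrt{z^2-1}|$ is small — and checking that the resulting equation has no solution in $\mathbb R\setminus[-1,1]$ shows no eigenvalue escapes. (The sign $+E$, i.e. a positive rank-one perturbation, is what pushes the potential bound state in the direction where the eigenvalue equation fails; a careful sign check is needed here.)

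Next I would prove there are no eigenvalues inside the open interval either. For the summands $(N+1)I-2\sqrt N M_x$ this is immediate: $M_x$ on $L^2(\mu_c)$ has no point spectrum because $\mu_c(\{x_0\})=0$ for every $x_0$ (the measure is absolutely continuous), so $M_x f = x_0 f$ forces $f=0$ in $L^2(\mu_c)$. For $(N+1)I-A_\Delta$, an eigenfunction $f$ at interior point $\lambda$ would satisfy $(x-c)f(x) = -\tfrac{1}{2\sqrt N}E(f)$ a.e. for the corresponding constant $c\in(-1,1)$; if $E(f)=0$ then $f=0$ as before, and if $E(f)\neq 0$ then $f(x)=\text{const}/(x-c)$, which is not in $L^2(\mu_c)$ near $x=c$ because $\mu_c$ has positive density there and $\int \frac{1}{(x-c)^2}\sqrt{1-x^2}\,dx=\infty$ — contradiction. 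Combining: the union of the spectra of all summands is exactly $[N+1-2\sqrt N,N+1+2\sqrt N]$, and none of them contributes a point mass, so $\Delta$ has this interval as spectrum and empty point spectrum.

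The main obstacle I expect is the verification that the rank-one perturbation $A_\Delta$ produces \emph{no} bound state outside $[-1,1]$ — this is the one place where the specific coefficients $2\sqrt N$ and the normalization of $E$ (the fact that $E$ has ``strength'' exactly matching $\langle\delta_0,\cdot\rangle\delta_0$ under $\Phi$, with $\Phi(\delta_0)=1$) genuinely matter, and it requires an honest computation of the Borel transform $F_{\mu_c}$ together with a sign analysis of the eigenvalue equation $1 = -\tfrac{1}{2\sqrt N}F_{\mu_c}\!\big(\tfrac{\lambda-(N+1)}{-2\sqrt N}\big)$ (or its correct analogue). Everything else — the spectrum of a multiplication operator, Weyl's theorem, absence of point masses for an a.c. measure — is standard and quick.
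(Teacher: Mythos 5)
Your overall strategy is sound and reaches the same conclusion, but it is packaged differently from the paper's proof, and the two differences are worth spelling out. For the unperturbed summands $(N+1)I-2\sqrt{N}M_x$ you and the paper argue identically (essential range of an atomless measure). For the perturbed summand $(N+1)I-A_\Delta$, the paper works directly with the resolvent equation: it solves $(\lambda-A_\Delta)f=g$ explicitly, which requires showing $I_\lambda:=\int_{-1}^1(\lambda-2\sqrt{N}x)^{-1}\,d\mu_c\neq 1$ for $|\lambda|>2\sqrt{N}$ (proved by monotonicity plus the computation $I_{2\sqrt{N}}=1/\sqrt{N}\leq 1$), and it exhibits an explicit $g$ not in the range for $\lambda$ interior. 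Your route replaces the interior step by Weyl's theorem on essential spectra (cleaner, and automatic), and replaces the exterior step by the Aronszajn--Krein eigenvalue criterion $F(\lambda_0)=-1/\alpha$ with $\alpha=1/(2\sqrt{N})$. These are in fact the \emph{same} computation in disguise: $I_\lambda=-\frac{1}{2\sqrt{N}}F\bigl(\lambda/(2\sqrt{N})\bigr)$, so ``$I_\lambda\neq 1$'' is exactly ``$F(\lambda_0)\neq -2\sqrt{N}$''. The step you defer does close: since $\alpha>0$ only $\lambda_0>1$ can carry a bound state, $F$ is negative and increasing from $F(1)=-2$ to $0$ on $(1,\infty)$, and $-2\sqrt{N}\leq -2$ for $N\geq 1$, so the equation has no solution in the open exterior. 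Two small corrections: your quoted formula $F_{\mu_c}(z)=2(z-\sqrt{z^2-1})$ has the wrong sign (for $z>1$ the integrand $1/(x-z)$ is negative, and the paper's \eqref{eqc3} gives $F(z)=-2(z-\sqrt{z^2-1})$ there); and your interior no-eigenvalue argument should also cover the endpoints $\lambda_0=\pm 1$, where Aronszajn--Krein does not apply because $\pm1\in\sigma(M_x)$ --- but the same non-integrability of $\sqrt{1-x^2}/(x\mp 1)^2$ rules those out too, as in the paper.
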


\begin{proof}
By Theorem \ref{thlapm}, it is enough to analyze the operators $A_\Delta$ and $2\sqrt{N}M_x$ as in Definition \ref{deflapm}.
The spectrum of $2\sqrt{N}M_x$ is $[-2\sqrt{N},2\sqrt{N}]$. Since the measure $\mu_c$ has no atoms, the operator $2\sqrt{N}M_x$ has no eigenvalues. 

We turn now to the operator $A_\Delta=2\sqrt{N}M_x+E$. 
Since $A_\Delta$ is selfadjoint, the spectrum is contained in $\br$.

We will need the following
\begin{equation}
	I_\lambda:=\frac{2}{\pi}\int_{-1}^1\frac{1}{\lambda-2\sqrt{N}x}\,d\mu_c\neq1,\quad\mbox{ if }|\lambda|>2\sqrt{N}, \lambda\in\br.
	\label{eqint}
\end{equation}
To prove \eqref{eqint}, note that the function $\lambda\mapsto I_\lambda$ is decreasing on the intervals $(-\infty,-2\sqrt{N}]$ and $[2\sqrt{N},\infty)$. Also, by a change of variable $x=-y$ we obtain that $I_{-\lambda}=-I_\lambda$.

Therefore it is enough to compute 
$$I_{2\sqrt{N}}=\frac{2}{\pi}\int_{-1}^1\frac{1-x^2}{2\sqrt{N}-2\sqrt{N}x}\,dx=\frac{1}{\pi\sqrt{N}}\int_{-1}^1\sqrt{\frac{1+x}{1-x}}\,dx= (\mbox{ use the substitution}\sqrt{\frac{1+x}{1-x}}=u)$$
$$=\frac{4}{\pi\sqrt{N}}\int_0^\infty\frac{u^2}{(u^2+1)^2}\,du= (\mbox{ use the substitution }u=\tan t) =\frac{4}{\pi\sqrt{N}}\int_{0}^{\pi/2}\sin^2 t\,dt=\frac{1}{\sqrt{N}}\leq 1.$$

This implies \eqref{eqint}.

We prove that for $|\lambda|>2\sqrt{N}$ the operator $\lambda I-A_\Delta$ has a bounded inverse.
Let $g\in L^2(\mu_c)$. We want to solve 
$$\lambda f(x)-2\sqrt{N}xf(x)-E(f)=g(x),\quad(x\in[-1,1]).$$
Equivalently
\begin{equation}
f(x)(\lambda-2\sqrt{N}x)-E(f)= g(x)	
	\label{eqf1}
\end{equation}

Integrating with respect to $\mu_c$ we obtain 
$$E(f)(1-I_\lambda)=\int_{-1}^1\frac{g(x)}{\lambda-2\sqrt{N}x}\,d\mu_c(x).$$
With \eqref{eqint}
\begin{equation}
	E(f)=\frac{1}{1-I_\lambda}E\left(\frac{g(x)}{\lambda-2\sqrt{N}x}\right).
	\label{eqf2}
\end{equation}

Then, from \eqref{eqf1} and \eqref{eqf2} we get
\begin{equation}
	f(x)=\frac{g(x)+E(f)}{\lambda-2\sqrt{N}x}.
	\label{eqf3}
\end{equation}
Since $|\lambda|>2\sqrt{N}$ the function $\frac{1}{\lambda-2\sqrt{N}x}$ is bounded on $[-1,1]$. Therefore, from \eqref{eqf2}, using H\"older's inequality 
$\|E(f)\|_2\leq C\|g\|_2$, with $C$ depending only on $\lambda$. Then, from \eqref{eqf3}, $\|f\|_2\leq C'(\|g\|_2+\|E(f)\|_2)\leq C''\|g\|_2$, with $C''$ depending only on $\lambda$. This shows that the operator $\lambda I-A_\Delta$ has a bounded inverse.

Next we take $\lambda\in(-2\sqrt{N},2\sqrt{N})$ and we show that the operator $\lambda I-A_\Delta$ is not onto. Take 
$$g(x)=\left\{\begin{array}{cc} \lambda-2\sqrt{N}x,&\mbox{ if }|x-\frac{\lambda}{2\sqrt N}|>\frac14,\\
0,&\mbox{ if }|x-\frac{\lambda}{2\sqrt N}|\leq\frac14.\end{array}\right.$$
Clearly $g$ is in $L^2(\mu_c)$. Suppose there exists $f\in L^2(\mu_c)$ such that $\lambda f-A_\Delta f=g$. Then, as in \eqref{eqf3}, 
\begin{equation}\label{eqf4}
f-\frac{g}{\lambda-2\sqrt{N}x}=\frac{E(f)}{\lambda-2\sqrt{N}x},\quad(x\in[-1,1]).
\end{equation}
But the two functions on the left are in $L^2(\mu_c)$ (since $g$ is zero around the singularity $x=\lambda/(2\sqrt{N})$), while the one on the right is not, unless $E(f)=0$. But if $E(f)=0$ then \eqref{eqf4} implies that 
$f(x)=0$ around $x=\lambda/(2\sqrt{N})$ and $f(x)=1$ otherwise. Then $E(f)>0$, a contradiction.

Thus the spectrum contains $(-2\sqrt{N},2\sqrt{N})$ and since it is closed, it follows that it is equal to $[-2\sqrt{N},2\sqrt{N}]$.

Also, if $\lambda$ is an eigenvalue for $A_\Delta$, then as we have seen above, $|\lambda|\leq 2\sqrt{N}$. If $|\lambda|\leq2\sqrt{N}$, and $f$ is an eigenvector, then 
$$f(x)=\frac{E(f)}{\lambda-2\sqrt{N}x},\quad(x\in[-1,1]).$$
But the function on the right is not $\mu_c$-square integrable (at $x=\lambda/(2\sqrt{N})$), unless $E(f)=0$, in which case $f\equiv 0$. Thus there are no eigenvalues.

\end{proof}

\subsection{Rank-one perturbations}\label{rank}

 Since our graph considerations are global in nature, on the face of things, it may seem surprising that the spectral theory of associated graph Laplacians will involve rank-one perturbations of selfadjoint operators; usually thought of as ``local''. Similarly, one might think that such perturbations might be ``harmless'', but nonetheless they can be "drastic" from a spectral theoretic viewpoint; and they are intimately tied in with such deep theories as Krein's spectral shift formula, Aronszajn-Donoghue's theory for rank-one perturbations, and the generation of singular continuous spectra. Rank-one perturbations further explain resonances and spectral-shift near Landau levels in atomic physics. They can generate large point spectrum, as well as singular continuous spectrum, as measured by a computation of Hausdorff dimensions. The reader may get a sense of these intricacies from following references \cite{AKK04, AKK05, ASM07, BBR07, DSS07, KuWa01, Pol98, DJLS96, DKS06}.

\begin{theorem}\label{thmucp}
Define the measure $\mu_{c+p}$ on $[-1,1]$ by
\begin{equation}
	d\mu_{c+p}:=\frac{\frac{2}{\pi}\sqrt{1-x^2}}{1-2N^{-1/2} x+N^{-1}}\,dx.
	\label{eqmucp}
\end{equation}

The restriction of the operator $\W\Delta\W^*$ to $\H_\Omega$ (see Definition \ref{defh_i}, Proposition \ref{proplapf} and Theorem \ref{propcycl}) is unitarily equivalent to the operator of multiplication by $N+1-2\sqrt{N}x$ on $L^2(\mu_{c+p})$.
\end{theorem}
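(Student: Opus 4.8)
The plan is to reduce the statement to a concrete computation about the operator $D_\Omega = (N+1)I_{l^2(\bn_0)} - 2\sqrt{N}\operatorname*{Re}S - P_{\delta_0}$, which by Theorem \ref{propcycl}(ii) is unitarily equivalent to $\W\Delta\W^*$ restricted to $\H_\Omega$. It suffices to show that the restriction of $2\sqrt{N}\operatorname*{Re}S + P_{\delta_0}$ to $l^2(\bn_0)$ is unitarily equivalent to multiplication by $2\sqrt{N}x$ on $L^2(\mu_{c+p})$, via a unitary sending $\delta_0$ to the constant function $1$; then conjugating by $(N+1)I - (\cdot)$ gives the claim, since $N+1 - 2\sqrt{N}x$ is the image of $(N+1)I - 2\sqrt{N}M_x$. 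So the whole problem is: identify the spectral representation of the rank-one perturbation $2\sqrt{N}\operatorname*{Re}S + P_{\delta_0}$ with cyclic vector $\delta_0$.

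The natural tool is the Borel transform (see \eqref{eqbor} and Lemma \ref{lemr1}): the spectral measure of a selfadjoint operator $B$ at a cyclic unit vector $v_0$ is determined by $F(z) = \ip{v_0}{(B-z)^{-1}v_0}$. First I would compute the Borel transform $F_0(z)$ of the \emph{unperturbed} spectral measure, i.e. of $2\sqrt{N}\operatorname*{Re}S$ at $\delta_0$; by Lemma \ref{lemvoic}, that measure is the pushforward of $\mu_c$ under $x\mapsto 2\sqrt{N}x$, so $F_0(z) = \frac{2}{\pi}\int_{-1}^1 \frac{\sqrt{1-x^2}}{2\sqrt{N}x - z}\,dx$, a classical integral equal to $\frac{1}{2N}\bigl(z - \sqrt{z^2 - 4N}\bigr)$ with the appropriate branch. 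Then I would use the standard rank-one perturbation formula (Aronszajn–Krein; this is exactly the algebra already carried out in the proof of Proposition \ref{propspec}, equations \eqref{eqf1}–\eqref{eqf3}): if $B = B_0 + \ip{v_0}{\cdot}v_0$ and $v_0$ is cyclic for $B_0$, then the Borel transform of the perturbed measure is $F(z) = \dfrac{F_0(z)}{1 - F_0(z)}$. Substituting the explicit $F_0$, I would simplify to obtain a function whose imaginary part on the real line reads off, via the Stieltjes inversion formula, the density $\dfrac{\frac{2}{\pi}\sqrt{1 - (x/(2\sqrt N))^2}}{\,1 - N^{-1/2}x + N^{-1}\,}$ on $[-2\sqrt N, 2\sqrt N]$; changing variables $x \mapsto 2\sqrt N x$ turns this into precisely $\mu_{c+p}$ as in \eqref{eqmucp}. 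Finally, Lemma \ref{lemr2} promotes the equality of spectral measures to the asserted unitary equivalence with the multiplication operator, since $\delta_0$ is cyclic.

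Two points need care. One must check that $\delta_0$ is genuinely cyclic for $B = 2\sqrt N\operatorname*{Re}S + P_{\delta_0}$ on all of $l^2(\bn_0)$ — this follows because $\delta_0$ is cyclic for $\operatorname*{Re}S$ (clear from the tridiagonal action of $\operatorname*{Re}S$ on the basis $\{\delta_k\}$, which lets one generate $\delta_1, \delta_2, \dots$ inductively) and the perturbation only adds a multiple of $\delta_0$, so the cyclic subspace is unchanged. The other point, and the main obstacle, is getting the branch of the square root in $F_0(z)$ right so that $F_0$ maps $\bc^+$ to $\bc^+$ (equivalently, so that $F_0$ is a genuine Herglotz/Borel transform), and then verifying that $1 - F_0(z) \neq 0$ for $z \in \bc\setminus\br$ so that the perturbation formula is valid and the resulting $F$ is again Herglotz with no extra point mass — the computation $I_\lambda = N^{-1/2} \le 1$ from the proof of Proposition \ref{propspec} is exactly the boundary input that guarantees $1 - F_0$ does not vanish and that the denominator $1 - N^{-1/2}x + N^{-1}$ in \eqref{eqmucp} stays positive on $[-1,1]$. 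Once the Herglotz bookkeeping is in order, Stieltjes inversion and the change of variables are routine.
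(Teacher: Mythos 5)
Your proposal follows essentially the same route as the paper's proof: reduce via Theorem \ref{propcycl} to a rank-one perturbation of (a multiple of) $\operatorname*{Re} S$ with cyclic vector $\delta_0$, compute the Borel transform of the semicircle law, apply the Aronszajn--Krein formula, recover the density from boundary values as in Theorem \ref{thbt}, and promote equality of spectral measures to unitary equivalence via cyclicity and Lemma \ref{lemr2}; the paper merely keeps the coupling $\alpha=\tfrac{1}{2\sqrt N}$ on the unscaled $\operatorname*{Re} S$ and obtains the Borel transform from the Catalan generating function rather than from the closed-form integral. One caution on bookkeeping: your integral for $F_0$ is written in the convention \eqref{eqbor}, $\int (t-z)^{-1}d\mu(t)$, for which the correct Aronszajn--Krein formula for the perturbation $+P_{\delta_0}$ is $F_0/(1+F_0)$ (this is \eqref{eqc4} with $\alpha=1$), whereas your closed form $\frac{1}{2N}\bigl(z-\sqrt{z^2-4N}\bigr)$ and your denominator $1-F_0$ belong to the opposite convention $\int(z-t)^{-1}d\mu(t)$; mixing the two flips the sign of the cross term and would yield $1+2N^{-1/2}x+N^{-1}$ instead of the denominator in \eqref{eqmucp}, and relatedly your intermediate density on $[-2\sqrt N,2\sqrt N]$ should carry denominator $1-x/N+N^{-1}$ (with a Jacobian factor), not $1-N^{-1/2}x+N^{-1}$. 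These are consistency slips, not gaps in the method: executed in a single convention the computation lands exactly on $\mu_{c+p}$.
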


\begin{remark}
 Our plan is to make use of the moments of the operator $\preal S$ from Lemma
\ref{lemvoic}. This refers to the cyclic vector introduced in the lemma.
   As we will see, \eqref{eqc2} below, up to a geometric factor the moment numbers
are the Catalan numbers. Since the generating function for the Catalan
numbers is known, we are able to use this in analyzing the spectral picture
for our particular rank-one perturbations of $\preal S$.
   Details: We begin with a fundamental principle (Theorem \ref{thbt}) in spectral
theory. The idea behind this is based on a fundamental idea dating back to
Marshall Stone \cite{Sto32}, and valid generally for the spectrum of selfadjoint
operators $A$ in Hilbert space: One is interested in the spectral measure of a
given selfadjoint operator $A$, and its support (= the spectrum of $A$.) Stone
\cite{Sto32} suggested (see Lemma \ref{lemr1}) that spectral data may be computed from the study of the
resolvent operator $R(z) = (A - z )^{-1}$  of $A$, thinking here of $R(z)$ as an
operator valued analytic function. Since $A$ is selfadjoint, $R(z)$ is well
defined as a bounded operator function defined in the union of the upper and
the lower halfplane. It is analytic in the two halfplanes, but fails to
continue analytically across the real axis precisely at the spectrum of $A$.
Evaluation of $R(z)$ in a state yields the Borel transform $F(z)$ of the
corresponding spectral measure (Lemma \ref{lemr1}), and the same analytic
continuation/reflection principle applies, hence Theorem \ref{thbt}.

  As we make a continuation in $z$ across a point $x$ on the real line, Stone
suggested (see also Theorem \ref{thbt}) that the singular points of the continuation of $F(\cdot)$ occur
precisely when $x$ is in the spectrum of $A$. In the absolutely continuous part
of the spectrum, the limit from the upper halfplane of the imaginary part of
$F$ is the Radon Nikodym derivative of the spectral measure.

This idea was further developed in mathematical physics under the name ``edge
of the wedge'', see e.g., \cite{Rud71}.
\end{remark}

\begin{proof}{\it ( of Theorem \ref{thmucp})}
By Theorem \ref{propcycl}, the operator is unitarily equivalent to $(N+1)I-2\sqrt{N}(\preal S+\frac{1}{2\sqrt{N}}P_{\delta_0})$. We know the spectral picture for the operator $\preal S$. It is given by the semicircular law in Lemma \ref{lemvoic}. We are dealing here with a rank one perturbation of this operator. We follow the ideas from \cite{Sim95}.

The moments of the semicircular measure are given by the Catalan numbers (see \cite{RPB07}).

Let $C_n:=\frac{1}{n+1}\left(\begin{array}{c}2n\\n
\end{array}\right)$, be the Catalan numbers.

Then their generating function is (see \cite{Erd06}):

\begin{equation}
	\mathcal C(x):=\sum_{n\geq 0}C_nx^n=\frac{1-\sqrt{1-4x}}{2x}
	\label{eqc1}
\end{equation}
By the ratio test the series is convergent if $|x|<\frac14$.

The relation between the Catalan numbers and the moments of $\mu_c$ is (see \cite{RPB07}):
\begin{equation}
	\int_{-1}^1 x^{2n}\,d\mu_c=\frac{2}{\pi}\int_{-1}^1x^{2n}\sqrt{1-x^2}\,dx=\frac{1}{2^{2n}}C_n
	\label{eqc2}
\end{equation}

As follows from e.g., \cite{GLS07,RPB07} the stated moment relations for the Catalan numbers may be derived by induction from the following recursive relations
\begin{equation}
	C_0=1,\quad C_{k+1}=\sum_{n=0}^kC_nC_{k-n}.
	\label{eqcata1}
\end{equation}	
	In particular $$C_{k+1}=\frac{2(2k+1)}{k+2}C_k,\quad C_k\cong\frac{4^k}{k^{3/2}\sqrt{\pi}}, \quad C_k=\left(\begin{array}{c}2k\\k\end{array}\right)-\left(\begin{array}{c}2k\\k-1\end{array}\right),$$
	and
	\begin{equation}
	\int_{-R}^Rx^{2k}\frac{2}{\pi R^2}\sqrt{R^2-x^2}\,dx=\left(\frac{R}{2}\right)^{2k}C_k.
	\label{eqcata2}
\end{equation}

Note that the odd moments of the semicircle law are all zero. Formula \eqref{eqc1} follows from \eqref{eqcata1} by the following simple derivation
$$\mathcal C(x)=\sum_{n=0}^\infty C_nx^n=1+x\sum_{k=0}^\infty C_{k+1}x^k=\, (\mbox{ by \eqref{eqcata1} })\,=1+x\sum_{k=0}^\infty\sum_{n=0}^kC_nC_{k-n}x^k=1+x\mathcal C(x)^2.$$
Solving the quadratic equation for $\mathcal C(x)$,  
then yields the two solutions
$$\frac{1\pm\sqrt{1-4x}}{2x}.$$
We pick the solution from the ``-'' choice as it satisfies the correct boundary condition $\mathcal C(x)=1$ at $x=0$.

The Borel transform of a measure is 
$$F(z)=\int \frac{1}{x-z}\,d\mu(x)$$

We compute the Borel transform of the semicircular measure $\mu_c$:
$$F(z)=\int\frac{1}{x-z}\,d\mu_c(x)=\frac{-1}{z}\int_{-1}^1\frac{1}{1-\frac xz}\,d\mu_c(x)=
\frac{-1}{z}\sum_{n=0}^\infty\frac{1}{z^n}\int_{-1}^1x^n\,d\mu_c=$$
$$=\frac{-1}{z}\sum_{n=0}^\infty\frac{1}{z^{2n}}\frac{1}{2^{2n}}C_n=\frac{-1}{z}\sum_{n=0}^\infty C_n\left(\frac{1}{4z^2}\right)^n=\frac{-1}{z}\mathcal C\left(\frac{1}{4z^2}\right).$$
Thus
\begin{equation}
	F(z)=\frac{-1}{z}\mathcal C\left(\frac{1}{4z^2}\right)=-2z\left(1-\sqrt{1-\frac1{z^2}}\right)
	\label{eqc3}
\end{equation}
For the moment, we know that the relation holds for $|z|>1$, but we will show it actually holds for $z\in\bc\setminus[-1,1]$.

\begin{theorem}\label{thbt}\cite[Theorem 1.6]{Sim95}
Let $F(z)$ be the Borel transform of a measure $\mu$, obeying $\int\frac{d\lambda}{|\lambda|+1}<\infty$. Then
\begin{enumerate}
\item
The singular part of $\mu$ is supported on the set
$$\{x\,|\, \lim_{\epsilon\downarrow0}\operatorname*{Im} F(x+i\epsilon)=\infty\}.$$
\item The absolutely continuous part of $\mu$ has Radon-Nikodym derivative:
$$\frac{d\mu_{ac}}{dx}=\frac{1}{\pi} \lim_{\epsilon\downarrow0}\operatorname*{Im} F(x+i\epsilon).$$
\end{enumerate} 
\end{theorem}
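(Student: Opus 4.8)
The plan is to reduce the statement to the classical boundary behaviour of Poisson integrals on the upper half-plane, and then to differentiation of measures. First I would note that the hypothesis $\int\frac{d\mu(\lambda)}{\abs{\lambda}+1}<\infty$ is exactly what makes $F(z)=\int(\lambda-z)^{-1}\,d\mu(\lambda)$ a well-defined analytic function on $\bc\setminus\br$, since $\abs{\lambda-z}^{-1}\le C_z(\abs{\lambda}+1)^{-1}$ for $z\notin\br$; this is the only role of that hypothesis. Writing $z=x+i\epsilon$ with $\epsilon>0$ and taking imaginary parts gives
\begin{equation}
\frac1\pi\operatorname*{Im}F(x+i\epsilon)=\int_{\br}P_\epsilon(x-\lambda)\,d\mu(\lambda)=:(P_\epsilon\ast\mu)(x),\qquad P_\epsilon(t):=\frac1\pi\frac{\epsilon}{t^2+\epsilon^2}.
\end{equation}
Thus both assertions become statements about the pointwise boundary limit of the Poisson integral $P_\epsilon\ast\mu$ as $\epsilon\downarrow0$, and the whole proof is a study of this approximate identity applied to the measure $\mu$.

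Next I would take the Lebesgue decomposition $\mu=f\,dx+\mu_s$, with $f=d\mu_{ac}/dx\in L^1_{\mathrm{loc}}$ and $\mu_s\perp dx$, and treat the two pieces separately. For the absolutely continuous part, $P_\epsilon\ast(f\,dx)\to f(x)$ at every Lebesgue point of $f$, hence Lebesgue-a.e., because $\{P_\epsilon\}$ is a standard approximate identity dominated by a decreasing radial $L^1$ function. For the singular part I would prove the elementary lemma: if a positive measure $\nu$ satisfies $\lim_{r\to0}\nu((x-r,x+r))/(2r)=0$, then $(P_\epsilon\ast\nu)(x)\to0$; this follows by splitting $\int P_\epsilon(x-\lambda)\,d\nu(\lambda)$ into the region $\abs{\lambda-x}<\epsilon$ and the dyadic annuli $2^k\epsilon\le\abs{\lambda-x}<2^{k+1}\epsilon$, $k\ge0$, using $P_\epsilon(t)\le\tfrac1\pi\epsilon/t^2$ on each annulus, and summing. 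Since $\mu_s\perp dx$ forces $\lim_{r\to0}\mu_s((x-r,x+r))/(2r)=0$ for Lebesgue-a.e.\ $x$ (Besicovitch differentiation of measures), this gives $(P_\epsilon\ast\mu_s)(x)\to0$ Lebesgue-a.e. Combining the two pieces yields assertion (ii): $\lim_{\epsilon\downarrow0}\tfrac1\pi\operatorname*{Im}F(x+i\epsilon)=f(x)=d\mu_{ac}/dx(x)$ for Lebesgue-a.e.\ $x$, which is the meaning of the Radon–Nikodym derivative statement.

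Finally, for assertion (i) I would show $\mu_s$ is carried by $S:=\{x:\lim_{\epsilon\downarrow0}\operatorname*{Im}F(x+i\epsilon)=\infty\}$, i.e.\ $\mu_s(\br\setminus S)=0$. The key input is the de la Vallée Poussin theorem: for $\mu_s$-a.e.\ $x$ the symmetric derivative satisfies $\lim_{r\to0}\mu((x-r,x+r))/(2r)=+\infty$ (such points carry full $\mu_s$-measure precisely because $\mu_s$ lives on a Lebesgue-null set). At such a point, discarding all of the Poisson integral outside $\abs{\lambda-x}<\epsilon$ and using $P_\epsilon(t)\ge\tfrac1{2\pi\epsilon}$ there gives
\begin{equation}
(P_\epsilon\ast\mu)(x)\ \ge\ \frac1{2\pi\epsilon}\,\mu\bigl((x-\epsilon,x+\epsilon)\bigr)=\frac1\pi\cdot\frac{\mu((x-\epsilon,x+\epsilon))}{2\epsilon}\ \longrightarrow\ \infty,
\end{equation}
so $x\in S$, and (i) follows. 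I expect the main obstacle to be the technical lemma on Poisson integrals of ``thin'' and ``thick'' measures together with the differentiation facts it rests on (de la Vallée Poussin, Besicovitch); everything else is routine manipulation of the approximate identity. Since the statement is precisely \cite[Theorem 1.6]{Sim95}, one may alternatively just invoke that reference, but the argument above is the self-contained route.
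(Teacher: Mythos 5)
The paper does not actually prove this statement: Theorem \ref{thbt} is imported verbatim from \cite[Theorem 1.6]{Sim95} and used as a black box inside the proof of Theorem \ref{thmucp}, so there is no internal argument to compare against. Your self-contained proof is correct and is the standard one (essentially the argument in the cited reference): rewrite $\tfrac1\pi\operatorname*{Im}F(x+i\epsilon)$ as the Poisson integral $P_\epsilon\ast\mu$, split $\mu=f\,dx+\mu_s$, get $P_\epsilon\ast(f\,dx)\to f$ at Lebesgue points, kill $P_\epsilon\ast\mu_s$ where the symmetric derivative vanishes, and use de la Vall\'ee Poussin to force the limit to be $+\infty$ on a carrier of $\mu_s$. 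Two small points worth making explicit: (a) for the approximate-identity step the density $f$ is only in the weighted space $\int f(\lambda)(|\lambda|+1)^{-1}\,d\lambda<\infty$, not in $L^1(\br)$, so you need the tail bound $P_\epsilon(x-\lambda)\le\tfrac{\epsilon}{\pi(\lambda-x)^2}$ together with the weight hypothesis to see that the contribution from $|\lambda-x|>1$ tends to $0$ --- the same dyadic estimate you already use handles this; (b) in (i) your lower bound only gives $\liminf_{\epsilon\downarrow0}\operatorname*{Im}F(x+i\epsilon)=\infty$, but since the quantity is bounded below by something tending to $+\infty$ the limit exists and equals $+\infty$, which is what membership in the set $S$ requires. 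With those remarks the argument is complete.
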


Let $F_\alpha$ be the Borel transform of the spectral measure $\mu_\alpha$ with respect to the cyclic vector $1$, of the rank-one perturbation $M_x+\alpha E$ on $L^2(\mu_c)$. (For us $\alpha=\frac{1}{2\sqrt{N}}$. See Theorem \ref{thlapm}.)

The Aronszajn-Krein formula is (see \cite[Equation (1.13)]{Sim95}):
\begin{equation}
	F_\alpha(z)=\frac{F(z)}{1+\alpha F(z)}
	\label{eqc4}
\end{equation}

Thus we can explicitly compute $F_\alpha(z)$.

Then 
\newcommand{\pim}{\operatorname*{Im}}
$$\pim F_\alpha(z)=\pim \frac{ F(z)(1+\alpha \cj{F(z)}) }{|1+\alpha F(z)|^2}=\frac{\pim F(z)}{|1+\alpha F(z)|^2}.$$

By Theorem \ref{thbt}, we should have 
$\lim_{\epsilon\downarrow0} \frac{1}{\pi}\pim F(x+i\epsilon)=$ Wigner's semicircular distribution on $[-1,1]$, and $0$ outside $[-1,1]$. 

And it remains only to compute $\lim_{\epsilon\downarrow0}|1+\alpha F(x+i\epsilon)|^2$. Assuming that we get finite limits, this should be the distribution of our spectral measure of the perturbation.

Consider the branch of the square root defined on $\bc\setminus\{x\in\br\,|\,x\leq 0\}$. The square root function is holomorphic in this domain.

Take now $z\in\bc\setminus[-1,1]$. Then $z^2\in\bc\setminus[0,1]$, so $\frac{1}{z^2}\in\bc\setminus [1,\infty)$ so $1-\frac{1}{z^2}\in\bc\setminus(-\infty,0]$. Therefore the function in the righthand side of \eqref{eqc3} is analytic in $\bc\setminus[-1,1]$. Also $F$ is analytic in this domain, therefore the formula \eqref{eqc3} is valid for all $z\in\bc\setminus[-1,1]$.

The square root:

As we mentioned above, we define the square root function on $\bc\setminus(-\infty,0]$. An easy computation shows that 
\begin{equation}
	\sqrt{x+iy}=\sqrt{\frac{\sqrt{x^2+y^2}+x}{2}}+i\operatorname*{sgn}(y)\sqrt{\frac{\sqrt{x^2+y^2}-x}{2}}
	\label{eqc5}
	\end{equation}
	
	where 
	$$\operatorname*{sgn}(y)=\left\{\begin{array}{cc}1,&y>0\\
	0,&y=0\\
	-1,&y<0.\end{array}\right.$$

The next Lemma can be checked immediately.

\begin{lemma}\label{lemc1}
If $x<0$ and $z_n\rightarrow x$ then:
\begin{enumerate}
\item If $\pim z_n>0$ for all $n$ then $\sqrt{z_n}\rightarrow i\sqrt{-x}$.
\item If $\pim z_n<0$ for all $n$ then $\sqrt{z_n}\rightarrow -i\sqrt{-x}$.
\end{enumerate}
\end{lemma}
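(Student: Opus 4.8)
The plan is to read off the conclusion directly from the explicit formula \eqref{eqc5} for the chosen branch of the square root on $\bc\setminus(-\infty,0]$. First I would write $z_n=x_n+iy_n$ with $x_n,y_n\in\br$; from $z_n\to x$ we get $x_n\to x$ and $y_n\to 0$. Under the hypothesis of (i) each $y_n>0$, so $\operatorname*{sgn}(y_n)=1$ for all $n$; under that of (ii) each $y_n<0$, so $\operatorname*{sgn}(y_n)=-1$ for all $n$. In either case $z_n$ stays in the domain $\bc\setminus(-\infty,0]$ on which the branch is holomorphic, so $\sqrt{z_n}$ is well defined and \eqref{eqc5} applies to it with $x=x_n$, $y=y_n$.

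Next I would simply pass to the limit in \eqref{eqc5}. Since $x<0$, continuity of $t\mapsto\sqrt t$ on $[0,\infty)$ gives $\sqrt{x_n^2+y_n^2}\to\sqrt{x^2}=|x|=-x$. Hence the real part of $\sqrt{z_n}$, namely $\sqrt{(\sqrt{x_n^2+y_n^2}+x_n)/2}$, tends to $\sqrt{(-x+x)/2}=0$, while the imaginary part $\operatorname*{sgn}(y_n)\sqrt{(\sqrt{x_n^2+y_n^2}-x_n)/2}$ tends to $\operatorname*{sgn}(y_n)\sqrt{(-x-x)/2}=\operatorname*{sgn}(y_n)\sqrt{-x}$. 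Therefore $\sqrt{z_n}\to 0+i\sqrt{-x}=i\sqrt{-x}$ in case (i), and $\sqrt{z_n}\to -i\sqrt{-x}$ in case (ii), which is exactly the claim.

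I expect essentially no obstacle here — the authors are right that it "can be checked immediately." The only subtlety worth recording is that $x$ itself lies on the branch cut $(-\infty,0]$ of the chosen square root, so $\sqrt{\,\cdot\,}$ is \emph{not} continuous at $x$; this is precisely why the two one-sided limits in (i) and (ii) differ by a sign. The sign hypotheses on $\operatorname{Im}z_n$ are exactly what is needed to keep the sequence off the cut and to make $\operatorname*{sgn}(y_n)$ constant, so that the limit of the imaginary part in \eqref{eqc5} is unambiguous.
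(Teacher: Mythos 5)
Your proof is correct and is exactly the "immediate check" the paper has in mind: the paper offers no written proof beyond pointing at the explicit branch formula \eqref{eqc5}, and passing to the limit in that formula as you do (real part $\to 0$, imaginary part $\to \operatorname{sgn}(y_n)\sqrt{-x}$) is the intended argument. Your remark that the sign hypotheses on $\operatorname{Im} z_n$ are what keep $\operatorname{sgn}(y_n)$ constant and the sequence off the branch cut is a worthwhile clarification, since the discontinuity of the chosen branch across $(-\infty,0]$ is precisely the point of the lemma.
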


\begin{lemma}\label{lemc2}
If $x\in[-1,1]$, $z_n=x_n+iy_n\rightarrow x$, and $y_n>0$ for all $n$, then:
\begin{enumerate}
\item If $x>0$, then $\sqrt{1-\frac{1}{z_n^2}}\rightarrow i\sqrt{\frac{1}{x^2}-1}$.
\item If $x<0$, then $\sqrt{1-\frac{1}{z_n^2}}\rightarrow -i\sqrt{\frac{1}{x^2}-1}$.
\end{enumerate}
\end{lemma}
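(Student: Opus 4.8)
The plan is to reduce the whole statement to Lemma \ref{lemc1} by locating, for each large $n$, the open half-plane (upper or lower) containing the point $w_n := 1-\frac{1}{z_n^2}$, and then combining Lemma \ref{lemc1} with continuity of the chosen branch of $\sqrt{\cdot}$. First I would note that in both cases the hypotheses force $x\neq 0$, so $w_n$ is defined for $n$ large (since $z_n\to x\neq 0$), and that $x\in[-1,1]$ gives $1-\frac{1}{x^2}\leq 0$, with equality exactly when $|x|=1$. By continuity $w_n\to 1-\frac{1}{x^2}\leq 0$, so any limit of $\sqrt{w_n}$ must be a square root of a nonpositive real, i.e. purely imaginary; the content of the lemma is to determine its sign.

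The key computation is the sign of the imaginary part of $w_n$. Writing $z_n=x_n+iy_n$ with $y_n>0$, one has
$$\operatorname{Im}(w_n)=-\operatorname{Im}\left(\frac{1}{z_n^2}\right)=-\frac{\operatorname{Im}(\overline{z_n}^{\,2})}{|z_n|^4}=\frac{2x_ny_n}{|z_n|^4},$$
so $\operatorname{Im}(w_n)$ has the same sign as $x_n$. In case (i), $x_n\to x>0$, hence $x_n>0$ and $\operatorname{Im}(w_n)>0$ for all large $n$; in case (ii), $x_n<0$ and $\operatorname{Im}(w_n)<0$ for all large $n$. I would also observe that $w_n$ never lands on the branch cut $(-\infty,0]$: that would force $z_n^2\in(0,1]$, hence $z_n$ real, contradicting $y_n>0$; so $\sqrt{w_n}$ is well defined by the branch used in \eqref{eqc5}.

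Finally I would split on $|x|<1$ versus $|x|=1$. If $|x|<1$ then $1-\frac{1}{x^2}<0$, and applying Lemma \ref{lemc1} with the negative real number there equal to $1-\frac{1}{x^2}$: in case (i), $w_n\to 1-\frac{1}{x^2}$ from the upper half-plane, so Lemma \ref{lemc1}(i) yields $\sqrt{w_n}\to i\sqrt{\frac{1}{x^2}-1}$; in case (ii), $w_n$ approaches from the lower half-plane, so Lemma \ref{lemc1}(ii) yields $\sqrt{w_n}\to -i\sqrt{\frac{1}{x^2}-1}$. If $|x|=1$ then $w_n\to 0$, and since $|\sqrt{w_n}|=\sqrt{|w_n|}\to 0$ the limit is $0=\pm i\sqrt{\frac{1}{x^2}-1}$, consistent with both claims. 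I do not expect a genuine obstacle here; the only points requiring care are getting the sign right in the identity $\operatorname{Im}(w_n)=2x_ny_n/|z_n|^4$, remembering that Lemma \ref{lemc1} only needs the half-plane condition on a tail of the sequence, and treating the degenerate endpoints $|x|=1$ separately.
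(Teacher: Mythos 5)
Your proof is correct and follows essentially the same route as the paper's: determine the sign of $\operatorname{Im}\left(1-\frac{1}{z_n^2}\right)$ from the sign of $x_n$ (the paper argues via $\operatorname{Im}(z_n^2)=2x_ny_n>0$ rather than your explicit formula $2x_ny_n/|z_n|^4$, but it is the same computation) and then invoke Lemma \ref{lemc1}. Your separate treatment of the endpoints $|x|=1$, where $1-\frac{1}{z_n^2}\to 0$ and Lemma \ref{lemc1} does not literally apply, is a small detail the paper omits.
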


\begin{proof}
If $x>0$ then $x_n>0$ for $n$ big, and since $y_n>0$ it follows that $\pim z_n^2>0$. Then $\pim\frac{1}{z_n^2}<0$ so $\pim(1-\frac{1}{z_n^2})>0$. Then (i) follows from Lemma \ref{lemc1}.

(ii) can be obtained similarly.

\end{proof}

With Lemma \ref{lemc2}, for $x>0$, 
$$\lim_{y\downarrow 0} F(x+iy)=-2x\left(1-i\sqrt{\frac{1-x^2}{x^2}}\right)=-2x+ 2i\sqrt{1-x^2}.$$
For $x<0$
$$\lim_{y\downarrow 0} F(x+iy)=-2x\left(1+i\sqrt{\frac{1-x^2}{x^2}}\right)=-2x+2i\sqrt{1-x^2}.$$

Then,
for $x\in[-1,1]$, 
$$\lim_{y\downarrow 0}\frac{1}{\pi}\pim F_\alpha(x+iy)=
\frac{\frac{2}{\pi}\sqrt{1-x^2}}{(1-2\alpha x)^2+4\alpha^2(1-x^2)}=\frac{\frac{2}{\pi}\sqrt{1-x^2}}{1-4\alpha x+4\alpha^2}.$$

Then we plug in $\alpha=\frac{1}{2\sqrt{N}}$ and we obtain that $M_x+\frac{1}{2\sqrt{N}}E$ is unitarily equivalent to multiplication by $x$ on $L^2(\mu_{c+p})$, and the conclusion of the Theorem follows.
\end{proof}

\begin{theorem}\label{th3.25}
Let $\mu_c$ be the semicircular measure on $[-1,1]$ 
$$d\mu_c=\frac{2}{\pi}\sqrt{1-x^2}\,dx,$$
and let $\mu_{c+p}$ be the measure on $[-1,1]$ given by 
$$d\mu_c=\frac{\frac2{\pi}\sqrt{1-x^2}}{1-2N^{-1/2} x+N^{-1}}\,dx.$$
Then
\begin{enumerate}
\item If $N=1$, then the Laplacian $\Delta$ is unitarily equivalent to the operator of multiplication by $2-2x$ on $L^2(\mu_{c+p})$. 
\item IF $N\geq 2$, then the Laplacian $\Delta$ is unitarily equivalent to the multiplication operator 
$$M_{c+p}\oplus\oplus_{n=1}^\infty M_c\mbox{ on } L^2(\mu_{c+p})\oplus\oplus_{n=1}^\infty L^2(\mu_c),$$
where $M_{c+p}$ is the operator of multiplication by $N+1-2\sqrt{N}x$ on $L^2(\mu_{c+p})$, and $M_c$ is the operator of multiplication by $N+1-2\sqrt{N}x$ on $L^2(\mu_c)$.
\end{enumerate}
\end{theorem}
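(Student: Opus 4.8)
The plan is to read off Theorem \ref{th3.25} by combining Theorem \ref{propcycl} (the decomposition of $\Delta$ into cyclic subspaces), Theorem \ref{thmucp} (diagonalization of the distinguished block on $\H_\Omega$), and Lemma \ref{lemvoic} (diagonalization of $\preal S$). In effect the statement is Theorem \ref{thlapm} with its first summand $(N+1)I-A_\Delta$ rewritten in multiplication form via Theorem \ref{thmucp}; so the whole content is to glue the earlier unitary equivalences together and keep track of the index sets.

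\textbf{Case $N=1$.} Here the alphabet is $\{1\}$, so there is no word $\omega_1\dots\omega_n$ with $\omega_n\neq 1$; by Theorem \ref{propcycl}(i) the whole Fock space $\mathcal F$ coincides with $\H_\Omega$, so $\Delta$ (unitarily equivalent to $\W\Delta\W^*$ by Proposition \ref{proplapf}) is unitarily equivalent to the restriction of $\W\Delta\W^*$ to $\H_\Omega$. By Theorem \ref{thmucp} this restriction is unitarily equivalent to multiplication by $N+1-2\sqrt N x$ on $L^2(\mu_{c+p})$, and $N+1-2\sqrt N x=2-2x$ when $N=1$, which is (i). (One should keep in mind that at $N=1$ the density of $\mu_{c+p}$, namely $\tfrac1\pi\sqrt{(1+x)/(1-x)}$, has an integrable singularity at $x=1$ but still gives a finite measure with no atom there, consistent with Theorem \ref{thmucp}.)

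\textbf{Case $N\geq 2$.} Start from the orthogonal cyclic decomposition of Theorem \ref{propcycl}(i), $\mathcal F=\H_\Omega\oplus\bigoplus\H_{i_1\dots i_n}$, the second sum running over all $(i_1,\dots,i_n)$ with $n\geq 1$, $i_k\in\{1,\dots,N\}$ and $i_n\neq 1$. That index set has $N^{\,n-1}(N-1)\geq 1$ elements for each $n$, hence is countably infinite; enumerate it as $\{1,2,3,\dots\}$. By Theorem \ref{propcycl}(ii) the restriction of $\W\Delta\W^*$ to each $\H_{i_1\dots i_n}$ is unitarily equivalent to $D=(N+1)I_{l^2(\bn)}-2\sqrt N\,\preal S$, and by Lemma \ref{lemvoic} $\preal S$ is unitarily equivalent to $M_x$ on $L^2(\mu_c)$; hence each such block is unitarily equivalent to $M_c$, multiplication by $N+1-2\sqrt N x$ on $L^2(\mu_c)$. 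By Theorem \ref{thmucp} the $\H_\Omega$-block is unitarily equivalent to $M_{c+p}$, multiplication by $N+1-2\sqrt N x$ on $L^2(\mu_{c+p})$. Forming the direct sum of all these unitary equivalences identifies $\Delta$ with $M_{c+p}\oplus\bigoplus_{n=1}^\infty M_c$ on $L^2(\mu_{c+p})\oplus\bigoplus_{n=1}^\infty L^2(\mu_c)$, which is (ii).

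I do not anticipate a genuine obstacle: the analytic work is entirely contained in Lemma \ref{lemvoic}, Theorem \ref{propcycl}, and Theorem \ref{thmucp}, and what remains is bookkeeping — chiefly (a) checking that the index set of the ``$\mu_c$-blocks'' is countably infinite, so that the countable direct sum $\bigoplus_{n=1}^\infty$ rather than a finite sum is the correct target, and that this family is disjoint from the $\H_\Omega$-block, and (b) the degenerate bookkeeping at $N=1$, where $\H_\Omega$ exhausts $\mathcal F$ and $N+1-2\sqrt N x$ collapses to $2-2x$. If a maximally short write-up were wanted, one could simply invoke Theorem \ref{thlapm} together with Theorem \ref{thmucp}, the latter supplying exactly the replacement of the $(N+1)I-A_\Delta$ block by $M_{c+p}$.
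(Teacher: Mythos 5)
Your proposal is correct and follows exactly the route of the paper, whose proof is the one-line observation that the theorem follows from Theorem \ref{propcycl}, Lemma \ref{lemvoic}, and Theorem \ref{thmucp}; your write-up simply makes explicit the bookkeeping (countability of the index set of $\mu_c$-blocks, and the degeneration at $N=1$) that the paper leaves implicit.
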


\begin{proof}
The Theorem follows directly from Theorem \ref{propcycl}, Lemma \ref{lemvoic}, and Theorem \ref{thmucp}.
\end{proof}

\begin{remark}
 Theorem \ref{th3.25} shows that the particular rank-one perturbations used in the spectral representation for $\Delta_G$ do not introduce point spectrum. Here $G$ refers to the $N$-ary tree. This could not have been predicted by the general theory of rank-one perturbations, see e.g., \cite{AKK04}. Nonetheless, as we show below, $\Delta_G$ has infinite-energy (i.e., not $l^2$-) eigenvectors. For the sake of simplicity, we complete the details only in special case of $N = 1$, but an analogous construction works in general.

 While the Laplace operator $\Delta_G$ for $G=$the $N$-bifurcation graph has absolutely continuous spectrum, the following example (for $N=1$) shows that $\Delta_G$ may have {\it infinite}-energy eigenvectors.

\end{remark}

\begin{example}\label{exgo1}
On sequences $u=(u_0,u_1,u_2,\dots)$, set 
\begin{equation}
	(\Delta u)_0=u_0-u_1,\mbox{ and } (\Delta u)_n=2u_n-u_{n-1}-u_{n+1},\quad n\geq1.
	\label{eqgo1}
\end{equation} 

In general $\lambda=0$ is always an eigenvalue. Indeed the vector $v=(1,1,1,\dots)$ satisfies $\Delta v=0$.

The idea in the algorithm of this example is that we generate a finite
system of eigenvalues and eigenvectors for each $n$. Given $n$, the admissible
eigenvalues $\lambda$ occur as roots in a polynomial $p_n$.

   Details: Fix $n$ and examine a vector which begins with a finite word $w$,
all letters in $w$ assumed nonzero. Letting $\lambda$ be free, and setting $v_n =
0$, you get a polynomial equation $v_n = p_n(\lambda) =0$. Now solve for
$\lambda$, and then use the recursion to generate the rest of the coordinates
in a $\lambda$-eigenvector, i.e., use the recursion to compute $v_k$ for $k =
n+1, n+2,\dots$. In each case, when $\lambda$ is fixed, we will get a periodic
sequence for the entire vector $v = (v_0, v_1,\dots )$. Or course this
periodicity implies boundedness of $(v_k)$, as $k$ varies in $\bn$. For each of the
admissible values of $\lambda$ you get a one-dimensional eigenspace. So for
convenience, we can set $v_0 = 1$.

In the special case of $N=1$, set $\xi=(1,0,-1,-1,0,1)$ and $v=(\xi,\xi,\xi,\dots)$, repetition of the finite word $\xi$, satisfies $\Delta v=v$, so $\lambda=1$ is also an infinite energy eigenvector.

Continuing this process, we can show that for every $n\in\bn$ there is a monic polynomial $p_n(\lambda)$ such that each root $\lambda$ in $p_n(\lambda)=0$ is an infinite-energy eigenvalue of $\Delta$. Moreover the sets of roots are disjoint.

Specifically,
$$p_1(\lambda)=1-\lambda,\quad p_2(\lambda)=1-3\lambda+\lambda^2$$
and
$$p_{n+1}(\lambda)=(2-\lambda)p_n(\lambda)-p_{n-1}(\lambda).$$

For the case $n=2$, the two roots are $\lambda_{\pm}=\frac{3\pm\sqrt{5}}{2}$, and we proceed with the analysis of this case. Then the eigenvalue problem 
\begin{equation}
	\Delta v=\lambda v
	\label{eqgo2}
\end{equation}
has non-zero solutions $v\neq 0$ for the two Golden ration numbers
\begin{equation}
	\lambda=\frac{3\pm\sqrt{5}}{2}
	\label{eqgo3}
\end{equation}
In each case, the eigenspace is spanned by the following two periodic sequences There are words $\xi_{\pm}$ and $\eta_{\pm}$, $\xi_\pm$ of length $2$ and $\eta_\pm$ of length $8$ respectively, such that the corresponding $\lambda_\pm$ eigenspaces are spanned by
\begin{equation}
	(\xi\eta,\xi\eta,\dots)\mbox{ infinite repetition}.
	\label{eqgo4}
\end{equation}
\begin{proof}
A computation shows that every solution $(v_0v_1v_2\dots)$ to \eqref{eqgo2} satisfies $v_1=(1-\lambda)v_0$, $v_2=(\lambda^2-3\lambda+1)v_0$. 
So if $\lambda$ is one of the roots in \eqref{eqgo3}, then there are solutions $v$ spanned by 
$v=(1,1-\lambda,0,\mbox{ some infinite word})$. Continuation of the iteration in \eqref{eqgo1} yields 
$$\begin{array}{ccccccccccccc}
(1,&1-\lambda,&0,&\lambda-1,&-1,&-1,&\lambda-1,&0,&1-\lambda,&1,&1,&1-\lambda,\dots)\\
(v_0&v_1)&(v_2&v_3&v_4&v_5,&v_6,&v_7,&v_8,&v_9),(&v_{10},&v_{11}),\dots\end{array}$$

Setting $\xi=(1,1-\lambda)$ and $\eta=(0,\lambda-1,-1,-1,\lambda-1,0,1-\lambda,1)$ the desired conclusion follows.
\end{proof}

 These infinite-energy eigenvalues may be of physical significance as
they serve to show that there is a natural way to ``renormalize'', introducing
a weighted sequence space, in such a way that in the renormalized Hilbert
space, these infinite-energy eigenfunctions turn into finite-energy.
\end{example}

\section{Resistance metric}\label{resi}

  Reviewing Definition \ref{def2.1}, formula \eqref{eq1.4} for the graph Laplacian, $\Delta_{G,c}$ the reader will note the coefficient $c$. Its significance is seen for example from electrical network models built on a given graph $G = (G^{(0)}, G^{(1)})$. Here $c$ represents conductance, and it takes the form of a positive function defined on the set $G^{(1)}$ of all edges. It signifies the reciprocal of resistance. The graph $G$ is then a ``large'' (means infinite!) system of resistors, each edge $e =(xy)$ representing a resistance of $c(e)^{-1}$  Ohm between the neighboring vertices. The question arises of determining some sort of resistance metric giving the resistance between an arbitrary pair of vertices, so an arbitrary pair of points $x$ and $y$ in $G^{(0)}$. This is non-trivial as there typically are many paths of edges connecting $x$ with $y$. Nonetheless, the rules for electrical networks, Ohm's law combined with Kirchhoff's law, allow us to compute a useful measure resistance, which we shall refer to as the resistance metric; see \cite{Jor08, Kig03, Pow76}.

     One of the conclusions in the operator approach to resistance amounts to identifying the resistance metric as a norm difference: The norm is now referring to a Hilbert space, and derived from the graph Laplacian $\Delta_{G,c}$, i.e., the operator \eqref{eq1.4} for a particular choice of conductance function $c$. Hence there is a Hilbert space, the energy Hilbert space $E= E(\Delta_{G,c})$ and a function $v$ from $G^{(0)}$ into $E$ such the resistance between $x$ and $y$ is the norm difference $\| v(x)- v(y)\|_E$, i.e., the $E$-norm difference between the vectors $v(x)$ and $v(y)$ in $E$. But each vector $v(x)$ is itself a function on the set of vertices $G^{(0)}$, in fact an electrical potential defined by $\Delta_{G,c}$; see Lemma \ref{lemre2} below.

      We further note that there is a general theory of metrics which can be computed this way with the use of Hilbert space, and we refer to the paper \cite{Fug05} for an overview. In fact, these metric embeddings form a subclass of a wider family: spirals, or screw functions, as defined by von Neumann; and they have applications in information theory. In the present context, we need them for making precise the resistance metric, and to motivate the Energy Hilbert space.

\begin{definition}\label{defre1}
Let $G=(G^{(0)},G^{(1)})$ be a graph with conductance function $c:G^{(1)}\rightarrow\br_+$. Pick a point $o$ in $G^{(0)}$. For $x\in G^{(0)}$, the equation 
\begin{equation}
	\Delta_{G,c}v_x=\delta_o-\delta_x
	\label{eqre1}
\end{equation}
has a unique solution $v_x$ in the Hilbert space $E$ obtained by completion relative to the quadratic form
\begin{equation}
	\|u\|_E^2:=\sum_{x\in G^{(0)}}\sum_{y\sim x} c(xy)|u(x)-u(y)|^2.
	\label{eqre2}
\end{equation}

The expression $\E(u',u)$ is determined by polarization from \eqref{eqre2}, i.e., 
$$\E(u',u)=\sum_{x\in G^{(0)}}\sum_{y\sim x}c(xy)(\cj u'(x)-\cj u'(y))(u(x)-u(y)).$$
The {\it resistance metric} on $G^{(0)}$ is given by
$$\operatorname*{dist}(x,y)=\|v_x-v_y\|_E,\quad(x,y\in G^{(0)}).$$

Here we are assuming that $G$ is connected. For details see \cite{Jor08}.
\end{definition}

\begin{lemma}\label{lemre2}
The solution $v_x$ in \eqref{eqre1} is determined uniquely up to an additive constant by the formula
$$\frac{1}{2}\E(v_x,u)=u(o)-u(x),\quad(u\in\mathcal D).$$
\end{lemma}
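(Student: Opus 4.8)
The plan is to reduce the statement to a discrete Green's identity (summation by parts) relating the energy form $\E$ of \eqref{eqre2} to the $l^2$-pairing against $\Delta_{G,c}$, and then to read off both the asserted formula and its uniqueness up to constants from that identity together with the density of $\mathcal D$ in $E$.

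First I would establish the Green identity: for every function $w$ on $G^{(0)}$ (so that $\Delta_{G,c}w$ is defined pointwise by \eqref{eq1.4}, which is automatic since each $\mathrm{nbh}(x)$ is finite) and every $u\in\mathcal D$,
$$\tfrac12\,\E(w,u)=\langle\Delta_{G,c}w,u\rangle_{l^2}.$$
The proof is a direct computation: expand the right-hand side as $\sum_{x}\sum_{y\sim x}c(xy)\bigl(\overline{w(x)}-\overline{w(y)}\bigr)u(x)$; since $u$ is finitely supported and the neighbourhoods are finite, only finitely many terms survive, so the sum may be rearranged freely, and using $c(xy)=c(yx)$ one symmetrizes over ordered pairs $(x,y)$ to obtain $\tfrac12\sum_{x}\sum_{y\sim x}c(xy)\bigl(\overline{w(x)}-\overline{w(y)}\bigr)\bigl(u(x)-u(y)\bigr)$, which is $\tfrac12\E(w,u)$ by Definition \ref{defre1}. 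I would also record the byproduct that $\E(w,u)$ is unchanged when $w$ is modified by an additive constant, so that $\E(v_x,u)$ is meaningful for $v_x\in E$.

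Next, applying the identity with $w=v_x$, the solution of \eqref{eqre1}, and using $\langle\delta_z,u\rangle_{l^2}=u(z)$, I get $\tfrac12\,\E(v_x,u)=\langle\delta_o-\delta_x,u\rangle_{l^2}=u(o)-u(x)$ for all $u\in\mathcal D$, which is the stated formula. For uniqueness: if $w\in E$ also satisfies $\tfrac12\E(w,u)=u(o)-u(x)$ for all $u\in\mathcal D$, then $\E(v_x-w,u)=0$ for all $u\in\mathcal D$, hence, by the density of $\mathcal D$ in $E$, $\|v_x-w\|_E=0$; realized as functions on $G^{(0)}$, $v_x$ and $w$ therefore differ by an element of the kernel of $\|\cdot\|_E$, which for a connected graph $G$ is exactly the space of constants. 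Thus the formula pins down $v_x$ up to an additive constant. (Alternatively, one may bypass the density argument and instead quote the uniqueness of the solution of \eqref{eqre1} in $E$ from \cite{Jor08}.)

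I do not expect a serious obstacle here. The only points needing care are that $\E(v_x,u)$ makes sense even though $v_x$ lies in the completion $E$ — handled by the constant-invariance noted above, together with the realization of elements of $E$ as functions on $G^{(0)}$ modulo constants — and the legitimacy of the rearrangements in the Green identity, which is immediate from the finiteness of the support of $u$ and of the neighbourhoods, combined with the symmetry $c(xy)=c(yx)$ of the conductance.
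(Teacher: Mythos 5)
Your proposal is correct and follows essentially the same route as the paper: establish the Green identity $\tfrac12\E(w,u)=\ip{\Delta_{G,c}w}{u}_{l^2}$ and substitute $w=v_x$ together with \eqref{eqre1}. You additionally spell out the summation-by-parts computation and the uniqueness-up-to-constants argument, both of which the paper leaves implicit; these are welcome details but do not change the method.
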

\begin{proof}
We check that
$$\E(v,u)=2\ip{\Delta_{G,c}v}{u}_{l^2(G^{(0)})}.$$
Hence for all $u\in\mathcal D$, we have 
$$\frac12\E(v_x,u)=\ip{\Delta v_x}{u}_{l^2}= (\mbox{ by \eqref{eqre1} })\ip{\delta_o-\delta_x}{u}_{l^2}=u(o)-u(x).$$ 
\end{proof}

      Below we compute the resistance metric for the $N$-adic graph, and for the operator $\Delta_{G,c}$ from Proposition \ref{proplapf} and Theorem \ref{thmucp}.

\begin{remark}\label{remo1}
If $c:G^{(1)}\rightarrow\br_+$ is a conductance function and if $v:G^{(0)}\rightarrow \br$ is a function on the vertices of $G$, then (Ohm's law!)
\begin{equation}
	I(xy):=c(xy)(v(x)-v(y))
	\label{eqo1}
\end{equation}
defines a current flow on $G$.

For a fixed pair of points $x$ and $y$ in $G^{(0)}$, we are interested in the following experiment which inserts one Amp at $x$ and extracts it at $y$: It induces a current flow $I:G^{(1)}\rightarrow\br$ which is governed by (Kirchoff's laws)
\begin{equation}
	\sum_{G^{(0)}\ni k\sim j}I(jk)=\delta_x(j)-\delta_y(j)
	\label{eqo2}
\end{equation}

\end{remark}

\begin{lemma}\label{lemo2}
Let $x,y\in G^{(0)}$ as above. If $I:G^{(1)}\rightarrow\br$ is a solution to \eqref{eqo2}, and a voltage potential is determined from \eqref{eqo1}, then 
\begin{equation}
	\E_c(v)=\sum_{e\in G^{(1)}}\frac{1}{c(e)}(I(e))^2;
	\label{eqo3}
\end{equation}
and 
\begin{equation}
	\Delta_{G,c}v=\delta_x-\delta_y.
	\label{eqo4}
\end{equation}

\end{lemma}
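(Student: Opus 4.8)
The plan is to obtain both assertions by substituting Ohm's law \eqref{eqo1} into the relevant defining expressions; the only genuine bookkeeping is the edge-orientation convention and the factor $2$ relating edge-sums to vertex-double-sums. Throughout one keeps in mind that $v$ is determined by \eqref{eqo1} only up to an additive constant, and that none of the quantities in the statement depends on that constant.

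For the energy identity I would argue as follows. Fix $e\in G^{(1)}$ and write $e=(s(e)t(e))$. By \eqref{eqo1}, $I(e)=c(e)\bigl(v(s(e))-v(t(e))\bigr)$, hence
\[
\frac{1}{c(e)}\,I(e)^2=c(e)\,\bigl(v(s(e))-v(t(e))\bigr)^2 .
\]
The right-hand side, and therefore the left, is independent of the chosen orientation of $e$, since interchanging $s(e)$ and $t(e)$ flips the sign of $I(e)$ but leaves $I(e)^2$ unchanged; so $\sum_{e\in G^{(1)}}\frac{1}{c(e)}I(e)^2$ is well defined in $[0,\infty]$. Summing the displayed identity over $e\in G^{(1)}$ and matching against the quadratic form in \eqref{eqre2} (which runs over ordered pairs $(x,y)$, $y\sim x$, and so visits each edge twice) yields $\sum_e\frac{1}{c(e)}I(e)^2=\E_c(v)$, with $\E_c$ normalized as the sum over edges counted once; this is \eqref{eqo3}.

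For the Laplacian identity, fix a vertex $j\in G^{(0)}$. By the definition \eqref{eq1.4} of $\Delta_{G,c}$ together with \eqref{eqo1},
\[
(\Delta_{G,c}v)(j)=\sum_{k\sim j}c(jk)\bigl(v(j)-v(k)\bigr)=\sum_{k\sim j}I(jk),
\]
a finite sum because $\mbox{nbh}(j)$ is finite. By Kirchhoff's law \eqref{eqo2} the last sum equals $\delta_x(j)-\delta_y(j)$, and since $j$ was arbitrary this gives \eqref{eqo4}.

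I do not expect any serious obstacle here: the argument is two one-line substitutions. The points to handle carefully are the orientation-independence of the edge-indexed quantities, the factor $2$ relating \eqref{eqre2} to the edge-sum normalization of $\E_c$, and the remark that \eqref{eqo3} is an identity in $[0,\infty]$ — the substantive case being when the sum is finite, in which event $v$ lies in the energy Hilbert space $E$ and, by \eqref{eqo4} together with Lemma \ref{lemre2}, coincides up to an additive constant with the dipole potential $v_x-v_y$ of Definition \ref{defre1}.
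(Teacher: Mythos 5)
Your proof is correct and follows essentially the same route as the paper's: substitute Ohm's law \eqref{eqo1} into the energy form for \eqref{eqo3}, and chain the definition \eqref{eq1.4} with \eqref{eqo1} and Kirchhoff's law \eqref{eqo2} for \eqref{eqo4}. Your explicit attention to the orientation-independence of $I(e)^2/c(e)$ and to the factor $2$ between the ordered-pair sum in \eqref{eqre2} and the once-per-edge sum in \eqref{eqo3} is more careful than the paper's one-line "follows from a substitution," and correctly identifies the normalization convention needed for \eqref{eqo3} to hold as stated.
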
 
\begin{proof}
First, \eqref{eqo3} follows from a substitution of \eqref{eqo1} into \eqref{eqre2}. For \eqref{eqo4}, we get 
$$(\Delta_{G,c}v)(j)=\sum_{k\sim j}c(jk)(v(j)-v(k))=(\mbox{ by \eqref{eqo1} } )\sum_{k\sim j}I(jk)=(\mbox{ by \eqref{eqo2} })=\delta_x(j)-\delta_y(j).$$
\end{proof}
\begin{proposition}\label{prop4.5}
Consider the $N$-ary tree $G=(G^{(0)},G^{(1)})$ in Definition \ref{def3.1}.
Let $\eta:=\eta_1\dots\eta_n$ be a word in $G^{(0)}$, $n\geq1$. Then the solution (potential) $v\in E$ to 
\begin{equation}
	\Delta v=\delta_\ty-\delta_{\eta_1\dots\eta_n}
	\label{eqpot1}
\end{equation} 
is given by
\begin{equation}
	v(\omega_1\dots\omega_m)=n-p(\omega_1\dots\omega_m,\eta_1\dots\eta_n)
	\label{eqpot2}
\end{equation}
where $p(\omega_1\dots\omega_m,\eta_1\dots\eta_m)$ is the length of the largest common prefix for $\omega_1\dots\omega_m$ and $\eta_1\dots\eta_n$, i.e., the largest $p\geq 0$ such that $p\leq m,n$ and 
$\omega_i=\eta_i$ for all $i=1,\dots,p$.

The resistance metric is 
\begin{equation}
	\operatorname*{dist}(x,y)=\sqrt{2l(x,y)},
	\label{eqpot3}
	\end{equation}
	where $l(x,y)$ is the length of the shortest path from $x$ to $y$.

\end{proposition}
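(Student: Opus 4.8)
Here is how I would prove Proposition \ref{prop4.5}. The statement has two halves, and the second rests on the first, so I would do them in that order.

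\emph{The potential $v$.} The plan is to verify \emph{directly} that the function $v$ of \eqref{eqpot2} solves \eqref{eqpot1}, using the explicit form of the $N$-ary tree Laplacian from Proposition \ref{propls} (unit conductances): $(\Delta v)(\ty)=Nv(\ty)-\sum_{i=1}^{N}v(i)$ and, for $m\geq1$,
\[
(\Delta v)(\omega_1\dots\omega_m)=(N+1)v(\omega_1\dots\omega_m)-v(\omega_1\dots\omega_{m-1})-\sum_{i=1}^{N}v(\omega_1\dots\omega_m i).
\]
Writing $p(\omega):=p(\omega,\eta)$, I would split the verification according to the position of the vertex $\omega$ relative to the geodesic $[\ty,\eta]$: (a) $\omega=\ty$; (b) $\omega$ a proper prefix of $\eta$; (c) $\omega=\eta$; (d) $\omega$ a strict descendant of $\eta$; (e) $\omega$ off the geodesic. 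In each case the three numbers $p(\sigma\omega)$, $p(\omega)$, $p(\omega i)$ ($i=1,\dots,N$) differ by at most $1$ and in a completely forced way — e.g.\ in case (b), exactly one child satisfies $p(\omega i)=p(\omega)+1$ while $p(\sigma\omega)=p(\omega)-1$, so the alternating sum telescopes to $0=(\delta_\ty-\delta_\eta)(\omega)$; in (a) it gives $1$ and in (c) it gives $-1$. This case-check is the only genuine bookkeeping. I would then note that $v$ equals $n$ off the $\eta_1$-subtree and changes by exactly $\pm1$ across each of the $n$ edges of $[\ty,\eta]$ and across no other edge, so $\|v\|_E^2=2n<\infty$; hence by the uniqueness in Definition \ref{defre1} (equivalently, by the variational characterization of Lemma \ref{lemre2}, which $v$ satisfies since $\E(v,u)=2\ip{\Delta v}{u}$ for $u\in\mathcal D$), $v$ represents the potential $v_\eta\in E$, up to an additive constant.

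\emph{The resistance metric.} Here I would first observe that $\operatorname*{dist}$ is independent of the base point $o$: if $v'_x,v'_y$ are the potentials based at $o'$, then $v'_x-v'_y=v_x-v_y$ (each differs from $v_o-v_{o'}$ by the same function), so $\|v'_x-v'_y\|_E=\|v_x-v_y\|_E$. Taking $o=\ty$ and inserting the formula just proved, $v_x(\omega)=|x|-p(\omega,x)$ and $v_y(\omega)=|y|-p(\omega,y)$, hence
\[
(v_x-v_y)(\omega)=|x|-|y|-p(\omega,x)+p(\omega,y).
\]
Let $z$ be the longest common prefix of $x$ and $y$, of length $p_0:=p(x,y)$. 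On the edges of $[\ty,z]$ one has $p(\omega,x)=p(\omega,y)=|\omega|$, so $v_x-v_y$ is constant ($=|x|-|y|$) there; along $[z,x]$ one has $p(\omega,y)=p_0$ while $p(\omega,x)=|\omega|$, so $v_x-v_y$ changes by $1$ across each of these $|x|-p_0$ edges; symmetrically along $[z,y]$; and $v_x-v_y$ is constant across every edge not meeting $[\ty,x]\cup[\ty,y]$. Thus $v_x-v_y$ is non-constant exactly across the $|x|+|y|-2p_0$ edges of the tree-geodesic $[x,y]$, jumping by $\pm1$ on each, so by \eqref{eqre2}
\[
\operatorname*{dist}(x,y)^2=\|v_x-v_y\|_E^2=2\big(|x|+|y|-2p(x,y)\big)=2\,l(x,y),
\]
which is \eqref{eqpot3}.

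\emph{Expected main obstacle.} There is no deep step. The substance is the exhaustive but routine case-analysis in the first part and the careful identification in the second of the finite set of edges on which the dipole $v_x-v_y$ is non-constant. The one point requiring care is the interface with the energy-space formalism — that the explicit function \eqref{eqpot2} genuinely represents the element $v_\eta\in E$ singled out in Definition \ref{defre1} (finiteness of $\|v\|_E$, and uniqueness of the solution in $E$) — for which I would lean on Lemma \ref{lemre2} and the reference to \cite{Jor08} given there, rather than re-deriving the well-posedness.
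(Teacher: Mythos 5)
Your proof is correct, and the first half (the direct case-by-case verification that the function $n-p(\cdot,\eta)$ satisfies $\Delta v=\delta_\ty-\delta_\eta$, plus the observation that only the $n$ edges of the geodesic $[\ty,\eta]$ carry a nonzero jump, so the energy is finite) is essentially the argument the paper gives, organized around the same partition of vertices into the root, the proper prefixes of $\eta$, the word $\eta$ itself, and everything else.

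For the resistance formula \eqref{eqpot3} you take a genuinely different route. The paper does not compute $\|v_x-v_y\|_E$ from the edge data; instead it invokes \cite[Proposition 5.15]{Jor08}, which expresses the distance as $\sqrt{2}\,(v_x(y)+v_y(x)-v_x(x)-v_y(y))^{1/2}$, and then simply evaluates the four point values using \eqref{eqpot2} (with $v_x(x)=v_y(y)=0$, $v_x(y)=n-p$, $v_y(x)=m-p$). You instead work directly from the definition $\operatorname*{dist}(x,y)=\|v_x-v_y\|_E$ in Definition \ref{defre1}: you identify the longest common prefix $z$, show that $v_x-v_y$ jumps by $\pm1$ precisely across the $|x|+|y|-2p(x,y)$ edges of the tree-geodesic $[x,y]$ and is locally constant elsewhere, and read off $\|v_x-v_y\|_E^2=2\,l(x,y)$ from \eqref{eqre2} (the factor $2$ coming from the double count of each edge in that sum). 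The two computations are of course linked by the Green's identity $\E(u,w)=2\ip{\Delta u}{w}$ of Lemma \ref{lemre2}, which is how the cited formula is derived in \cite{Jor08}; your version buys self-containedness (no appeal to the external proposition) and makes visible exactly which resistors carry the unit current, at the cost of the extra bookkeeping about which edges are crossed. Your preliminary remark that the metric is independent of the base point $o$ is a worthwhile addition that the paper leaves implicit (it silently takes $o=\ty$), and it is justified by the uniqueness-up-to-constants in Lemma \ref{lemre2}.
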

\begin{proof}
We check \eqref{eqpot1}. 
$$(\Delta v)(\ty)=Nv(\ty)-\sum_{i=1}^Nv(i)=Nn-\sum_{i\neq\eta_1}v(i)-v(\eta_1)=
Nn-(N-1)n-(n-1)=1=\delta_\ty(\ty)-\delta_\eta(\ty).$$
Next, we check the prefixes $\eta_1\dots\eta_p$ with $p\leq n-1$
$$(\Delta v)(\eta_1\dots\eta_p)=(N+1)v(\eta_1\dots\eta_p)-\sum_{i\neq \eta_{p+1}}v(\eta_1\dots\eta_p i)-v(\eta_1\dots\eta_{p+1})-v(\eta_1\dots\eta_{p-1})=$$$$
(N+1)(n-p)-(N-1)(n-p)-(n-p-1)-(n-p+1)=0=\delta_\ty(\eta_1\dots\eta_p)-\delta_\eta(\eta_1\dots\eta_p).$$
$$(\Delta v)(\eta_1\dots\eta_n)=(N+1)v(\eta_1\dots\eta_n)-\sum_{i=1}^Nv(\eta_1\dots\eta_ni)-v(\eta_1\dots\eta_{n-1})=$$$$(N+1)\cdot0-N\cdot 0-1=-1=\delta_\ty(\eta_1\dots\eta_n)-\delta_\eta(\eta_1\dots\eta_n).$$

All the other words have the form $\omega:=\eta_1\dots\eta_p\omega_{p+1}\dots\omega_m$ for some $0\leq p\leq n$, $\omega_{p+1}\neq \eta_{p+1}$ (if $p=n$, then we just take $m>n$). Since the function $v$ is constant $n-p$ on the subtree with root $\eta_1\dots\eta_p\omega_{p+1}$, all the terms in the definition of $(\Delta v)(\eta_1\dots\eta_p\omega_{p+1}\dots\omega_m)$ are equal $n-p$, and they sum to $(N+1)-N-1=0=\delta_\ty(\omega)-\delta_\eta(\omega)$.

Clearly, only a finite number of edges $(xy)$ have $v(x)-v(y)\neq 0$, namely the ones of the form $\eta_1\dots\eta_p\sim\eta_1\dots\eta_{p+1}$. Therefore $\E(v,v)<\infty$. 
This proves \eqref{eqpot2}.

To prove \eqref{eqpot3} we use \cite[Proposition 5.15]{Jor08}. Let $p$ be the length of the longest common prefix of $x$ and $y$, and let $n,m$ be the lengths of $x$ and $y$ respectively.

By \cite[Proposition 5.15]{Jor08}
$$\operatorname*{dist}(x,y)=\sqrt{2}(v_x(y)+v_y(x)-v_x(x)-v_y(y))^{\frac12},$$
where $v_x$ is the solution (potential) for $\Delta v_x=\delta_\ty-\delta_x$ and 
$v_y$ is the solution (potential) for $\Delta v_y=\delta_\ty-\delta_y$.
Then 
$$\operatorname*{dist}(x,y)=\sqrt{2}(n-p+m-p-0-0)^{\frac12}=\sqrt{2}l(x,y)^{\frac12}.$$

\end{proof}

 The following corollaries involve two general issues for the resistance metric on graphs. They are motivated by two fundamental Hilbert space constructions: The first (von Neumann and Schoenberg) is a necessary and sufficient condition on a metric space $(V, d)$ for the metric $d$ to be the restriction of a Hilbert-norm difference. This then yields an isometric embedding of $V$ into some Hilbert space. In the present case, this Hilbert space will be concrete: The energy Hilbert space $E$ in Definition \ref{defre1} and Lemma \ref{lemre2}. The idea goes back to von Neumann and Schoenberg: The a priori condition on the metric $d$ is that $d^2$ is negative semidefinite. The second construction amounts to Kolmogorov's consistency rules \cite{PaSc72}. With this, we get a Gaussian stochastic process indexed by $G^{(0)}$. Kolmogorov \cite{PaSc72}: An a priori given covariance function is positive semidefinite if and only if it comes from a stochastic Gaussian process. Below, we apply this to the energy-inner product $\ip{v_x}{v_y}_E$ where for each $x$ in $G^{(0)}$, $v_x$ is the potential function from Lemma \ref{lemre2}, and $\ip{\cdot}{\cdot}_E$ is the inner product in the Hilbert space \eqref{eqre2} in Definition \ref{defre1}.

\begin{corollary}\label{coko1}
Let $N\in\bn$ and let $G=(G^{(0)},G^{(1)})$ be the tree of Definition \ref{def3.1}. For $x,y\in G^{(0)}$, let $l(x,y)$ denote the length of the shortest path from $x$ to $y$. Then for all finitely supported functions $\xi\in G^{(0)}\rightarrow\bc$ satisfying $\sum_x\xi(x)=0$, we have
\begin{equation}
	\sum_x\sum_y\cj\xi(x) l(x,y)\xi(y)\leq 0
	\label{eqko1}
\end{equation}
(In short the function $l(\cdot,\cdot)^2$ is negative semi-definite.)
\end{corollary}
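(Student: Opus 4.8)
The plan is to deduce \eqref{eqko1} from the isometric realization of the resistance metric inside the energy Hilbert space $E$ (Definition \ref{defre1}) together with the explicit value of that metric obtained in Proposition \ref{prop4.5}; once these are in place, \eqref{eqko1} is an instance of the classical von Neumann--Schoenberg principle that the square of a Hilbert-space distance is conditionally negative definite (see also \cite{Fug05}). Concretely, for each $x\in G^{(0)}$ let $v_x\in E$ be the potential with $\Delta v_x=\delta_o-\delta_x$ as in Lemma \ref{lemre2}, so that $\operatorname*{dist}(x,y)=\|v_x-v_y\|_E$ by Definition \ref{defre1}, and Proposition \ref{prop4.5} gives $\operatorname*{dist}(x,y)^2=2\,l(x,y)$. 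Hence
\[
l(x,y)=\tfrac12\,\|v_x-v_y\|_E^2=\tfrac12\,\E(v_x-v_y,\,v_x-v_y),\quad(x,y\in G^{(0)}).
\]

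First I would reduce to real-valued $\xi$. Writing $\xi=\alpha+i\beta$ with $\alpha,\beta:G^{(0)}\to\br$ finitely supported and $\sum_x\alpha(x)=\sum_x\beta(x)=0$, the symmetry $l(x,y)=l(y,x)$ forces the purely imaginary cross terms to cancel, leaving
\[
\sum_x\sum_y\cj{\xi(x)}\,l(x,y)\,\xi(y)=\sum_x\sum_y\alpha(x)l(x,y)\alpha(y)+\sum_x\sum_y\beta(x)l(x,y)\beta(y),
\]
so it suffices to prove \eqref{eqko1} for real $\xi$. Now fix such a $\xi$ and set $w:=\sum_{x\in G^{(0)}}\xi(x)\,v_x\in E$, a finite sum; since $\sum_x\xi(x)=0$ and constant functions have zero energy, $w$ is independent of the additive constants left free in Lemma \ref{lemre2}. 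Substituting the displayed formula for $l(x,y)$ and expanding $\E(v_x-v_y,v_x-v_y)=\E(v_x,v_x)+\E(v_y,v_y)-2\E(v_x,v_y)$, the terms $\E(v_x,v_x)$ and $\E(v_y,v_y)$ vanish after summation against $\xi(x)\xi(y)$ because $\sum_y\xi(y)=\sum_x\xi(x)=0$; what remains is
\[
\sum_x\sum_y\xi(x)\,l(x,y)\,\xi(y)=-\sum_x\sum_y\xi(x)\xi(y)\,\E(v_x,v_y)=-\E(w,w)=-\|w\|_E^2\le 0,
\]
which is exactly \eqref{eqko1}.

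The content of the argument is entirely in Proposition \ref{prop4.5}; the deduction itself is routine, and the only points needing a word of care are the well-definedness of $w$ (handled by the zero-sum hypothesis and the vanishing of the energy form on constants) and the reduction to real $\xi$. Note that the proof uses nothing about the $N$-ary tree beyond the identity $\operatorname*{dist}(x,y)^2=2\,l(x,y)$, so the same computation shows $l$ is conditionally negative definite on any graph for which the resistance metric obeys such a relation. Alternatively, a purely combinatorial proof is available: on the tree $l(x,y)=|x|+|y|-2\,|x\wedge y|$ where $|x|$ is the word length and $x\wedge y$ the longest common prefix; the first two terms again drop out by $\sum_x\xi(x)=0$, and writing $|x\wedge y|=\sum_{k\ge1}\sum_{|u|=k}\chi_u(x)\chi_u(y)$ with $\chi_u$ the indicator of ``has prefix $u$'' exhibits $\sum_{x,y}\cj{\xi(x)}\,|x\wedge y|\,\xi(y)=\sum_{k\ge1}\sum_{|u|=k}\bigl|\sum_x\chi_u(x)\xi(x)\bigr|^2\ge0$, giving \eqref{eqko1} with the explicit constant $-2$.
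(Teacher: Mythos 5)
Your proof is correct and follows essentially the same route as the paper: both identify $l(x,y)=\tfrac12\|v_x-v_y\|_E^2$ via Proposition \ref{prop4.5} and then invoke the von Neumann--Schoenberg principle that squared Hilbert-space distances are conditionally negative definite; the paper simply cites this last step to \cite{Fug05}, whereas you carry out the expansion $\E(v_x-v_y,v_x-v_y)$ and the cancellation forced by $\sum_x\xi(x)=0$ explicitly, which is a welcome addition. Your closing combinatorial argument via the prefix indicators $\chi_u$ is a correct, self-contained alternative that bypasses the energy space entirely.
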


\begin{proof}
Combining Lemma \ref{lemo2} and Proposition \ref{prop4.5}, we see that 
\begin{equation}
	l(x,y)=\frac{1}{2}\operatorname*{dist}(x,y)^2=\frac{1}{2}\|v_x-v_y\|_E^2
	\label{eqko2}
\end{equation}
where $v:G^{(0)}\rightarrow E=$(the energy Hilbert space), $v(x):=v_x$ is the function determined in Lemma \ref{lemre2}. But it is known that the property in \eqref{eqko1}, (negative semi-definite) characterizes those metrics which have isometric embeddings into Hilbert space; see \cite{Fug05}. 
\end{proof}

\begin{remark}\label{remko2}
The function $v:G^{(0)}\rightarrow E=$(the energy Hilbert space) from \eqref{eqko2} also determines a system of covariances with the use of the inner products 
$$\ip{v_x}{v_y}_E=2\ip{\Delta v_x}{v_y}_{l^2}=2\ip{\delta_o-\delta_x}{v_y}_{l^2}=2(v_y(o)-v_y(x)).$$
An application of Kolmogorov's theorem \cite{PaSc72} yields a Gaussian stochastic process 
$w:G^{(0)}\rightarrow L^2(\Omega,P)$ such that
\begin{equation}
	\ip{w_x}{w_y}_E=\int_{\Omega}w_x(\omega)w_y(\omega)\,dP(\omega),\quad(x,y\in G^{(0)});
	\label{eqko3}
\end{equation}
and we may normalize with 
\begin{equation}
	\int_{\Omega}w_x(\omega)\,dP(\omega)=0,\quad(x\in G^{(0)}).
	\label{eqko4}
\end{equation}
\end{remark}

\begin{corollary}\label{corko3}
Let $N\in\bn$ and let $G=(G^{(0)},G^{(1)})$ be the $N$-ary tree of Definition \ref{def3.1}. Then the stochastic process in \eqref{eqko3} has independent increments: specifically, if three points $x,y,$ and $z$ in $G^{(0)}$ satisfy $x\leq y\leq z$ relative to the natural (partial) order of $G^{(0)}$ then
\begin{equation}
	\E(v_x-v_y,v_y-v_z)=0.
	\label{eqko5}
\end{equation}
\end{corollary}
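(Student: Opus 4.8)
The plan is to evaluate $\mathcal E(v_x-v_y,v_y-v_z)$ directly. First I would expand by bilinearity of $\mathcal E=\langle\cdot,\cdot\rangle_E$, getting
$$\mathcal E(v_x-v_y,v_y-v_z)=\mathcal E(v_x,v_y)-\mathcal E(v_x,v_z)-\mathcal E(v_y,v_y)+\mathcal E(v_y,v_z),$$
and then substitute the identity recorded in Remark \ref{remko2}, namely $\mathcal E(v_a,v_b)=\langle v_a,v_b\rangle_E=2\big(v_b(o)-v_b(a)\big)$, read with base point $o=\emptyset$ as used in Proposition \ref{prop4.5}. The four contributions $v_y(\emptyset)$, $v_z(\emptyset)$ cancel in pairs, leaving
$$\tfrac12\,\mathcal E(v_x-v_y,v_y-v_z)=\big(v_y(y)-v_y(x)\big)+\big(v_z(x)-v_z(y)\big).$$

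Next I would insert the explicit potential $v_\eta(\omega)=|\eta|-p(\omega,\eta)$ from \eqref{eqpot2}, where $p(\omega,\eta)$ is the length of the longest common prefix. The hypothesis $x\le y\le z$ for the natural partial order on $G^{(0)}$ means $x$ is a prefix of $y$ and $y$ a prefix of $z$; writing $|x|\le|y|\le|z|$ for the word lengths one reads off $v_y(y)=|y|-|y|=0$, $v_y(x)=|y|-|x|$, $v_z(x)=|z|-|x|$, and $v_z(y)=|z|-|y|$, whence the bracket equals $\big(|x|-|y|\big)+\big(|y|-|x|\big)=0$. Thus $\mathcal E(v_x-v_y,v_y-v_z)=0$, which is \eqref{eqko5}; since the Gaussian process $w$ of \eqref{eqko3} has $\langle w_a,w_b\rangle_{L^2(\Omega,P)}=\mathcal E(v_a,v_b)$, this says the increments $w_x-w_y$ and $w_y-w_z$ are uncorrelated, hence independent by joint Gaussianity. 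The only bookkeeping remark needed is that $v_x$ is determined by \eqref{eqre1} only modulo additive constants, but constants lie in the kernel of $\mathcal E$, so the left-hand side above is well defined and may be computed with the normalized representatives of Proposition \ref{prop4.5}.

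I do not anticipate a real obstacle; the argument is short once these ingredients are lined up, the underlying content being simply that the potentials of Proposition \ref{prop4.5} are affine in the depth along each branch. A coordinate-free variant I would also offer avoids \eqref{eqpot2}: since $\Delta(v_x-v_y)=\delta_y-\delta_x$, Ohm's law \eqref{eqo1} attaches to $v_x-v_y$ a current $I_{xy}(e):=c(e)\big((v_x-v_y)(s(e))-(v_x-v_y)(t(e))\big)$ satisfying Kirchhoff's law \eqref{eqo2} for the source $\delta_y-\delta_x$; on a tree any such finite-energy current is supported on the unique geodesic from $x$ to $y$, since removing any off-geodesic edge leaves $x$ and $y$ in one component and hence carries zero flux, and likewise $I_{yz}$ is supported on the geodesic from $y$ to $z$. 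As $x\le y\le z$, these two geodesics meet only in the vertex $y$ and share no edge, so the polarized form of \eqref{eqo3},
$$\mathcal E(v_x-v_y,v_y-v_z)=\sum_{e\in G^{(1)}}c(e)^{-1}\,I_{xy}(e)\,I_{yz}(e),$$
is a sum all of whose terms vanish (the series being finite, as in the proof of Proposition \ref{prop4.5}, since only finitely many edges carry nonzero current). Either route delivers \eqref{eqko5}.
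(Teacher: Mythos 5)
Your first route is essentially the paper's own proof: the paper likewise uses the reproducing identity of Lemma \ref{lemre2} to reduce \eqref{eqko5} to the single identity $v_y(y)+v_z(x)-v_y(x)-v_z(y)=0$, and then verifies it from the explicit formula \eqref{eqpot2}, exactly as you do; your added remarks (well-definedness of $\E$ modulo additive constants, and the passage from uncorrelated to independent increments via joint Gaussianity) are correct and are left implicit in the paper. Your second, coordinate-free route is genuinely different from anything in the paper: it bypasses \eqref{eqpot2} entirely by noting that the currents attached to $v_x-v_y$ and $v_y-v_z$ are supported on the edge-disjoint geodesics $[x,y]$ and $[y,z]$, so the polarized energy form vanishes term by term. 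What this buys is robustness: the argument shows the orthogonality depends only on the tree topology and the ordering $x\le y\le z$, not on the unit conductance or on the explicit potentials, so it would survive a change of conductance function, whereas the paper's computation would have to be redone. The small price is the cut/flux step (summing Kirchhoff's law over the component cut off by an off-geodesic edge), which requires the currents to be finitely supported to exclude flow to infinity --- true here, as you note, by the proof of Proposition \ref{prop4.5}. One cosmetic point: the polarized form of \eqref{eqo3} you quote differs from the double-sum normalization \eqref{eqre2} by a factor of $2$ (each unoriented edge is counted twice in \eqref{eqre2}), but this is an inconsistency already present in the paper's own statement of \eqref{eqo3} and is irrelevant to the vanishing.
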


\begin{proof}
The order relation for the three points means that there are words $r,s$ such that $y=xr$ and $z=ys$, where we use concatenation of finite words. 

Using Lemma \ref{lemre2}, we now see that \eqref{eqko5} follows from the following identity
$$v_y(y)+v_z(x)-v_y(x)-v_z(y)=0$$
which in turn follows from \eqref{eqpot2} in Proposition \ref{prop4.5}.
\end{proof}

\subsection{A random walk}
In Theorem \ref{thti1} we will compute the moments of the measure $\mu_{c+p}$ from Theorem \ref{th3.25}. For this we will use the following Proposition:

\begin{proposition}\label{propsu1}
Let $G=(G^{(0)},G^{(1)})$ be a graph and $c:G^{(1)}\rightarrow \br_+$ be a conductance function. We assume that $G$ is not oriented, but it may have loops at some vertices, i.e., edges of the form $(xx)$.  Assume in addition that 
\begin{equation}
	\sum_{y\sim x}c(xy)=1,\quad(x\in G^{(0)}).
	\label{eqsu1}
\end{equation}
Define the operator $\M$ on $\D$ by
\begin{equation}
	(\M u)(x)=\sum_{y\sim x}c(xy)u(y),\quad(x\in G^{(0)},u\in\D)
	\label{eqsu2}
\end{equation}
Then $\Delta_{G,c}=I_{l^2}-\M$.

Define the random walk on $G$ by assigning the transition from $x$ to $y$, $y\sim x$ the probability $c(xy)$. For $x,y\in G^{(0)}$ and $n\in\bn_0$, let $p(x,y;n)$ be the probability of transition from $x$ to $y$ in $n$ steps. Then

\begin{equation}
	\M^n\delta_x=\sum_{y\in G^{(0)}}p(x,y;n)\delta_y,\quad(x\in G^{(0)}).
	\label{eqsu3}
\end{equation}
Let $o\in G^{(0)}$ be a fixed vertex. Let $\mu_{\M}$ be the spectral measure associated to the operator $\M$ and the vector $\delta_o$. Then 
\begin{equation}
\int x^n\,d\mu_{\M}=\ip{\delta_o}{\M^n\delta_o}=p(o,o;n)\mbox{ the probability of return to }o\mbox{ after }n\mbox{ steps}. 	
	\label{eqsu4}
\end{equation}

\end{proposition}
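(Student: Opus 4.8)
The plan is to verify the three assertions in order; the only point that needs care is the operator-theoretic setup that gives meaning to the spectral-measure identity \eqref{eqsu4}, after which everything is a routine induction.

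First I would establish $\Delta_{G,c}=I_{l^2}-\M$ by a direct pointwise computation. For $u\in\D$ and $x\in G^{(0)}$,
$$(\Delta_{G,c}u)(x)=\sum_{y\sim x}c(xy)(u(x)-u(y))=u(x)\sum_{y\sim x}c(xy)-\sum_{y\sim x}c(xy)u(y)=u(x)-(\M u)(x),$$
where the middle equality uses hypothesis \eqref{eqsu1}. (If $G$ carries a loop $(xx)$, the term $c(xx)(u(x)-u(x))=0$ drops out of the Laplacian but is still carried along by the normalization \eqref{eqsu1} on the right, so the identity is unaffected.) Since $c\geq 0$ and $\sum_{y\sim x}c(xy)=1$, one application of the Cauchy--Schwarz inequality gives $\|\M u\|_{l^2}^2\leq\sum_x\sum_{y\sim x}c(xy)|u(y)|^2=\|u\|_{l^2}^2$, so $\M$ extends to a bounded contraction on $l^2(G^{(0)})$, and it is selfadjoint because the graph is non-oriented, i.e. $c(xy)=c(yx)$. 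Hence $\Delta_{G,c}=I_{l^2}-\M$ is itself bounded and selfadjoint, and we may legitimately form the powers $\M^n$, the spectral measure $\mu_\M(\cdot)=\|E_\M(\cdot)\delta_o\|^2$, and the functional-calculus identity $\int x^n\,d\mu_\M=\ip{\delta_o}{\M^n\delta_o}$, exactly as in Lemma \ref{lemr2}.

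Next I would prove \eqref{eqsu3} by induction on $n\in\bn_0$. The case $n=0$ reads $\M^0\delta_x=\delta_x=\sum_y p(x,y;0)\delta_y$, valid since $p(x,y;0)=\delta_{x,y}$. For the one-step action, $(\M\delta_y)(z)=\sum_{w\sim z}c(zw)\delta_y(w)$ equals $c(zy)$ when $y\sim z$ and $0$ otherwise, hence $\M\delta_y=\sum_{z\sim y}c(yz)\delta_z=\sum_z p(y,z;1)\delta_z$, which is precisely the transition rule defining the random walk. Assuming $\M^n\delta_x=\sum_y p(x,y;n)\delta_y$, I apply $\M$, use linearity and the one-step formula, and regroup to get
$$\M^{n+1}\delta_x=\sum_z\Big(\sum_y p(x,y;n)\,p(y,z;1)\Big)\delta_z=\sum_z p(x,z;n+1)\delta_z,$$
the last equality being the Chapman--Kolmogorov relation for the walk (obtained by conditioning on the position after $n$ steps).

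Finally, \eqref{eqsu4} is immediate from the two previous steps together with orthonormality of $\{\delta_y\}_{y\in G^{(0)}}$:
$$\int x^n\,d\mu_\M=\ip{\delta_o}{\M^n\delta_o}=\ip{\delta_o}{\sum_{y}p(o,y;n)\delta_y}=p(o,o;n).$$
The step I expect to be the main obstacle is not any single computation but the justification that $\M$ is a bona fide bounded selfadjoint operator on $l^2(G^{(0)})$ under \eqref{eqsu1} (equivalently, that $\Delta_{G,c}$ lands in the spectral-theory framework of Section \ref{defi} even with loops present); once that is secured, the rest is a routine induction and an identification of matrix coefficients.
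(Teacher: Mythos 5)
Your proposal is correct and follows essentially the same route as the paper: the induction on $n$ via the Chapman--Kolmogorov relation and the identification $\ip{\delta_o}{\M^n\delta_o}=p(o,o;n)$ is exactly the paper's argument. You supply some details the paper leaves implicit (the pointwise verification of $\Delta_{G,c}=I_{l^2}-\M$ and the boundedness and selfadjointness of $\M$ needed to invoke the spectral calculus), which only strengthens the write-up.
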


\begin{proof}

From the definition of $\M$ we get that 
\begin{equation}
	\M\delta_x=\sum_{y\sim x}c(xy)\delta_x,\quad(x\in G^{(0)}).
	\label{eqsu5}
\end{equation}
Equation \eqref{eqsu3} follows by induction: for $n=1$, it is just the definition of $\M\delta_x$. Assume the equation is true for $n$, then 
$$\M^{n+1}\delta_x=\sum_yp(x,y;n)\sum_{z\sim y}c(yz)\delta_z=\sum_{z}(\sum_{y\sim z}p(x,y;n)c(yz))\delta_z=\sum_z p(x,y;n+1)\delta_z.$$
Equation \eqref{eqsu4} follows directly from \eqref{eqsu3}.
\end{proof}

\begin{definition}\label{defti1}
Let $T$ be the $N$-ary tree from Definition \ref{def3.1}. We define the graph $\tilde T$ by considering the same vertices and adding an edge from $\ty$ to itself.
\end{definition}

\begin{figure}[h]
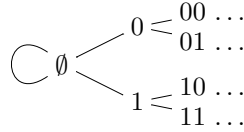

\caption{The graph $\tilde T$ ($N=2$).}\label{fig2}
\[
\xygraph{
!{<0cm,0cm>;<1cm,0cm>:<0cm,1cm>::}
!{(0,0) }*+{\ty}="a"
!{(1,0.5) }*+{0}="b"
!{(1,-0.5) }*+{1}="c"
!{(2,0.7)}*+{00\,\dots}="d"
!{(2,0.3)}*+{01\,\dots}="e"
!{(2,-0.3)}*+{10\,\dots}="f"
!{(2,-0.7)}*+{11\,\dots}="g"
"a"-"b" "a"-"c" "b"-"d" "b"-"e" "c"-"f" "c"-"g"
"a"- @(lu,ld) "a"
}
\]
\end{figure}

\begin{theorem}\label{thti1}

For each $n\in\bn$, let $N_{\tilde T}(n)=$ the number of paths of length $n$ in $\tilde T$ from $\ty$ to itself.
Then
\begin{equation}
	\int x^n\,d\mu_{c+p}=\frac{1}{(2\sqrt{N})^n} N_{\tilde T}(n),\quad(n\in\bn).
	\label{eqti1}
\end{equation}
\end{theorem}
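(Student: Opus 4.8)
The plan is to rewrite $\int x^n\,d\mu_{c+p}$ as a diagonal matrix coefficient $\ip{\delta_\ty}{M^n\delta_\ty}$ of a power of the operator $M:=(N+1)I-\Delta$ on $l^2(V)$, and then to recognize $M$ as the adjacency operator of $\tilde T$, so that this matrix coefficient becomes a count of paths of length $n$ from $\ty$ to $\ty$.

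First I would assemble the unitary equivalences already proved. By Theorem \ref{propcycl}(ii) the unitary $\mathcal W_\Omega:l^2(\bn_0)\to\H_\Omega$ (which sends $\delta_0$ to $\Omega$) intertwines $\mathcal W\Delta\mathcal W^*|_{\H_\Omega}$ with $(N+1)I-2\sqrt N\bigl(\preal S+\tfrac1{2\sqrt N}P_{\delta_0}\bigr)$. By Lemma \ref{lemvoic} together with the computation in the proof of Theorem \ref{thmucp}, $\preal S+\tfrac1{2\sqrt N}P_{\delta_0}$ on $l^2(\bn_0)$ is unitarily equivalent to multiplication by $x$ on $L^2(\mu_{c+p})$ with $\delta_0$ carried to the constant function $1$, and $\mu_{c+p}$ is the spectral measure of $M_x$ on $L^2(\mu_{c+p})$ for the cyclic vector $1$. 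Hence by the functional calculus
\begin{equation*}
\int x^n\,d\mu_{c+p}=\ip{1}{M_x^n\,1}_{L^2(\mu_{c+p})}=\ip{\delta_0}{\bigl(\preal S+\tfrac1{2\sqrt N}P_{\delta_0}\bigr)^n\delta_0}_{l^2(\bn_0)}.
\end{equation*}
Transporting this identity through $\mathcal W_\Omega$ and then through $\mathcal W^*$ — under which $\delta_0\mapsto\Omega\mapsto\delta_\ty$ and the $\Delta$-invariant subspace $\H_\Omega$ goes to the span in $l^2(V)$ of the level-averaging functions — and using that $2\sqrt N\bigl(\preal S+\tfrac1{2\sqrt N}P_{\delta_0}\bigr)\cong(N+1)I-\Delta$ on that subspace, I obtain
\begin{equation*}
\int x^n\,d\mu_{c+p}=\frac1{(2\sqrt N)^n}\,\ip{\delta_\ty}{M^n\delta_\ty}_{l^2(V)},\qquad M:=(N+1)I-\Delta .
\end{equation*}
Since $\Delta$ is bounded (Proposition \ref{propspec}), $M^n\delta_\ty$ has finite support, so this is a finite sum.

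Next I would compute $M$ on the basis $(\delta_\omega)_{\omega\in V}$. By Proposition \ref{propls}(ii), $M=U+U^*-P_\ty$; using $\sigma(\ty)=\ty$ one finds $U\delta_\ty=\delta_\ty+\sum_{i=1}^N\delta_i$, $U\delta_\omega=\sum_{i=1}^N\delta_{\omega i}$ for $\omega\neq\ty$, $U^*\delta_\ty=\delta_\ty$, and $U^*\delta_\omega=\delta_{\sigma(\omega)}$ for $\omega\neq\ty$, so that
\begin{equation*}
M\delta_\ty=\delta_\ty+\sum_{i=1}^N\delta_i,\qquad M\delta_\omega=\delta_{\sigma(\omega)}+\sum_{i=1}^N\delta_{\omega i}\quad(\omega\neq\ty).
\end{equation*}
Thus $\bigl(\ip{\delta_x}{M\delta_y}\bigr)_{x,y\in V}$ is the $0$–$1$ matrix whose off-diagonal $1$'s are exactly the parent–child pairs in $V$ and whose only nonzero diagonal entry is $\ip{\delta_\ty}{M\delta_\ty}=1$, accounting for the added loop at $\ty$; that is, $M$ is precisely the adjacency operator of the graph $\tilde T$ of Definition \ref{defti1}.

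Finally, expanding $\ip{\delta_\ty}{M^n\delta_\ty}$ as a sum over $(\omega^{(1)},\dots,\omega^{(n-1)})\in V^{\,n-1}$ of products $\prod_{k=1}^n\ip{\delta_{\omega^{(k-1)}}}{M\delta_{\omega^{(k)}}}$ (with $\omega^{(0)}=\omega^{(n)}=\ty$), each summand equals $1$ precisely when $\ty=\omega^{(0)},\omega^{(1)},\dots,\omega^{(n)}=\ty$ is a path of length $n$ in $\tilde T$; equivalently this is the combinatorial content of \eqref{eqsu3} in Proposition \ref{propsu1} applied to $\tilde T$ with conductance $c\equiv\tfrac1{2\sqrt N}$ on every edge (only \eqref{eqsu3} is used, and its derivation does not invoke the normalization \eqref{eqsu1}). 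Hence $\ip{\delta_\ty}{M^n\delta_\ty}=N_{\tilde T}(n)$, and \eqref{eqti1} follows. The only point requiring care is the bookkeeping in the first step: tracking the cyclic vector $\delta_0$ through the chain of unitaries to $\delta_\ty$, and checking that the normalization and the term $-P_\ty$ in $M=U+U^*-P_\ty$ together yield the adjacency matrix of $\tilde T$ with loop weight exactly $1$; the remaining computations are routine and finite.
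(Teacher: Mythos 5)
Your proof is correct and follows essentially the same route as the paper's: both reduce $\int x^n\,d\mu_{c+p}$ to the diagonal matrix entry $\ip{\delta_\ty}{((N+1)I-\Delta)^n\delta_\ty}$ via the spectral representation of Theorem \ref{th3.25} on the cyclic subspace of $\delta_\ty$, and both identify $(N+1)I-\Delta$ with the adjacency operator of $\tilde T$ (the paper writes it as $(N+1)\M_{\tilde T}$ and counts return probabilities of the normalized random walk, which differs from your direct path count only by the factor $(N+1)^n$). The one cosmetic slip — finite support of $M^n\delta_\ty$ follows from the local action of $M$ on basis vectors, not from boundedness of $\Delta$ — does not affect the argument.
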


\begin{proof}
Let $\Delta_T$ be the Laplacian associated to the tree $T$ with conductance constant $1$ on every edge. Let $\M_{\tilde T}$ be the operator defined in equation \eqref{eqsu2} for the graph $\tilde T$ with conductance $\frac{1}{N+1}$ on each edge. 
Then a simple computation shows that 
\begin{equation}
	\Delta_{T}=(N+1)I-(N+1)\M_{\tilde T}.
	\label{eqti2}
\end{equation}

From Theorem \ref{th3.25} we know that the $\Delta_{T}$ restricted to the cyclic subspace generated by $\delta_\ty$ is unitarily equivalent to an operator of multiplication by $N+1-2\sqrt{N}x$ on $L^2(\mu_{c+p})$. Therefore the restriction of $\M_{\tilde T}$ to this cyclic subspace is unitarily equivalent to an operator of multiplication by $\frac{2\sqrt{N}x}{N+1}$. Using now Proposition \ref{propsu1}, we obtain that 
$$\int \left(\frac{2\sqrt{N}x}{N+1}\right)^n\,d\mu_{c+p}=\mbox{ probability of return to }\ty\mbox{ after }n\mbox{ steps, in the graph }\tilde T.$$
But this probability is equal to $\frac{N_{\tilde T}(n)}{(N+1)^n}$ and the result follows.
\end{proof}

{\bf Concluding remarks.}
We give the complete spectral picture for graph Laplacians for a number of classes of infinite graphs. Our conclusions include spectral representation, spectral measures, multiplicity tables, occurrence of semicircle laws, and rank-one perturbations. As corollaries of our main results, we note in particular that the spectrum must necessarily be absolutely continuous for the graph Laplacians $\Delta_G$ (with normalized conductance) when $G$ is an infinite tree graph, and when $G$ is a lattice graph. It would be interesting to delimitate those infinite graphs $G$ and associated conductance functions $c$ for which the graph Laplacians $\Delta_{G,c}$  have absolutely continuous spectrum. But it seems unlikely that a complete spectral representation is available for the most general infinite graph.

A class of graphs between two extremes is those generated by tree-state automatons. 
They are studied in \cite{GrZu02}, and some spectral data is known. The class includes the 
free group Cayley graphs as well.

We have explored two such graphs: (a) the graph of free group on two generators, and (b) those with vertices on Bethe lattices \cite{Hug96}. Here the situation is more subtle: Case (a) is more complicated than the tree graphs in section 3 above. In section 3 we could build the spectral picture of $\Delta$ up with the use of a semicircle transform and a rank-one perturbation. But to get the spectral picture for $\Delta$ associated with the free groups, we might have to perturb by a rank-two operator, and the vector used for the perturbation might not be the obvious vacuum vector.

The co-authors thank many colleagues for suggested improvements on an earlier version of this paper. The second named author (PJ) learned of these graph problems from Robert T. Powers in the 1970ties. The applications from that time include electrical networks of resistors, and KMS states in statistical mechanics. Since this, the interest in discrete analysis and graph Laplacians has grown tremendously. As a result, there is now an enormous literature, which in turn branches off in a variety of different directions. We cannot begin to do justice to this vast and diverse literature. It includes discrete Schr\" odinger operators in physics, information theory, potential theory, uses of the graphs in scaling-analysis of fractals (constructed from infinite graphs), probability and heat equations on infinite graphs, graph $C^*$-algebras, groupoids, Perron-Frobenius-transfer operators (used in models for the internet); multiscale theory, renormalization, and operator theory of boundaries of infinite graphs (current joint work between PJ and Erin Pearse.) We learned of some more such directions after the completion of this paper from groups of experts, and we are thankful to many colleagues who took the time to explain them to us.

\begin{acknowledgements}
We thank our colleagues Gabriel Picioroaga, Erin Pearse, Keri Kornelson, Karen
Shuman, Qiyu Sun, and Myung-Sin Song. The second named author thanks
University of Central Florida for hospitality during a research visit where
most of this work was done. Professors Strichartz and Golenia kindly helped us with literature citations after we
completed the first version of our paper.
\end{acknowledgements}

\bibliographystyle{alpha}
\bibliography{laptr}

\end{document}